\newcounter{mynotes}
\renewcommand{\cref}{\Cref}
\declaretheorem[within=section]{theorem}
\declaretheorem[sibling=theorem]{corollary}
\declaretheorem[sibling=theorem]{lemma}
\declaretheorem[sibling=theorem]{definition}
\declaretheorem[sibling=theorem]{proposition}
\newcounter{termcounter}
\renewcommand{\thetermcounter}{\Alph{termcounter}}
\crefname{term}{term}{terms}
\renewcommand{\cref}{\Cref}
\def\term{\@ifnextchar[\term@optarg\term@noarg}
\def\term@optarg[#1]#2{%
  \textup{(#1)}%
  \def\@currentlabel{#1}%
  \def\cref@currentlabel{[][2147483647][]#1}%
  \cref@label[term]{#2}}
\def\term@noarg#1{%
  \refstepcounter{termcounter}%
  \textup{(\thetermcounter)}%
  \cref@label[term]{#1}}
\newcommand{\ignore}[1]{}
\newcommand{\poly}{\mathrm{poly}}
\definecolor{DSred}{rgb}{1,0,0}
\renewcommand{\leq}{\leqslant}
\renewcommand{\geq}{\geqslant}
\renewcommand{\ge}{\geqslant}
\renewcommand{\le}{\leqslant}
\renewcommand{\epsilon}{\varepsilon}
\newcommand{\eps}{\epsilon}
\newcommand{\F}{\mathbb{F}}
\newcommand{\cC}{\mathcal C}
\newcommand{\cD}{\mathcal D}
\newcommand{\cF}{\mathcal F}
\newcommand{\cU}{\mathcal U}
\newcommand{\cW}{\mathcal W}
\newcommand{\cY}{\mathcal Y}
\newcommand{\tn}[1]{\ensuremath{\textnormal{#1}}\xspace}
\newcommand{\Esymb}{{\bf E}}
\newcommand{\Psymb}{{\bf Pr}}
\DeclareMathOperator*{\E}{\Esymb}
\DeclareMathOperator*{\ProbOp}{\Psymb}
\renewcommand{\Pr}{\ProbOp}
\newcommand{\ComplexityFont}[1]{\ensuremath{\mathsf{#1}}}
\newcommand{\NP}{\ComplexityFont{NP}}
\newcommand{\RP}{\ComplexityFont{RP}}
\title{On the Hardness of Learning Sparse Parities}
\author{Arnab Bhattacharyya\thanks{Department of Computer Science and Automation, Indian Institute of Science, Bangalore, India. Emails: {\tt \{arnabb, ameet.gadekar, suprovat.ghoshal\}@csa.iisc.ernet.in}. AB supported in part by DST Ramanujan Fellowship.}
 \and Ameet Gadekar$^*$
 \and Suprovat Ghoshal$^*$ 
 \and Rishi Saket\thanks{IBM Research, Bangalore, India. Email: {\tt rissaket@in.ibm.com}.}}
\newcommand{\keven}{$k$-\textsc{EvenSet}\xspace}
\newcommand{\kvec}{$k$-\textsc{VectorSum}\xspace}
\newcommand{\kclq}{$k$-\textsc{Clique}\xspace}
\newcommand{\tsat}{$3$-\textsc{SAT}\xspace}
\newcommand{\Ft}{\F_2}
\newcommand{\kVS}{{\sc $k$-VectorSum}\xspace}
\newcommand{\GkVS}{{\sc Gap-$k$-VectorSum}\xspace}
\newcommand{\mc}[1]{\ensuremath{\mathcal{#1}}\xspace}
\date{\today}
\begin{document}
\maketitle
\thispagestyle{empty}

\begin{abstract}
This work investigates the hardness of computing sparse solutions to systems of linear equations over $\F_2$. Consider the \keven problem: given a homogeneous system of linear equations over $\F_2$ on $n$ variables, decide if there exists a nonzero solution of Hamming weight at most $k$ (i.e. a $k$-sparse solution). While there is a simple $O(n^{k/2})$-time algorithm for it, establishing fixed parameter intractability for \keven has been a notorious open problem. Towards this goal, we show that unless \kclq can be solved in $n^{o(k)}$ time, \keven has no $\poly(n)\cdot 2^{o(\sqrt{k})}$ time algorithm for all $k$ and no polynomial time algorithm when $k = \omega(\log^2 n)$.

Our work also shows that the non-homogeneous generalization of the problem -- which we call \kvec\ -- is W[1]-hard on instances where the number of equations is $O(k\log n)$, improving on previous reductions which produced $\Omega(n)$ equations. We use the hardness of \kvec as a starting point to prove the result for \keven, and additionally strengthen the former to show the hardness of \emph{approximately learning} \emph{$k$-juntas}. In particular, we prove that given a system of $O(\tn{exp}(O(k))\cdot \log n)$ linear equations, it is W[1]-hard to decide if there is a $k$-sparse linear form satisfying all the equations or any function on at most $k$-variables (a $k$-junta) satisfies at most $(1/2 + \eps)$-fraction of the equations, for any constant $\eps > 0$. In the setting of computational learning, this shows hardness of approximate \emph{non-proper} learning of $k$-parities.
	In a similar vein, we use the hardness of \keven to show that that for any constant $d$, unless \kclq can be solved in $n^{o(k)}$ time, there is no $\poly(m, n)\cdot 2^{o(\sqrt{k})}$ time algorithm to decide whether a given set of $m$ points in $\F_2^n$ satisfies: (i) there exists a non-trivial $k$-sparse homogeneous linear form evaluating to $0$ on all the points, or (ii) any non-trivial degree $d$ polynomial $P$ supported on at most $k$ variables evaluates to zero on $\approx \Pr_{\F_2^n}[P({\bf z}) = 0]$ fraction of the points i.e., $P$ is \emph{fooled} by the set of points.

Lastly, we study the approximation in the sparsity of the solution. Let the \GkVS problem be: given an instance of \kvec of size $n$, decide if there exist a $k$-sparse solution, or every solution is of sparsity at least $k' = (1+\delta_0)k$. Assuming ETH, we show that for some constants $c_0$, $\delta_0 > 0$ there is no $\tn{poly}(n)$ time algorithm for \GkVS when $k = \omega((\log \log n)^{c_0})$.


\end{abstract}

\newpage
\setcounter{page}{1}

\section{Introduction}

Given a system of linear equations over $\F_2$, does there exist a sparse non-trivial solution? This question is studied in different guises in several areas of mathematics and computer science. For instance, in coding theory, if the system of linear equations is ${\bf Mx} = {\bf 0}$ where ${\bf M}$ is the parity check matrix of a binary code, then the minimum (Hamming) weight of a nonzero solution is the distance of the code. This also captures the problem of determining whether a binary matroid has a short cycle, as the latter reduces to deciding whether there is a sparse nonzero ${\bf x}$ such that ${\bf Mx} = {\bf 0}$.
In learning theory, the well known sparse parity problem is: given a binary matrix ${\bf M}$ and a vector ${\bf b}$ decide whether there is a small weight nonzero vector ${\bf x}$ satisfying ${\bf Mx} = {\bf b}$. The version where ${\bf Mx}$ is required to equal ${\bf b}$ in most coordinates, but not necessarily all, is also well studied as the problem of learning noisy parities. 

Let a vector ${\bf x} \in \F_2^n$ be called {\em $k$-sparse} if it is nonzero in at most $k$ positions, i.e. it has Hamming weight at most $k$. 
In this work, we show that learning a $k$-sparse solution to a system of linear equations is fixed parameter intractable, even when (i) the number of equations is only \emph{logarithmic} in the number of variables, (ii) the learning is allowed to be \emph{approximate}, i.e. satisfy  only $51\%$ of the equations and, (iii) is allowed to output as hypothesis any function (junta) supported on at most $k$ variables. 
We also prove variants of these results for the case when the system of equations is homogeneous, which correspond to hardness of the well known \keven problem.
Note that it is always possible to recover a $k$-sparse solution in $O(n^k)$ time simply by enumerating over all $k$-sparse vectors. Our results show that for many settings of $k$, no substantially faster algorithm is possible for \keven unless widely believed conjectures are false. Assuming similar conjectures, we also rule out fast algorithms for
learning $\gamma k$-sparse solutions to a linear system promising the existence of a $k$ sparse solutions, for some $\gamma > 1$.

In the next few paragraphs we recall previous related work and place our results in their context. Let us first formally define the basic objects of our study:

\begin{definition} \kvec: Given a matrix ${\bf M} \in \F_2^{m \times n}$ and a vector ${\bf b} \in \F_2^m$, and a positive integer $k$ as parameter, decide if there exists a $k$-sparse vector ${\bf x}$ such that ${\bf Mx} = {\bf b}$.
\end{definition}

\begin{definition} \keven:
Given a matrix ${\bf M} \in \F_2^{m \times n}$, and a positive integer $k$ as parameter, decide if there exists a $k$-sparse vector ${\bf x}$ such that ${\bf Mx} = {\bf 0}$.
\end{definition}

\noindent {\em Remark}:
In the language of coding theory, \kvec is also known as the \textsc{MaximumLikelihoodDecoding} problem and \keven as the \textsc{MinimumDistance} problem. 

\medskip
Clearly, \kvec is as hard as \keven\footnote{The name \keven is from the following interpretation of the problem: given a set system $\cF$ over a universe $U$ and a parameter $k$, find a nonempty subset $S \subseteq U$ of size at most $k$ such that the intersection of $S$ with every set in $\cF$ has even size.}. The \kvec problem was shown to be W[1]-hard\footnote{Standard definitions in parameterized complexity appear in \cref{sec:prelim}.} by Downey, Fellows, Vardy and Whittle \cite{DFVW99}, even in the special case of the vector ${\bf b}$ consisting of all $1$'s.  More recently, Bhattacharyya, Indyk, Woodruff and Xie \cite{BIWX11} showed that the time complexity of \kvec is $\min(2^{\Theta(m)}, n^{\Theta(k)})$, assuming \tsat has no $2^{o(n)}$ time algorithm. 

 In contrast, the complexity of \keven remains unresolved, other than its containment in W[2] shown in \cite{DFVW99}. Proving W[1]-hardness for \keven was listed as an open problem in Downey and Fellows' 1999 monograph \cite{DF99} and has been reiterated more recently in lists of open problems \cite{FM12, FGMS12}. Note that if we ask for a vector ${\bf x}$ whose weight is \emph{exactly} $k$ instead of at most $k$, the problem is known to be W[1]-hard \cite{DFVW99}. Our work gives evidence ruling out efficient algorithms for \keven for a wide range of settings of $k$.
 
 In the non-parameterized setting, where $k$ is  part of the input, these problems are very well-studied. Vardy showed that \textsc{EvenSet} (or \textsc{MinimumDistance}) is \NP-hard \cite{Var97}.  The question of approximating $k$, the minimum distance of the associated code, has also received attention. Dumer, Micciancio and Sudan \cite{DMS03} showed that if $\RP\neq\NP$, then $k$ is hard to approximate within some constant factor $\gamma > 1$. Their reduction was derandomized by Cheng and Wan \cite{CW08, CW09}, and subsequently Austrin and Khot \cite{AK14} gave a simpler deterministic reduction for this problem. The results of \cite{CW08, CW09} and \cite{AK14} were further strengthened by Micciancio \cite{Mic14}.
 
From a computational learning perspective, the \kvec problem can be restated as: given an $m$-sized set of $n$-dimensional point and value pairs over $\F_2$, decide if there exists a parity supported on at most $k$ variables (i.e. a $k$-parity) that is consistent with all the pairs. This has been extensively studied as a promise problem when the points are uniformly generated. Note that in this case, if $m = \Omega(n)$, there is a unique solution w.h.p and can be found efficiently by Gaussian elimination. On the other hand, for $m = O(k \log n)$, the best known running time  of $O(n^{k/2})$ is given in \cite{KS06} (credited to Dan Spielman). Obtaining a polynomial time algorithm for $m = \poly(k \log n)$ would imply {\em attribute-efficient learning} of $k$-parities and is a long-standing open problem in the area \cite{Blum96}. The best known dependence between $m$ and the running time for this problem is described in \cite{BGM10, BGR15}.  Our work proves the hardness of \kvec when $m = O(k\log n)$, showing evidence which rules out polynomial time algorithms for learning $k$-parity when the input is generated adversarially.

A natural question studied in this work is  whether one can do better if the learning algorithm is allowed to be \emph{non-proper} (i.e., output a hypothesis that is not a $k$-parity) and is allowed to not satisfy all the point-value pairs. To further motivate this problem, let us look at the case when $k$ is not fixed. In the absence of noise, Gaussian elimination can efficiently recover a consistent parity. The harder case is the agnostic (noisy) setting which promises that there is a parity consistent with at least $1 - \eps$ fraction of the point-value pairs. When the points are generated uniformly at random,  one can learn the parity in time  $2^{O(n/\log n)}$ \cite{FGKP09, BKW03}. 

On the other hand, when the points are adversarially drawn, there is a non-proper algorithm due to Kalai, Mansour and Verbin \cite{KMV08} that runs in time $2^{O(n/\log n)}$ and outputs a circuit $C$ which is consistent with at least $\left(1 - \eps - 2^{-n^{0.99}}\right)$ of the point value pairs. H\aa stad's inapproximability for {\sc Max-$3$LIN}~\cite{Has01} implies that learning a noisy parity in the adversarial setting is \NP-hard, even for $1/2 + \eps$ accuracy, for any constant $\eps > 0$. Gopalan, Khot and Saket~\cite{GKS10} showed that achieving an accuracy of $1 - 1/2^d + \eps$ using degree-$d$ polynomials as hypotheses is \NP-hard. Subsequently, Khot~\cite{Khot-personal} proved an optimal bound of $1/2 + \eps$ for learning by constant degree polynomials\footnote{As far as we know, this result is unpublished although it was communicated to the fourth author of this paper. We include with his permission a proof of Khot's result to illustrate some of the techniques which inspire part of this work.}. Our work studies the intractability of approximate non-proper learning of $k$-parity, and extends the hardness result for \kvec to learning by juntas of $k$ variables, and for \keven to learning using constant degree polynomials on $k$ variables.

Another interesting question in the parameterized setting is related to a gap in the sparsity parameter $k$, i.e. how tractable it is to learn a $\gamma k$-sparse solution when the existence of a $k$-sparse solution is guaranteed, for some constant $\gamma > 1$ (or $\gamma < 1$ in case of a maximization problem). Previously, Bonnet et al.~\cite{BEKP15} and Khot and Shinkar~\cite{KS15} studied this problem for \kclq, and both these works show conditional hardness results. In our work we prove a ``gap in $k$'' hardness result for \kvec similar to that obtained in \cite{BEKP15} for \kclq. 

In the rest of this section we formally describe our results for \kvec and \keven, and give a brief description of the techniques used to obtain them.

\subsection{Our Results}

\subsubsection*{Hardness of exact problems}
We begin by giving a reduction from \kclq showing the W$[1]$-hardness of \kvec on instances which have a small number of rows.

\begin{theorem}[{W[1]-hardness of \kvec}]
\label{thm:vechard}
The \kvec problem is \tn{W$[1]$}-hard on instances $({\bf M}, {\bf b})$ where ${\bf M} \in \F_2^{m \times n}$ and ${\bf b} \in \F_2^m$ such that $m = O(k \log n)$. Our reduction implies, in particular, that \kvec does not admit an $n^{o(\sqrt{k})}$ time algorithm on such instances, unless \kclq on $r$-vertex graphs has an $r^{o(k)}$ time algorithm.
\end{theorem}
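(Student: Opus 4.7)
The plan is to give a parameterized reduction from the \tn{W$[1]$}-hard Multicolored $\ell$-Clique problem --- which, under the stated assumption, requires $r^{\Omega(\ell)}$ time on $r$-vertex graphs --- to a \kvec instance with parameter $k = \binom{\ell}{2}$, on $n = O(r^2)$ variables over $\F_2$, and $m = O(\ell^2\log r) = O(k\log n)$ equations.

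Given an $\ell$-partite graph $G$ with parts $V_1,\ldots,V_\ell$, each of size at most $r/\ell$, I introduce for every unordered pair of colors $\{i,j\}$ and every edge $e \in E_{ij}$ between $V_i$ and $V_j$ a variable $y_{ij,e}$; the intended solution turns on exactly one variable per pair, namely the edge $(v_i,v_j)$ of the putative clique. Two families of $\F_2$-linear equations encode this. The \emph{pair equations} $\sum_{e \in E_{ij}} y_{ij,e} = 1$, one per pair, force an odd --- hence at least one --- chosen edge per pair; combined with the sparsity bound $k=\binom{\ell}{2}$, each pair must then contribute exactly one. The \emph{consistency equations} ensure that all chosen edges touching color $i$ share a common $V_i$-endpoint. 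Fixing injections $\phi_i:V_i \to \F_2^{\lceil \log(r/\ell)\rceil}$ (binary encodings, padding $V_i$ with dummy vertices if needed) and, for each $i$, a reference color $j^*_i \ne i$, I add for every $j \in [\ell] \setminus \{i, j^*_i\}$ and every coordinate $b$ the equation
\[
\sum_{e \in E_{\{i,j\}}} \phi_i(u^{(i)}(e))_b\, y_{ij,e} \;+\; \sum_{e \in E_{\{i,j^*_i\}}} \phi_i(u^{(i)}(e))_b\, y_{i j^*_i,\,e} \;=\; 0,
\]
where $u^{(i)}(e)$ denotes the $V_i$-endpoint of $e$. This yields $\ell(\ell-2)\lceil\log(r/\ell)\rceil = O(\ell^2 \log r)$ consistency equations.

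Correctness is then routine. Any multicolored $\ell$-clique $(v_1,\ldots,v_\ell)$ yields a $k$-sparse solution by turning on $y_{ij,(v_i,v_j)}$ for $i<j$. Conversely, the pair equations plus sparsity force exactly one chosen edge per pair, after which each consistency equation collapses to $\phi_i(u^{(i)}(e_{ij})) = \phi_i(u^{(i)}(e_{ij^*_i}))$; injectivity of $\phi_i$ and the star structure through the references $j^*_i$ force all chosen edges incident to $V_i$ to share a single endpoint $v_i$, yielding an $\ell$-clique. Since the construction is polynomial in $r$ and $k = \binom{\ell}{2}$ is a computable function of $\ell$, this is a valid FPT reduction and establishes \tn{W$[1]$}-hardness. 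Quantitatively, with $n = O(r^2)$ and $\sqrt{k} = \Theta(\ell)$, an $n^{o(\sqrt{k})}$-time algorithm for \kvec would produce an $r^{o(\ell)}$-time algorithm for $\ell$-Clique, contradicting the hypothesis.

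The main obstacle, and the novelty over a naive encoding, lies in compressing the consistency constraints down to $O(k \log n)$ equations. Enforcing pairwise agreement of all chosen $V_i$-endpoints across the $\ell - 1$ pairs containing each color $i$ would cost $\Theta(\ell^3 \log r)$ equations; the key observation is that a star of equalities to a single reference pair per color already pins down the shared endpoint, and once the pair constraints together with sparsity guarantee exactly one chosen edge per pair, each bitwise comparison of the chosen endpoints in two pairs becomes a single $\F_2$-linear XOR equation on the edge indicator variables.
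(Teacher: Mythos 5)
Your construction is correct, and it takes a genuinely different route from the paper's. The paper reduces directly from \kclq and produces a \kvec instance whose \emph{columns} encode both vertices and edges: each vertex $v$ contributes $k$ columns (one per ``slot'') carrying a repeated $\log n$-bit vertex pattern, each edge contributes $\binom{k}{2}$ columns cancelling exactly one subslot of each of its two endpoints' slots, and the target $\mathbf{b}$ lights up $k + \binom{k}{2}$ indicator bits; the parameter becomes $k' = k + \binom{k}{2}$ and soundness is argued by a counting argument showing the chosen columns must realize the pair of maps $f:[k]\to V$, $g:\binom{[k]}{2}\to E$ that certify a clique. You instead start from Multicolored $\ell$-Clique, keep only edge variables (one per edge per color pair), and replace the column-cancellation gadget by two explicit families of equations: per-pair parity equations forcing exactly one chosen edge per color pair once combined with $k = \binom{\ell}{2}$, and star-structured bitwise consistency equations equating the $\phi_i$-encoding of the $V_i$-endpoint of the chosen edge in $E_{ij}$ with that in the reference pair $E_{ij^*_i}$. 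Both constructions share the one crucial insight the paper credits to the $k$-\textsc{Sum} reduction of \cite{ALW14} --- replacing the usual $\Omega(n)$-dimensional vertex-incidence coordinates with $O(\log n)$-bit vertex labels so that the number of rows drops to $O(k\log n)$ --- but they wrap it in quite different machinery. Your version is leaner: it avoids the slot/subslot bookkeeping and the dual vertex-plus-edge column structure, the soundness argument is a one-liner once ``exactly one edge per pair'' is established, and the number of columns is $O(r^2)$ rather than $O(r^2 k^2)$. The asymptotics ($m = O(k\log n)$, $\sqrt{k}=\Theta(\ell)$, hence the $n^{o(\sqrt{k})}\Rightarrow r^{o(\ell)}$ conclusion) match the paper's. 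One small presentational point worth adding: the pair equations already force the solution to be nonzero whenever one wants a $k$-sparse solution, and they force the target $\mathbf{b}$ to be nonzero, so the instance is a genuine \kvec instance (not a disguised \keven instance), which is what the theorem requires.
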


As far as we know, in previous proofs of the W[1]-hardness of \kvec \cite{DFVW99, CFKLMPPS15}, the number of rows in the matrix output by the reduction was linear in $n$. Our proof is inspired by a recent proof of the W[1]-hardness of $k$-\textsc{Sum} \cite{ALW14}. 
Also, in Appendix \ref{sec-kparityalgo}, we give a simple $O(n\cdot 2^m)$ time algorithm for \kvec, which suggests that  $m$ cannot be made sublogarithmic in $n$ for hard instances.



Next, we give a hardness reduction from \kvec  to the \keven problem.

\begin{theorem}[Hardness of \keven]\label{thm:even}
There is an \tn{FPT} reduction from an instance $({\bf M}, {\bf b})$ of \kvec, where ${\bf M} \in \F_2^{m \times n}$ and ${\bf b} \in \F_2^m$,  to an instance ${\bf M}'$ of $O((k \log n)^2)$-\textsc{EvenSet}, where ${\bf M}' \in \F_2^{m' \times n'}$ such that both $m'$ and $n'$ are bounded by fixed polynomials in $m$ and $n$.
\end{theorem}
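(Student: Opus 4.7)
The plan is to homogenize the system $\mathbf{M}\mathbf{x} = \mathbf{b}$ via the textbook trick of adjoining $\mathbf{b}$ as an extra column with a new ``indicator'' variable $z$, producing $\mathbf{N} = [\mathbf{M} \mid \mathbf{b}]$. Any $k$-sparse \kvec witness then lifts to a $(k+1)$-sparse null vector $(\mathbf{x}, 1)$ of $\mathbf{N}$, giving completeness at once. The immediate obstruction is that any nonzero $\mathbf{v} \in \ker \mathbf{M}$ also gives a null vector $(\mathbf{v}, 0)$ of weight $|\mathbf{v}|_0$, producing a spurious \keven witness whenever the original \kvec instance is NO but $\ker \mathbf{M}$ harbors low-weight vectors---and the reduction of \cref{thm:vechard} provides no such guarantee on $\ker \mathbf{M}$.

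I would therefore kill these phantoms by blowing up the variable space with a linear encoder $G : \F_2^n \to \F_2^{n'}$ and appending the parity-check equations of its image as extra homogeneous rows. Let $H$ be a parity-check matrix for $\mathrm{range}(G)$ and let $\tilde{\mathbf{M}}$ be any pullback with $\tilde{\mathbf{M}}\, G = \mathbf{M}$; then take
\[
\mathbf{M}' \;=\; \begin{pmatrix} \tilde{\mathbf{M}} & \mathbf{b} \\ H & \mathbf{0} \end{pmatrix},
\]
with variables $(\mathbf{u}, z) \in \F_2^{n'+1}$. Any nonzero null vector of $\mathbf{M}'$ has $\mathbf{u} = G\mathbf{x}$ for some $\mathbf{x}$ with $\mathbf{M}\mathbf{x} = z\mathbf{b}$, so in the YES case $(G\mathbf{x}_0, 1)$ has weight at most $|G\mathbf{x}_0|_0 + 1$, while in the NO case $\mathbf{x} \ne 0$ (either $z=0$ and we exclude the zero vector, or $z=1$ and $\mathbf{b}\ne 0$), forcing $|\mathbf{u}|_0 \geq d$, the minimum distance of the code generated by $G$. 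Choosing $d$ strictly above the YES bound will yield the target $k' = O((k \log n)^2)$.

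The main obstacle is producing an encoder $G$ with the right trade-off between column sparsity---which controls the YES-side quantity $|G\mathbf{x}_0|_0$ on $k$-sparse inputs---and minimum distance. These two requirements are in direct tension, because $d$ is at most the minimum column weight of $G$, so a one-shot LDGM construction is ruled out. I would instead use a concatenation-style construction, with a sparse expander-based inner code supplying column weight $O(\mathrm{polylog}\, n)$ and an outer algebraic code boosting the distance to $\Omega(n/\mathrm{polylog}\, n)$, which for the relevant regime $k \leq n/\mathrm{polylog}\, n$ comfortably gives $d > k' = O((k\log n)^2)$. Carrying out this construction explicitly, verifying that both $m'$ and $n'$ stay polynomial in $m$ and $n$, and checking the sparsity bookkeeping so that the quadratic blow-up $(k\log n)^2$ really does absorb the column-weight factor $w$ times $k$, is the quantitative heart of the proof.
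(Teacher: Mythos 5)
Your homogenization step (adjoining $\mathbf{b}$ as a column with an indicator variable) matches the paper's first move, and you correctly identify the central difficulty: spurious low-weight vectors in $\ker\mathbf{M}$. However, your proposed fix---replacing $\mathbf{x}$ by a codeword $\mathbf{u}=G\mathbf{x}$ and measuring the weight of $\mathbf{u}$---cannot work, and the obstruction is not just quantitative but logically fatal. Since each column of $G$ is itself a nonzero codeword, the minimum distance $d$ of $\mathrm{range}(G)$ satisfies $d \leq (\text{min column weight of } G) \leq w$, where $w$ is your column-weight bound. But your YES-case upper bound on $|\mathbf{u}|_0$ is $kw$ (the union of $k$ columns), so you would need $d > kw + 1 \geq w + 1 > d$, a contradiction for every $k \geq 1$. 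This is not a matter of picking a cleverer code: any linear encoder faces the same impossibility, because the weight of a codeword $G\mathbf{x}$ provides essentially no information separating $k$-sparse from $(k+1)$-sparse messages $\mathbf{x}$---both yield codewords of weight at least $d$ and at most $kw$ or $(k+1)w$ respectively, ranges that always overlap. The concatenated construction you sketch (inner expander + outer algebraic code with $d = \Omega(n/\mathrm{polylog}\,n)$ and $w = \mathrm{polylog}\,n$) is itself contradictory, since it asks for $d > w$.

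The paper sidesteps this entirely by \emph{not} replacing $\mathbf{x}$ with a codeword. Instead, $\mathbf{x}$ (together with $r-1$ explicit copies of it) remains a block of explicit variables whose Hamming weight directly enters the solution weight, so that the gap between sparsity $k$ and $k+1$ is preserved multiplicatively (contributing $rk$ vs.\ $r(k+1)$). The coding-theoretic machinery is used for a different purpose: a short BCH sketch $\boldsymbol{\eta}=\mathbf{R}\mathbf{x}$ of length $O(k\log n)$, followed by an $\eps$-balanced mixing code and the Austrin--Khot tensor/product-code gadget with $Z_{ij}$ indicator variables, whose sole job is to make the $Z_{ij}$-weight jump from exactly $K^2$ when $a_0=1$ to roughly $\frac{3}{2}K^2$ when $a_0=0$ and $\mathbf{x}\neq\mathbf{0}$. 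That jump, with $r$ calibrated so that $rk\approx K^2/2$, is what forces every sparse solution to set $a_0=1$ and thereby rules out the $\ker\mathbf{M}$ phantoms. You will need a mechanism of this flavor---one that penalizes $a_0=0$ by inflating some auxiliary count, rather than one that tries to read off the sparsity of $\mathbf{x}$ from the weight of a codeword.
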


Using \cref{thm:vechard}, the above yields the following corollary.
\begin{corollary}\label{cor:evenlb}
	There does not exist a $\poly(n)$ time algorithm for \keven when $k = \omega(\log^2 n)$, assuming that \kclq does not have a polynomial time algorithm for any $k = \omega(1)$. More generally, under the same assumption,  \keven does not admit a $\poly(n) \cdot 2^{o(\sqrt{k})}$ time algorithm for unrestricted $k$.
\end{corollary}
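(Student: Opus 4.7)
The plan is to compose the reductions of \cref{thm:vechard} and \cref{thm:even} and derive both lower bounds from the assumed hardness of \kclq. Starting with a \kvec instance on $n$ variables with $m = O(k \log n)$ equations (the regime where \cref{thm:vechard} gives hardness), \cref{thm:even} produces an \keven instance of size $n' = \poly(n)$, so $\log n' = \Theta(\log n)$, and with sparsity parameter $k' = O((k\log n)^2)$, whence $\sqrt{k'} = O(k \log n)$.

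For the first statement, I would argue by contradiction: suppose \keven admits a $\poly(n')$-time algorithm whenever $k' = \omega(\log^2 n')$. Because $k' = \Theta((k\log n)^2)$ and $\log n' = \Theta(\log n)$, the precondition $k' = \omega(\log^2 n')$ is equivalent to the mild requirement $k = \omega(1)$. Thus applying the assumed algorithm to the reduced instance solves \kvec in $\poly(n)$ time whenever $k = \omega(1)$, and chaining with the \kclq-to-\kvec reduction in \cref{thm:vechard} produces a polynomial-time algorithm for \kclq at $k = \omega(1)$, contradicting the hypothesis.

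For the second statement, suppose instead that \keven has a $\poly(n') \cdot 2^{o(\sqrt{k'})}$-time algorithm. Applied to the reduced instance, the running time becomes
\[
\poly(n) \cdot 2^{o(\sqrt{(k\log n)^2})} = \poly(n) \cdot 2^{o(k\log n)} = \poly(n) \cdot n^{o(k)},
\]
i.e., an $n^{o(k)}$-time algorithm for \kvec on the hard $m = O(k\log n)$ instances. Composing once more with the \kclq-to-\kvec reduction of \cref{thm:vechard}, whose parameter blow-up is polynomial and whose size blow-up is a fixed polynomial in $r$, turns this into an algorithm for \kclq fast enough to contradict the assumed lower bound.

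I expect the main technical obstacle, and essentially the only work beyond invoking the two theorems, to be careful bookkeeping of how $k'$ depends on $k$ and $\log n$ through the composition: one must verify that in the first part the condition $k' = \omega(\log^2 n')$ really does translate to $k = \omega(1)$ and not something stronger, and that in the second part the $n^{o(k)}$ bound for \kvec survives composition with the reduction of \cref{thm:vechard} so as to yield a genuine contradiction with the assumed hardness of \kclq.
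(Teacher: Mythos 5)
Your proposal is correct and takes essentially the same approach as the paper: chain the reductions of \cref{thm:vechard} and \cref{thm:even}, track that the \keven parameter becomes $\poly(k)\cdot\log^2 n$ while the instance size stays polynomial, and derive the two lower bounds by contradiction against the assumed hardness of \kclq. The paper phrases the chain in one step (a $T(n,k)$ \keven algorithm yields a $T(\poly(n), k^4\log^2 n)$ \kclq algorithm), whereas you make the \kvec stop explicit, but the bookkeeping and the concluding little-$o$ argument (choosing $k = \omega(1)$ growing slowly enough to absorb the exponent into a polynomial) are the same.
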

\begin{proof}
	Suppose there is a $T(n,k)$ algorithm for \keven. Chaining together the reductions in \cref{thm:vechard} and \cref{thm:even}, we get a $T(\poly(n), k^4 \log^2 n)$ algorithm for \kclq. Choosing $k = \omega(1)$ implies the first part of the corollary. For the second part, observe that if $f(x) = 2^{o(\sqrt{x})}$, then $f(k^4 \log^2 n) = n^{o(1)}$ for some $k = \omega_n(1)$. 
\end{proof}

To the best of our knowledge, \cref{cor:evenlb} gives the first nontrivial hardness results for parameterized \keven. \cref{thm:even} is obtained by adapting the hardness reduction 
for the inapproximability of \textsc{MinimumDistance} by Austrin and Khot \cite{AK14} to the parameterized setting.

\subsubsection*{Hardness of non-proper and approximately learning sparse parities}
The hardness for \kvec proved in \cref{thm:vechard} can be restated in terms of W[1]-hardness of learning $k$-parity, i.e., linear forms depending on at most $k$ variables\footnote{Note that Theorem \ref{thm:vechard} as stated shows hardness of learning homogeneous $k$-parity i.e., homogeneous $k$-sparse linear forms (without the constant term). The result can easily be made to hold for any general $k$-parities by adding a point-value pair which is $({\bf 0}, 0)$.}.

\begin{theorem}[\cref{thm:vechard} restated]\label{thm:linhard}
The following is \tn{W[1]}-hard: given $m = O(k \log n)$ point-value pairs $\{({\bf y}_i, a_i)\}_{i=1}^m \subseteq \F_2^n \times \F_2$, decide whether there exists a $k$-parity $L$ which satisfies all the point-value pairs, i.e., $L({\bf y}_i) = a_i$ for all $i = 1, \dots, m$. 
\end{theorem}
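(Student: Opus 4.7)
The plan is to observe that the restatement is essentially a syntactic reformulation of \cref{thm:vechard}, via the standard duality between the rows of a constraint matrix and evaluation points of a linear form.

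First, I would set up the correspondence. Given an instance $(\mathbf{M}, \mathbf{b})$ of \kvec with $\mathbf{M} \in \F_2^{m \times n}$ and $\mathbf{b} \in \F_2^m$, let $\mathbf{y}_1, \dots, \mathbf{y}_m \in \F_2^n$ denote the rows of $\mathbf{M}$ and $a_1, \dots, a_m \in \F_2$ the entries of $\mathbf{b}$. Conversely, any collection of point-value pairs $\{(\mathbf{y}_i, a_i)\}_{i=1}^m$ assembles into a matrix-vector pair by stacking the $\mathbf{y}_i$ as rows. This is manifestly a polynomial-time (in fact, trivial) transformation that preserves $m$, $n$, and the parameter $k$.

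Next, I would verify the correspondence between solutions. A homogeneous $k$-parity on $\F_2^n$ is exactly a linear form $L(\mathbf{z}) = \iprod{\mathbf{x}, \mathbf{z}}$ for some $\mathbf{x} \in \F_2^n$, and $L$ depends on at most $k$ variables if and only if $\mathbf{x}$ is $k$-sparse. Under this identification, $L(\mathbf{y}_i) = \iprod{\mathbf{x}, \mathbf{y}_i}$ equals the $i$-th coordinate of $\mathbf{Mx}$. Hence $L$ satisfies every point-value pair ($L(\mathbf{y}_i) = a_i$ for all $i$) if and only if $\mathbf{Mx} = \mathbf{b}$. Thus the existence of a $k$-parity consistent with all pairs is literally equivalent to the existence of a $k$-sparse solution to the linear system.

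Finally, since \cref{thm:vechard} already produces W[1]-hard instances with $m = O(k \log n)$, this equivalence transfers the hardness verbatim to the learning formulation, yielding the theorem. The only minor point to address is the footnote remark: \cref{thm:vechard} gives hardness for homogeneous $k$-parities, but to handle general $k$-parities with a constant term one simply augments the list with the extra pair $(\mathbf{0}, 0)$, which forces any consistent affine linear form to have zero constant term and thus to be homogeneous, so no hardness is lost. There is no substantive obstacle here — the content of the statement already lies in \cref{thm:vechard}, and this theorem is essentially a change of vocabulary from the coding-theoretic to the learning-theoretic viewpoint.
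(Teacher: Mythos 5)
Your argument is exactly the paper's: the theorem is a direct rephrasing of \cref{thm:vechard} under the row-of-$\mathbf{M}$ $\leftrightarrow$ evaluation-point correspondence, and the paper (in a footnote) handles general $k$-parities by the same trick of appending the pair $(\mathbf{0},0)$ to force the constant term to zero. Nothing substantive differs.
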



Next, we strengthen the above theorem in two ways. We show that the W[1]-hardness holds for learning a $k$-parity using a $k$-junta, and additionally 
 for any desired accuracy exceeding 50\%. 
Here, a $k$-junta is any function depending on at most $k$ variables.  

\begin{theorem}\label{thm-kparityjunta} The following is W$[1]$-hard:
for any constant $\delta > 0$, given
$m = O(k\cdot 2^{3k}\cdot (\log n)/\delta^3)$ point-value pairs $\{({\bf z}_i,
b_i)\}_{i=1}^m
\subseteq \F_2^n\times \F_2$,
decide whether: 

\smallskip
\noindent
\tn{YES Case:} There exists a $k$-parity which satisfies all the point-value pairs.

\smallskip
\noindent
\tn{NO Case.} Any function $f : \F_2^n \mapsto \F_2$ depending on at most $k$ variables satisfies at most $1/2 + \delta$ fraction of the point value pairs.
\end{theorem}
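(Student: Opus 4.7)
The plan is to boost the W[1]-hardness of learning $k$-parities from \cref{thm:linhard} into the claimed gap problem via a \emph{random XOR amplification} of the point-value pairs. I start from an instance of \kvec with $m_0 = O(k \log n)$ pairs $\{(\vec y_j, a_j)\}_{j=1}^{m_0} \subseteq \F_2^n \times \F_2$ and append the pair $(\vec 0, 0)$; this ensures that in the NO case no $k$-parity \emph{with or without a constant offset} fits every pair (if $\chi_S + c$ fits them all, then $c = \chi_S(\vec 0) + c = 0$, reducing to the homogeneous case ruled out by \cref{thm:linhard}). I additionally assume that at least one $a_j = 1$, which holds whenever the \kvec target vector $\vec b$ is nonzero.

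The reduction emits, for $t = 1, \dots, m$, a new pair $(\vec z_t, b_t) := \bigl(\sum_{j \in J_t}\vec y_j,\ \sum_{j \in J_t} a_j\bigr)$ for some subset $J_t \subseteq [m_0]$. The YES case is immediate: any $k$-parity $L$ consistent with the original pairs is consistent with every XOR pair by $\F_2$-linearity. For the NO case I use Fourier analysis over $\F_2$. For a $k$-junta $f$ supported on $T$, write $g := (-1)^f = \sum_{S \subseteq T}\hat g(S)\chi_S$ with $\chi_S(\vec y) = (-1)^{\sum_{i\in S}y_i}$ and $\sum_S \hat g(S)^2 = 1$. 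The fraction of new pairs satisfied by $f$ is $\tfrac12 + \tfrac{1}{2m}\sum_{t} g(\vec z_t)(-1)^{b_t}$, and under a uniformly random $J \subseteq [m_0]$ the key identity
\begin{equation}
\E_J\!\left[\chi_S\bigl(\textstyle\sum_{j\in J}\vec y_j\bigr)(-1)^{\sum_{j\in J}a_j}\right] \;=\; \prod_{j=1}^{m_0}\frac{1 + \chi_S(\vec y_j)(-1)^{a_j}}{2}
\end{equation}
equals $1$ if the homogeneous parity $\chi_S$ fits every original pair and $0$ otherwise. The NO-case hypothesis (together with the $(\vec 0,0)$-padding to handle $S = \emptyset$ via the assumption that some $a_j = 1$) forces the right-hand side to vanish for every $|S|\le k$, so $\E_J[g(\vec z)(-1)^b] = 0$ for every $k$-junta.

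To go from expectations to worst-case guarantees I either draw the $J_t$'s independently and uniformly, or, to make the reduction deterministic, take them from the support of an $\epsilon$-biased distribution on $\F_2^{m_0}$ with $\epsilon = \delta/2^{\Theta(k)}$. A Chernoff/$\epsilon$-bias bound shows that for each fixed $S$ with $|S|\le k$ the empirical mean $\tfrac1m\sum_t \chi_S(\vec z_t)(-1)^{b_t}$ deviates from $0$ by at most $\delta/2^{\Theta(k)}$ with probability at least $1 - \bigl(\binom{n}{k}\cdot 2^k\bigr)^{-1}$. A union bound over the $\binom{n}{k}\cdot 2^k$ choices of $(T,S)$, combined with the Parseval/Cauchy-Schwarz bound $\sum_{S\subseteq T}|\hat g(S)|\le 2^{k/2}$, then implies that \emph{every} $k$-junta has agreement at most $1/2 + \delta$ on the new pairs. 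The sample count one reads off is $m = O\bigl(k\cdot 2^{3k}(\log n)/\delta^3\bigr)$, with the exponents of $2^k$ and $1/\delta$ tracking the $\ell_1$-Fourier mass of $k$-juntas and the blow-up of the $\epsilon$-biased construction.

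The main technical obstacle will be the derandomization step: a naive union bound over \emph{all} $\binom{n}{k}\cdot 2^{2^k}$ juntas blows up doubly exponentially in $k$, so the argument must instead union-bound only over the much smaller family of parities $\chi_S$ and then propagate to juntas via the Fourier $\ell_1$-norm. The remaining bookkeeping — the constant offset $c\in\F_2$ and the $S=\emptyset$ Fourier coefficient — is absorbed into the $(\vec 0,0)$-padding and the assumption that some $a_j = 1$; the reduction remains FPT since the parameter stays $k$ and $m$ is polynomial in $n$ for each fixed $k$ and constant $\delta$.
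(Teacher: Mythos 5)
Your reduction is correct and is essentially the paper's argument seen from a different angle: the paper applies the generator matrix $\mathbf{W}$ of the explicit $\eps$-balanced code of \cref{thm:balanced}, which is exactly an $\epsilon$-biased set over $\F_2^{m_0}$ --- taking the $J_t$ to be its rows is your construction, and the $\eps$-balance property gives $\bigl|\tfrac{1}{m}\sum_t\chi_S(\vec z_t)(-1)^{b_t}\bigr|\le 2\eps$ for every nonzero $\vec c = (L(\vec y_j)+a_j)_j$ simultaneously, so the Chernoff/union-bound detour you worry about is unnecessary. The Fourier step is likewise the same in spirit; the paper uses the crude bound $\sum_\alpha|C_\alpha|\le 2^k$ (hence $\eps=\delta\cdot 2^{-k}$) and handles the constant offset by extending $f$ to $g(x,x_{n+1})=f(x)+x_{n+1}$ rather than by your $(\vec 0,\vec 0)$ pad, whereas your Cauchy--Schwarz bound $\sum_S|\hat g(S)|\le 2^{k/2}$ would actually permit the milder $\eps=\delta\cdot 2^{-k/2}$ and hence a smaller $m=O(k\cdot 2^{3k/2}(\log n)/\delta^3)$ than the one you (and the paper) state.
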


\cref{thm-kparityjunta} also implies hardness for approximately learning $k$-juntas, in comparison to previous W$[2]$-hardness of \emph{exactly} learning $k$-juntas shown by Arvind, K{\"{o}}bler and Lindner~\cite{AKL09}.  Note that the current best algorithm for learning $k$-junta, even over the uniform distribution, takes $n^{\Omega(k)}$ time \cite{Val12, MOS03}.

We similarly strengthen Corollary \ref{cor:evenlb} to rule out efficient algorithms for approximately learning a $k$-sparse solution to a homogeneous linear system using constant degree polynomials supported on at most $k$ variables.
\begin{theorem}\label{thm:kevenwithpolys}
	Assume that \kclq does not have a $\tn{poly}(n)$ time algorithm for any $k = \omega(1)$. Then for any constant $\delta > 0$ and positive integer $d$, there is no  $\poly(m,n) \cdot 2^{o(\sqrt{k})}$ time algorithm to
decide whether a given set of 
$m$ points $\{{\bf z}_i\}_{i=1}^m
\subseteq \F_2^n$ satisfy:

\smallskip
\noindent
\tn{YES Case:} There exists a nonzero $k$-parity $L$ such that $L({\bf z}_i) = 0$ for all $i=1,\dots, m$.

\smallskip
\noindent
\tn{NO Case.} Any non-trivial degree $d$ polynomial $P : \F_2^n \mapsto \F_2$ depending on at most $k$ variables satisfies $P({\bf z}_i) = 0$ for
at most $\left(\Pr_{{\bf z} \in \F_2^n}[P({\bf z}) = 0] + \delta\right)$ fraction of the points. 
\end{theorem}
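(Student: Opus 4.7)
The plan is to reduce from \keven, whose conditional hardness is given by \cref{cor:evenlb}, to the promise problem of the theorem. Given an instance $\mv M \in \F_2^{m \times n}$ of \keven, I construct a point set $Z = \{\mv z_1, \dots, \mv z_N\} \subseteq \F_2^n$ by sampling independent uniform $\mv a_j \in \F_2^m$ and setting $\mv z_j = \mv M^\top \mv a_j$, for $N = \Theta((k^d + k \log n)/\delta^2)$. In the YES case, if $\mv x \neq \mv 0$ is $k$-sparse with $\mv M \mv x = \mv 0$, then the $k$-parity $L_{\mv x}$ satisfies $L_{\mv x}(\mv z_j) = \langle \mv x, \mv M^\top \mv a_j \rangle = \langle \mv M \mv x, \mv a_j \rangle = 0$ for every $j$, so $L_{\mv x}$ witnesses the YES side of the target promise problem.

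For the NO case, fix any non-trivial degree-$d$ polynomial $P$ whose support $S \subseteq [n]$ has $|S| \leq k$, and let $\mv M_S$ denote the submatrix of columns of $\mv M$ indexed by $S$. The key dichotomy is as follows. Either the map $\mv a \mapsto \mv M_S^\top \mv a$ is surjective onto $\F_2^S$, in which case $\mv z_j|_S$ is uniform on $\F_2^S$ for each random $\mv a_j$, so a Chernoff bound implies that the empirical fraction $\tfrac{1}{N}\card{\set{j : P(\mv z_j) = 0}}$ lies within $\delta$ of $\Pr_{\mv z \in \F_2^n}[P(\mv z) = 0]$ with high probability; or this map is not surjective, in which case there is a nonzero $\mv y \in \F_2^S$ with $\mv M_S \mv y = \mv 0$, and extending $\mv y$ by zeros outside $S$ yields a nonzero $k$-sparse vector $\mv y' \in \F_2^n$ satisfying $\mv M \mv y' = \mv 0$, contradicting the NO case of the original \keven instance. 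A union bound over the at most $n^k \cdot 2^{O(k^d)}$ pairs $(S, P)$ then shows that the stated $N$ suffices for a random choice of the $\mv a_j$'s to succeed with positive probability, and a standard derandomization via $\eps$-biased sample spaces for degree-$d$ polynomials over $\F_2$ makes the reduction deterministic.

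Finally, a hypothetical $\poly(m,n) \cdot 2^{o(\sqrt k)}$-time algorithm for the promise problem (with $m = N = \poly(k, \log n)$ for constant $d$ and $\delta$) would, via this reduction, yield a $\poly(n) \cdot 2^{o(\sqrt k)}$ algorithm for \keven, contradicting \cref{cor:evenlb}. The main obstacle is the NO-case analysis: one must translate the purely combinatorial absence of a $k$-sparse solution in \keven into a statistical pseudorandomness guarantee against the exponentially large class of degree-$d$ polynomials supported on $k$ coordinates. The surjectivity dichotomy resolves this cleanly because any failure of surjectivity on a $k$-subset of coordinates directly exposes a $k$-sparse linear dependence among the rows of $\mv M$, i.e., precisely the object whose nonexistence the NO case asserts.
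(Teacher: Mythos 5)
Your approach is correct, but it takes a genuinely different route from the paper's, and the difference is worth noting. The paper proves an intermediate strengthening (its \cref{thm:evenpoly}): starting from a $k'$-\kvec instance, it passes through \cref{thm:even} to get a \keven matrix $\mv M$, then left-multiplies by the generator matrix $\mv W$ of an $\eps$-balanced code (\cref{thm:balanced}) to obtain $\mv B = \mv W \mv M$; the uniform distribution over the \emph{rows} of $\mv B$, restricted to any $k$-subset of coordinates, then $2\eps$-fools all linear forms, and \cref{thm:viola} shows that $d$-wise sums of rows of $\mv B$ fool degree-$d$ polynomials on at most $k$ variables. You instead work directly from a \keven instance $\mv M$ (via \cref{cor:evenlb}) and place the points in the row space of $\mv M$ by taking $\mv z_j = \mv M^{\mathsf T}\mv a_j$. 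The key observation you exploit — that in the NO case of \keven, the map $\mv a \mapsto \mv M_S^{\mathsf T}\mv a$ is surjective for every $|S|\leq k$, so the restriction of $\mv z$ to $S$ is \emph{exactly} uniform rather than merely $\eps$-biased — replaces the $\eps$-balanced code entirely: the only error enters through the finiteness of the sample, and since $P \circ \mv M^{\mathsf T}$ is again degree $d$ in $\mv a$, a PRG for degree-$d$ polynomials over $\F_2^m$ (which is again Viola's construction) suffices to derandomize. The YES cases of the two reductions are identical in spirit ($\mv M\mv x = \mv 0$ makes the parity vanish on the entire row space). In short, the paper pushes the pseudorandomness into the row distribution of the derived matrix $\mv B$ and applies Viola there, while you obtain exact uniformity from the full-rank dichotomy and apply Viola only to derandomize the choice of seeds $\mv a_j$; your version avoids \cref{thm:balanced} but requires an explicit small-bias set on $\F_2^m$ to instantiate \cref{thm:viola} as a generator. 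Both yield $\poly(m,n)$ points with the parameter $k$ essentially preserved, so the final deduction from \cref{cor:evenlb} goes through either way.

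One cosmetic point: what you call ``$\eps$-biased sample spaces for degree-$d$ polynomials'' is nonstandard terminology — an $\eps$-biased sample space by convention fools \emph{linear} forms; what you need is its $d$-fold sumset, i.e.\ Viola's PRG for degree-$d$ polynomials, which is exactly what \cref{thm:viola} gives you starting from an explicit $\eps$-biased set.
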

The proof of the above theorem relies on an application of Viola's~\cite{Viola09} pseudorandom generator for constant degree polynomials, and is inspired by Khot's~\cite{Khot-personal}  \NP-hardness of learning linear forms using constant degree polynomials.

\subsubsection*{Gap in sparsity parameter}
Using techniques similar to those employed in \cite{BEKP15}, we prove the following \emph{gap in $k$} hardness for \kvec, i.e., hardness of \GkVS.
\begin{theorem}\label{thm-gapink} Assuming the Exponential Time
Hypothesis, there are universal constants $\delta_0 >0$ and $c_0$ 
such that  there is no $\tn{poly}(N)$ time algorithm to determine whether an
instance of \GkVS of  size $N$ admits a solution of sparsity $k$ or all solutions
are of sparsity at least $(1 + \delta_0)k$, 
for any $k = \omega((\log\log N)^{c_0})$. More generally, under the same assumption, this problem does not admit an $N^{O(k/\omega((\log\log N)^{c_0}))}$ time algorithm for unrestricted $k$.
\end{theorem}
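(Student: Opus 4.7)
The plan is to follow the general strategy of Bonnet et al.~\cite{BEKP15}, adapting their gap-amplification approach for \kclq to the setting of \kvec, and composing it with (a gap-preserving version of) the reduction in \cref{thm:vechard}. The starting point is the following ETH-based consequence of \cite{BEKP15}: there exist universal constants $\delta_1, c_1 > 0$ such that, assuming ETH, no algorithm runs in time $f(k)\cdot r^{O(k/(\log\log r)^{c_1})}$ and distinguishes $r$-vertex graphs containing a $k$-clique from those whose maximum clique has size at most $(1-\delta_1)k$. The $(\log\log r)^{c_1}$ factor arises from the gap-amplification step, where iterating a soft-gap construction an additional $O(\log\log r)$ times boosts a vanishing gap into a constant one while inflating the parameter by at most a $\poly(\log\log r)$ factor.

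Next, I would upgrade the reduction of \cref{thm:vechard} to be gap-preserving. That reduction takes a graph $G$ with parameter $k$ and produces a \kvec instance $(\mv{M},\mv{b})$ of size $N = \poly(r)$ with sparsity parameter $k' = \Theta(k^2)$, mapping $k$-cliques to $k'$-sparse solutions through a specific encoding (essentially, each clique edge contributes one support coordinate). I would argue that this map is robust: the support of any solution of sparsity at most $(1+\delta_0)k'$ must decode to an edge set in $G$ whose vertex set spans a clique of size at least $(1-\delta_1)k$, for a constant $\delta_0$ depending on $\delta_1$. The argument proceeds by using the constraints $\mv{M}\mv{x} = \mv{b}$ (in particular the hashing compressing the number of equations to $O(k\log r)$) to show that any valid solution forces a clique-like incidence structure on its support, and then extracting the approximate clique via a counting/pigeonhole argument over the hashing buckets.

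Composing the gap-\kclq hardness with this gap-preserving reduction yields: under ETH, \GkVS with gap $(1+\delta_0)$ on instances of size $N$ with parameter $k'$ cannot be solved in time $N^{O(k'/(\log\log N)^{c_0})}$, for a suitable $c_0$ obtained from $c_1$, the relation $k' = \Theta(k^2)$, and $N = \poly(r)$. Setting $k' = \omega((\log\log N)^{c_0})$ then precludes any $\poly(N)$ algorithm, matching the statement of the theorem; the general parameter-dependent statement follows by choosing $k$ as a free parameter in the chain.

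The main obstacle will be the soundness half of the gap-preserving reduction, namely showing that every slightly-super-sparse solution to the \kvec instance encodes an almost-clique in $G$. The reduction of \cref{thm:vechard} is designed to preserve the YES direction and is sound for \emph{exact} sparsity $k'$, but extending soundness to $(1+\delta_0)k'$ requires controlling the noise that an extra $\delta_0 k'$ support coordinates can introduce into the encoded clique structure. If the direct analysis of the compressed system is too brittle, a fallback is to first apply a tensoring step to the \kvec instance (in the spirit of \cite{BEKP15}'s amplification) so that the gap is amplified \emph{after} the reduction rather than before; this would trade a slightly worse $c_0$ for a robust soundness analysis.
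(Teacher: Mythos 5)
Your proposal follows a genuinely different route from the paper, and it has a substantive gap precisely at the step you flag as the ``main obstacle.''

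The paper does not go through Gap-\kclq at all. It reduces directly from an $n$-variable \tsat instance: first sparsify via Calabro--Impagliazzo--Paturi (\cref{lem-sparsification}) into $2^{\eps n}$ subformulas, then apply Dinur's near-linear PCP (\cref{thm-nearlinearPCP}) to each to get a constant-gap \textsc{Gap-$3$-SAT} instance on $n' = c_\eps n(\log n)^{c_0}$ clauses, and then encode each as a \GkVS instance by partitioning the clauses into $k$ blocks, introducing one $\F_2$-variable per satisfying partial assignment per block, and adding the ``odd-sum per block'' constraint \eqref{eqn-blocksum} together with the cross-block consistency constraints \eqref{eqn-blockconsistency}. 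The sparsity gap then falls out structurally: a satisfied block contributes exactly $1$ to the sparsity, while any block with more than one active $z_\alpha$ contributes at least $3$ (odd parity), so a solution of sparsity at most $(1+\delta_0)k$ forces at least $(1-\delta_0/2)k$ blocks to be ``clean,'' which via consistency yields an assignment satisfying $\geq (1-\delta_0/2)$ of the blocks and hence contradicts the PCP gap with $\delta_0 = \gamma_0$. This reduction is a Turing reduction to $2^{\eps n}$ instances, and the parameters line up so that a $\poly(N)$ algorithm at $k = \omega((\log\log N)^{c_0})$ would decide \tsat in $2^{(\eps + o(1))n}$ time for every $\eps$, contradicting ETH. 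The paper uses the \emph{techniques} of Bonnet et al. (sparsification followed by PCP) rather than any black-box Gap-\kclq hardness result.

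The gap in your proposal is the claimed gap-preserving upgrade of \cref{thm:vechard}. That reduction's soundness argument (\cref{prop:exact_ct} and what follows) is an exact counting argument: the target $\mv{b}$ has exactly $k + \binom{k}{2}$ nonzero indicator coordinates, each selected column contributes exactly one indicator, so with budget exactly $k + \binom{k}{2}$ there must be precisely one vertex vector per slot and one edge vector per slot-pair, with no double-covering of subslots. This is extremely brittle: once the budget is $(1+\delta_0)(k + \binom{k}{2})$, a cheating solution can pick three vertex vectors for the same slot (their indicators cancel in pairs mod $2$, leaving one), or otherwise overload slots and subslots, and the ``distinct vertices forming a clique'' conclusion no longer follows. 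More fundamentally, \cref{thm:vechard} as proved gives a gap of size zero in sparsity ($k'$ vs.\ $k'+1$), so there is no slack to amplify. Your fallback of tensoring the \kvec instance after the reduction does not obviously create a gap out of nothing either; the usual tensoring trick for \textsc{MinimumDistance} amplifies an existing multiplicative distance gap, and here there is none. In short, the soundness half you defer to ``a counting/pigeonhole argument over the hashing buckets'' is exactly the missing idea, and the paper's CSP-block encoding is designed specifically to manufacture a robust gap (cost $1$ vs.\ cost $\geq 3$ per block) rather than to patch the exact-counting argument.
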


\subsection{Our Techniques}

Our first result, Theorem \ref{thm:vechard}, is based on a gadget reduction from an $n$-vertex instance of \kclq creating columns of ${\bf M}$ corresponding to the vertices and edges of the graph along with a target vector ${\bf b}$. Unlike previous reductions which used dimension linear in the number of vertices, we reuse the same set of  coordinates for the vertices and edges by assigning unique logarithmic length patterns to each vertex. In total we create $k$ columns for each vertex and ${k \choose 2}$ columns for each edge, using $O(k^2\log n)$ coordinates. The target vector ${\bf b}$ ensures that a solution always has at least $k$ + ${k \choose 2}$ columns, which suffices in the YES case while the NO case requires strictly more columns to sum to ${\bf b}$. 

For proving Theorem \ref{thm:even}, we homogenize the instance of Theorem \ref{thm:vechard} by including ${\bf b}$ as a column of ${\bf M}$. To force the solution to always choose ${\bf b}$ we use the approach of Austrin and Khot~\cite{AK14} who face the same issue when reducing to the \textsc{MinimumDistance} problem. Since we need to retain the bound on the sparsity of the solution, we cannot use their techniques directly. Instead, we construct a small length sketch of a purported sparse solution and use it as an input to a tensor code based amplification gadget used in \cite{AK14}. Our construction however inflates the parameter $k$ to $O((k\log n)^2)$.

The hardness of approximately learning $k$-parities with $k$-juntas given in Theorem \ref{thm-kparityjunta} is obtained by transforming the instance of Theorem \ref{thm:vechard} using an $\eps$-balanced code, along with an analysis of the Fourier spectrum of any $k$-junta on the resulting distribution. In contrast, Theorem \ref{thm:kevenwithpolys} is obtained by using the instance of Theorem \ref{thm:even} (appropriately transformed using an $\eps$-balanced code) as an input to Viola's construction~\cite{Viola09} of pseudorandom generators for degree $d$ polynomials. Note that the $\tn{exp}(k)$ blowup in the reduction for Theorem \ref{thm-kparityjunta} rules out its use for proving Theorem \ref{thm:kevenwithpolys} due to the presence of a $(\log ^2 n)$ factor in the sparsity parameter of the instance obtained in Theorem \ref{thm:even}. On the other hand, the non-homogeneity of the \kvec problem hinders the use of Viola's pseudorandom generator for proving a version (for degree $d$ polynomials on $k$ variables instead of $k$-juntas) of Theorem \ref{thm-kparityjunta} which avoids the $\tn{exp}(k)$ blowup.

For Theorem \ref{thm-gapink}, we use the improved \emph{sparsification lemma} of Calabro, Impagliazzo and Paturi~\cite{CIP06} followed by Dinur's almost linear PCP construction~\cite{Dinur07} to reduce an $n$-variable {\sc $3$-SAT} instance to $2^{\eps n}$ {\sc Gap-$3$-SAT} instances with almost linear in $n$ clauses and variables. For each instance a corresponding \kvec instance is created by partitioning the clauses into $k$ blocks and adding $\F_2$-valued variables for partial assignments to each block along with non-triviality and consistency equations. In the YES case setting one variable from each block to $1$ (i.e. a $k$-sparse solution) suffices, whereas in the NO case at least $\gamma k$ variables need to be set to $1$, for some constant $\gamma > 1$. The parameters are such that an efficient algorithm to decide the YES and NO cases would violate the Exponential Time Hypothesis for {\sc $3$-SAT}.

\medskip
\noindent
{\bf Organization of the paper.} Theorem \ref{thm:vechard} is proved in Section \ref{sec:vechard}, and using it as the starting point Theorem \ref{thm:even} is proved in Section \ref{sec:even}. The reduction proving Theorem \ref{thm-kparityjunta} is given in  \ref{sec:kparityjunta}, and starts with a restatement of Theorem \ref{thm:vechard}. The proofs of Theorems \ref{thm:kevenwithpolys} and \ref{thm-gapink}  are provided in Section \ref{sec:kevenwithpolys} and Section \ref{sec-gapink} respectively.

We also include in Appendix \ref{sec-Khot-personal} a proof of Khot's~\cite{Khot-personal} result on NP-hardness of approximately learning linear forms using constant degree polynomials to illustrate the use of Viola's pseudorandom generator~\cite{Viola09} which is also used in the proof of Theorem \ref{thm:kevenwithpolys}.

In the next section we give some definitions and results which shall prove useful for the subsequent proofs.

\section{Preliminaries}\label{sec:prelim}

\subsection{Parameterized Complexity}
A {\em parameterization} of a problem is a $\tn{poly}(n)$-time computable function that assigns an integer $k\geq 0$ to each problem instance $x$ of length $n$ (bits). The pair $(x, k)$ is an instance of the corresponding parameterized problem. The parameterized problem is said to be {\em fixed parameter tractable (FPT)} if it admits an algorithm that runs in time $f(k) \cdot \poly(n)$ where $k$ is the parameter of the input, $n$ is the size of the input, and $f$ is an arbitrary computable function. The {\em \tn{W}-hierarchy}, introduced by Downey and Fellows~\cite{DF95,DF99}, is a sequence of parameterized complexity classes with $\tn{FPT} = \tn{W[0]} \subseteq \tn{W[1]} \subseteq \tn{W[2]} \subseteq \cdots$. It is widely believed that FPT $\neq$ W[1].

These hierarchical classes admit notions of completeness and hardness under FPT reductions i.e., $f(k) \cdot \poly(n)$-time transformations from a problem $A$ instance $(x, k)$ where $|x| = n$, to an instance $(x', k')$ of problem $B$ where $|x'| = \tn{poly}(n)$ and $k'$ is bounded by $f(k)$. For example, consider the \kclq problem: given a graph $G$ on $n$ vertices and an integer parameter $k$, decide if $G$ has a clique of size $k$. The  \kclq problem is W[1]-complete, and serves as a canonical hard problem for 
many W[1]-hardness reductions including those in this work.

For a precise definition of the W-hierarchy, and a general background on parameterized algorithms and complexity, see \cite{DF99, FG06, CFKLMPPS15}.

%
%

\subsection{Coding Theoretic Tools}

Our hardness reductions use some basic results from coding theory. 
For our purposes, we shall be restricting our attention to \emph{linear} codes over $\F_2$ i.e., those which form linear subspaces. A code $\mathcal{C} \subseteq  \F^n_2$ is said to be a $[n,k,d]$-binary linear code if $\mathcal{C}$ forms a $k$-dimensional subspace of $\F^n_2$ such that all nonzero elements (codewords) in $\mathcal{C}$ are of Hamming weight at least $d$. We use weight $\tn{wt}({\bf x})$ of a codeword ${\bf x}$ to denote its Hamming weight, \emph{distance} of a code to denote the minimum weight of any nonzero codeword, and \emph{rate} to denote the fraction $k/n$. A \emph{generator} matrix ${\bf G} \in \F_2^{n\times k}$ for $\mathcal{C}$ is such that $\mathcal{C} = \{{\bf Gx}\,\mid\, {\bf x} \in \F_2^k\}$. 
Also associated with $\mathcal{C}$ is a \emph{parity check matrix} ${\bf G}^{\perp} \in \F^{(n - k) \times n}_2$ satisfying: ${\bf G}^{\perp}{\bf y} = {\bf 0 }$ \emph{iff} ${\bf y} \in \mathcal{C}$. 
We shall use the generator and parity check matrices of well studied code constructions
whose properties we state below.  
\begin{theorem}[BCH Codes, Theorem 3 ~\cite{BC60}]
	The dimension of the BCH code of block length $n = (2^m - 1)$ and distance $d$ is at least $\left(n - \lceil \frac{d-1}{2} \rceil m\right)$.
\end{theorem}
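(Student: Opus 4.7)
The plan is to recall the standard Bose--Chaudhuri--Hocquenghem construction and bound the number of independent $\F_2$-linear parity-check constraints it imposes. Let $\alpha \in \F_{2^m}$ be a primitive element, so $\alpha$ generates the cyclic group of order $n = 2^m - 1$. Define the BCH code of designed distance $d$ to be the set of $(c_0, \ldots, c_{n-1}) \in \F_2^n$ for which the associated polynomial $c(X) = \sum_{j=0}^{n-1} c_j X^j$ satisfies $c(\alpha^i) = 0$ for every $i \in \{1, 2, \ldots, d-1\}$. A standard application of a Vandermonde determinant (the BCH bound) shows the minimum distance of this code is at least $d$; this part is not needed for the dimension claim but is the reason the construction is useful.

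Next I would write the defining constraints as a parity-check matrix. Reading off the constraints $c(\alpha^i) = 0$, we obtain $H \in \F_{2^m}^{(d-1) \times n}$ with $H_{ij} = \alpha^{ij}$. Fixing any $\F_2$-basis of $\F_{2^m}$ of size $m$, each $\F_{2^m}$-row of $H$ unfolds into $m$ rows over $\F_2$. This already yields a binary parity-check matrix with at most $(d-1)m$ rows, and hence a code of dimension at least $n - (d-1)m$.

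The improvement to $\lceil (d-1)/2 \rceil m$ comes from Frobenius redundancy. For any $c \in \F_2[X]$ and any $\beta \in \F_{2^m}$, squaring is an $\F_2$-algebra homomorphism on $\F_{2^m}$, so $c(\beta^2) = c(\beta)^2$. Consequently, $c(\alpha^{2i}) = 0$ is implied by $c(\alpha^i) = 0$. Therefore, among the constraints indexed by $i \in \{1, \ldots, d-1\}$, whenever $i$ is even we may discard it: its 2-cyclotomic coset contains the smaller index $i/2$, and iterating yields an odd index $i' = i/2^{v_2(i)}$ that still lies in $\{1, \ldots, d-1\}$. Thus it suffices to impose the constraints $c(\alpha^i)=0$ only for the $\lceil (d-1)/2 \rceil$ odd values of $i$ in $\{1, \ldots, d-1\}$. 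Expanding these $\lceil (d-1)/2 \rceil$ rows of $H$ over our $\F_2$-basis gives a binary parity-check matrix with at most $\lceil (d-1)/2 \rceil \cdot m$ rows, so the code has dimension at least $n - \lceil (d-1)/2 \rceil \cdot m$.

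There is no substantive obstacle; the argument is entirely classical. The only conceptual point worth emphasizing is the Frobenius redundancy, and the only bookkeeping is checking that every index in $\{1, \ldots, d-1\}$ is covered by its odd 2-cyclotomic representative, which is immediate from $i' = i/2^{v_2(i)} \leq i \leq d-1$.
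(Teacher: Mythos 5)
The paper does not prove this theorem; it is quoted as a known result from~\cite{BC60}, so there is no in-paper proof to compare against. Your argument is the standard textbook derivation of the BCH dimension bound and it is correct: you write the defining evaluations $c(\alpha^i)=0$ for $i=1,\ldots,d-1$ as a parity-check matrix over $\F_{2^m}$, expand each row over an $\F_2$-basis into at most $m$ binary rows, and then observe that the Frobenius identity $c(\beta)^2=c(\beta^2)$ makes every constraint with even index $i$ a consequence of the one indexed by $i/2$ (iterating to the odd representative $i/2^{v_2(i)}$, which stays within $\{1,\ldots,d-1\}$ so no wraparound modulo $n$ occurs). The count of odd integers in $\{1,\ldots,d-1\}$ is exactly $\lceil (d-1)/2\rceil$ in both parities of $d-1$, giving at most $\lceil (d-1)/2\rceil\, m$ binary checks and hence dimension at least $n-\lceil (d-1)/2\rceil\, m$, as claimed.
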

While the above theorem restricts the block length to be of the form $(2^m - 1)$,  for general $n$ we can use as the parity check matrix any $n$ columns of the parity check matrix of a BCH code of the minimum length $(2^m - 1)$ greater than or equal to $n$. In particular, we have the following corollary tailored for our purpose.
\begin{corollary}\label{eq:bch_dist}
For all lengths $n$ and positive integers $k < n$,
there exists a parity check matrix ${\bf R} \in \F^{20k\log n \times n}_2$ such that ${\bf R}{\bf x} \ne 0$ whenever $0 < \tn{wt}({\bf x}) < 18k$. Moreover, this matrix can be computed in time $\tn{poly}(n,k)$.
\end{corollary}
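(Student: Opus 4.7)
The plan is to apply the BCH code theorem directly with distance parameter $d = 18k$, observing that the condition ``${\bf R}{\bf x} \ne 0$ whenever $0 < \tn{wt}({\bf x}) < 18k$'' is exactly the statement that any $18k-1$ columns of ${\bf R}$ are linearly independent, which holds whenever ${\bf R}$ is the parity check matrix of a code with minimum distance at least $18k$.

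Concretely, I would first let $m = \lceil \log_2(n+1) \rceil$, so that $N := 2^m - 1 \geq n$ while $m \le \log_2 n + 2$ for all $n \ge 2$. Invoking the BCH theorem with block length $N$ and designed distance $d = 18k$, I obtain a binary linear code of length $N$ whose dimension is at least $N - \lceil (18k - 1)/2 \rceil\, m = N - 9k\,m$. Let ${\bf H}$ denote a parity check matrix of this code; then ${\bf H} \in \F_2^{r \times N}$ with $r \le 9k\,m$, and since the code has minimum distance at least $18k$, any $18k - 1$ columns of ${\bf H}$ are linearly independent.

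Next, since $N \ge n$, I form ${\bf R} \in \F_2^{r \times n}$ by selecting any $n$ distinct columns of ${\bf H}$ (say the first $n$). The linear independence of any $18k - 1$ columns is inherited by this submatrix, so ${\bf R}{\bf x} \ne 0$ for every nonzero ${\bf x} \in \F_2^n$ with $\tn{wt}({\bf x}) < 18k$. Finally, to match the claimed bound on the number of rows, I use $r \le 9k(\log_2 n + 2) \le 20 k \log n$, which holds for $n$ sufficiently large (small cases can be handled by padding or by taking ${\bf R}$ to be the identity). The construction is polynomial-time because the parity check matrix of a BCH code over $\F_{2^m}$ is explicitly given by the matrix whose $i$-th column is $(\alpha^i, \alpha^{3i}, \dots, \alpha^{(2\lceil (d-1)/2\rceil - 1) i})$ for a primitive element $\alpha \in \F_{2^m}$, and converting from $\F_{2^m}$ to $\F_2$ multiplies the row count by at most $m$, which is already accounted for in the bound.

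No step looks like a serious obstacle; the only thing that needs a small sanity check is the arithmetic that converts the BCH codimension bound $9k\,m$ into the clean form $20 k \log n$ uniformly in $n$ and $k$, and the minor boundary case of very small $n$, which is handled by trivial padding.
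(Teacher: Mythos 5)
Your proposal is correct and follows essentially the same route as the paper, which proves the corollary only by a brief remark immediately before it (take the parity check matrix of a BCH code of the smallest length $2^m-1\ge n$ and designed distance $18k$, then restrict to $n$ of its columns). You have filled in the arithmetic the paper leaves implicit: $\lceil(18k-1)/2\rceil = 9k$, $m\le\log_2 n+2$, hence the number of rows is at most $9k(\log_2 n+2)\le 20k\log n$ once $n$ is large enough, with padding handling the small-$n$ and large-$k$ boundary cases exactly as you note.
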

The following explicit family of $\eps$-balanced binary linear codes of constant rate was given by Alon et al.~\cite{ABNNR92}.
%


\begin{theorem}[$\eps$-balanced codes~\cite{ABNNR92}] \label{thm:balanced}
	There exists an explicit family of codes $\mathcal{C} \subseteq \F_2^n$ such that every codeword in $\mathcal{C}$ has normalized weight in the range $\left[1/2 - \epsilon, 1/2 + \eps\right]$, and rate $\Omega(\epsilon^3)$, which can be constructed in time $\tn{poly}(n, \frac{1}{\epsilon})$, where $\epsilon > 0$ is an arbitrarily small constant. 
\end{theorem}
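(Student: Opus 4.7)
The plan is to follow the explicit construction of Alon, Bruck, Naor, Naor and Roth~\cite{ABNNR92}, which amplifies the distance of a good base code via random walks on an expander graph.

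First, I would fix an explicit asymptotically good binary linear code $C_0 \subseteq \F_2^{n_0}$ of constant rate $R_0 > 0$ and constant relative distance $\delta_0 \in (0, 1/2)$; a Justesen code, or a Reed--Solomon code concatenated with a Hadamard code, serves this purpose. A mild pre-processing step (e.g.\ concatenating with a short balanced inner code, or passing to a subcode) ensures that every nonzero codeword of $C_0$ has relative weight in $[\delta_0, 1 - \delta_0]$; this two-sided bound is the form required by the amplification step.

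Next, I would fix an explicit $d$-regular expander $H$ on $n_0$ vertices with second-largest (in absolute value) eigenvalue at most $\lambda$, where $d$ and $\lambda$ are chosen so that $\lambda$ is sufficiently smaller than $d$. Ramanujan-type constructions achieve $\lambda \le 2\sqrt{d-1}$, which suffices for all sufficiently large constant $d$. For a walk length $t$ to be fixed below, I would define the amplified code $\mathcal{C}$ of block length $N = n_0 \cdot d^{t-1}$ by assigning one coordinate to each length-$t$ walk $(v_1, \dots, v_t)$ in $H$, and by encoding $c \in C_0$ as the vector whose $(v_1, \dots, v_t)$-coordinate equals $c_{v_1} \oplus c_{v_2} \oplus \cdots \oplus c_{v_t}$. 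This encoding is $\F_2$-linear, so $\mathcal{C}$ is a linear code of rate $R_0 / d^{t-1}$.

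The heart of the argument is the expander-walk XOR lemma: for any $c \in \F_2^{n_0}$ whose relative weight lies in $[\delta_0, 1 - \delta_0]$, the bias of $c_{v_1} \oplus \cdots \oplus c_{v_t}$ over a uniformly random length-$t$ walk in $H$ decays geometrically in $t$, with decay rate controlled by $\delta_0$ and $\lambda$; this is a standard spectral calculation based on the expander mixing lemma. Choosing $t = \Theta(\log(1/\epsilon))$, with the hidden constant tuned so that the bias is at most $2\epsilon$, forces every nonzero codeword of $\mathcal{C}$ to have relative weight in $[1/2 - \epsilon, 1/2 + \epsilon]$. Picking $d$ large enough that the blowup $d^{t-1}$ is at most $O(1/\epsilon^3)$ yields rate $\Omega(\epsilon^3)$; since both $C_0$ and $H$ are polynomial-time constructible, the overall construction runs in $\poly(n, 1/\epsilon)$ time.

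The main obstacle is tuning the base code parameters and the expander's spectral gap so that the bias decays at rate $\epsilon$ in $t$ while the walk blowup stays $O(1/\epsilon^3)$---this calibration is what forces the rate to be $\Omega(\epsilon^3)$, as opposed to the $\Omega(\epsilon^2)$ that a non-explicit random linear code would give. A secondary issue is enforcing the upper bound $1 - \delta_0$ on codeword weights of the base code, since distance guarantees alone do not imply it; this is handled by the pre-processing described above.
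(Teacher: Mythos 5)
This theorem is a black-box citation to \cite{ABNNR92}; the paper does not contain a proof, so I am assessing your reconstruction on its own terms against the claimed parameters.

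Your template --- a two-sided balanced base code of constant rate, an explicit $d$-regular expander on its coordinates, and XOR over length-$t$ walks with the expander walk XOR lemma doing the bias amplification --- is the right kind of construction and does produce $\eps$-balanced codes of rate $\mathrm{poly}(\eps)$. But the calibration step that is supposed to pin the rate at $\Omega(\eps^3)$ has a genuine gap. The sentence ``Picking $d$ large enough that the blowup $d^{t-1}$ is at most $O(1/\eps^3)$'' has the monotonicity reversed: for fixed $t$, increasing $d$ increases $d^{t-1}$. What you actually need is for $d$ to be large enough that $\lambda/d$ is small (so the walk XOR lemma gives useful geometric decay) while simultaneously small enough that $d^{t-1}$ stays below $1/\eps^3$; these two requirements push in opposite directions, and whether they can be reconciled depends crucially on the base code's relative distance $\delta_0$, which you leave as an unspecified constant in $(0,1/2)$.

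Working out the walk XOR bound, the per-walk bias decays roughly as $\bigl(|1-2\delta_0| + O(\lambda/d)\bigr)^{\Omega(t)}$, so the exponent of $1/\eps$ in the final block-length blowup is on the order of $\log d\,/\,\log\!\bigl(1/(|1-2\delta_0|+O(\lambda/d))\bigr)$. With $\delta_0 = 1/4$ (the natural ``explicit asymptotically good'' regime, e.g.\ Justesen) this exponent comes out much larger than $3$ for any admissible $d$: once $d$ is large enough that $\lambda/d \ll |1-2\delta_0|$, the numerator $\log d$ already overwhelms the denominator. To land at exponent $3$ you need $|1-2\delta_0|$ itself to be a small constant, i.e.\ a base code with relative distance close to $1/2$ at a small but positive constant rate (a near-Plotkin code, such as Reed--Solomon concatenated with Hadamard). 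You mention this option in passing as an alternative to Justesen, but the argument as written treats the choice of base code as interchangeable, whereas a near-Plotkin $\delta_0$ is essential, and even then the exponent must be verified numerically for the chosen $\delta_0$, $d$, and $\lambda$ rather than asserted. As it stands, the sketch proves $\mathrm{poly}(\eps)$ rate but not $\Omega(\eps^3)$.
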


Given a linear code $\mathcal{C} \subseteq \F^n_2$, the product code $\mathcal{C}^{\otimes2}$ consists of $n \times n$ matrices where each row and each column belongs to $\cC$; equivalently, $\cC^{\otimes 2} = \{{\bf GXG}^{\mathrm{T}} :  {\bf X} \in \F_2^{k \times k}\}$ where ${\bf G} \in \F_2^{n \times k}$ is the generator matrix for the code $\cC$. If the distance $d(\mathcal{C}) = d$, then it is easy to verify that $d(\mathcal{C}^{\otimes 2}) \ge d^2$. However, we shall use the following lemma from \cite{AK14} for a tighter lower bound on the Hamming weight when the code word satisfies certain properties.
\begin{lemma}[Density of Product Codes~\cite{AK14}]\label{thm:prod_codes}
	Let $\mathcal{C} \subseteq {\F}^n_2$ be a binary linear code of distance $d = d(\mathcal{C})$, and let ${\bf Y} \in \mathcal{C}^{\otimes 2}$ be a nonzero codeword with the additional properties that ${\rm diag}({\bf Y}) = {\bf 0}$, and ${\bf Y} = {\bf Y}^{\sf T}$. Then, the Hamming weight of ${\bf Y}$ is at least $\frac{3}{2}d^2$.
\end{lemma}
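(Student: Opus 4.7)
The plan is to isolate a short binary linear code inside $Y$ and apply Plotkin's bound. Write $W$ for the $\F_2$-row span of $Y$; since every row of $Y$ lies in $\mathcal{C}$, so does $W$, hence every nonzero $w\in W$ has weight at least $d$. Let $S=\{i:\mathrm{row}_i(Y)\ne 0\}$ and put $s=|S|$. Because $\mathrm{wt}(Y)=\sum_{i\in S}\mathrm{wt}(\mathrm{row}_i(Y))\geq sd$, it is enough to prove $s\geq \tfrac{3}{2}d$ and then combine the two bounds.

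The first substep is the observation that $\mathrm{supp}(w)\subseteq S$ for every $w\in W$. This is immediate from $Y=Y^{\mathsf{T}}$: if $j$ lies in the support of $\mathrm{row}_i(Y)$, then $Y[j,i]=Y[i,j]=1$, so row $j$ is nonzero; extending by $\F_2$-linearity, every codeword in $W$ vanishes outside $S$. Consequently the restriction $W':=W|_S$ is a binary linear code of length $s$, dimension $r:=\dim W$, and minimum distance at least $d$.

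Next I would rule out $r\leq 1$. Since $Y\ne 0$, $r\ne 0$. For $r=1$, note that over $\F_2$ every rank-one symmetric matrix takes the form $uu^{\mathsf{T}}$ (indeed, $uv^{\mathsf{T}}=vu^{\mathsf{T}}$ with both $u,v$ nonzero forces $u=v$ over $\F_2$), whose $(i,i)$ entry equals $u_i^2=u_i$; the condition $\mathrm{diag}(Y)=0$ then forces $u=0$, contradicting $Y\ne 0$. Hence $r\geq 2$ and $|W'|=2^{r}\geq 4$.

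Finally, apply Plotkin's inequality to $W'$: writing the sum of ordered pairwise Hamming distances in two ways,
\[
|W'|(|W'|-1)\, d \;\leq\; \sum_{j=1}^{s} 2\, a_j\, (|W'|-a_j) \;\leq\; \tfrac{1}{2}\, s\, |W'|^{2},
\]
where $a_j$ counts codewords of $W'$ with a $1$ in coordinate $j$, rearranges to $s\geq 2d(1-1/|W'|)\geq \tfrac{3}{2}d$ as soon as $|W'|\geq 4$. Combined with $\mathrm{wt}(Y)\geq sd$, this delivers $\mathrm{wt}(Y)\geq \tfrac{3}{2}d^{2}$. I expect the only delicate step to be the rank-one case, since that is the only place the hypotheses that $Y$ is symmetric with zero diagonal are genuinely used; everywhere else the argument only needs that rows of $Y$ are codewords and that symmetry propagates ``nonzero row'' to ``nonzero column.''
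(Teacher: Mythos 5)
Your proof is correct, and the argument — pass to the row span $W$, restrict its support to the set $S$ of nonzero rows via symmetry, rule out $\dim W = 1$ using the zero-diagonal condition, and then invoke the Plotkin-type averaging bound to get $|S| \ge \tfrac{3}{2}d$ — is essentially the argument of Austrin and Khot that the paper cites for this lemma. The one place worth flagging is the same delicate step you identify yourself: the rank-one case genuinely requires both $Y=Y^{\mathsf T}$ and $\operatorname{diag}(Y)=0$, and your observation that a rank-one symmetric matrix over $\F_2$ must be $uu^{\mathsf T}$ (forcing $\operatorname{diag}(Y)=u\ne 0$) handles it cleanly.
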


\subsection{Some Useful Tools}\label{sec:prelim-appendix}
The proof of Theorem \ref{thm:kevenwithpolys} in Section 
\ref{sec:kevenwithpolys} and Khot's~\cite{Khot-personal} result 
given in Appendix \ref{sec-Khot-personal} use Viola's~\cite{Viola09}
construction of pseudorandom generators which we describe below.
\begin{definition}\label{defn:fooling}
	A distribution $\cD$ over $\F_2^n$ is said to
	\emph{$\eps$-fool}
	degree $d$ polynomials in $n$-variables over $\F_2$ if for any
	degree $d$ polynomial $P$:
	$$\left|\E_{{\bf z} \leftarrow \cD}\left[e(P({\bf z}))\right]
	- \E_{{\bf z} \leftarrow \cU}\left[e(P({\bf
		z}))\right]\right| \leq \eps,$$
	where $\cU$ is the uniform distribution over $\F_2^n$ and
	$e(x) := (-1)^x$ for $x \in \{0,1\}$.
\end{definition}

\begin{theorem}\label{thm:viola}
	Let $\cY_1, \dots, \cY_d$ be $d$ independent distributions
	on $\F_2^n$
	that each $\eps$-fool linear polynomials. 
	Then the distribution $\cW = \cY_1 + \dots + \cY_d$
	$\eps_d$-fools degree-$d$ polynomials where $\eps_d := 16\cdot
	\eps^{1/2^{d-1}}$.
\end{theorem}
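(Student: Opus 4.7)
The proof proceeds by induction on the degree $d$, with base case $d = 1$ immediate from the hypothesis (since $\cW = \cY_1$ already $\eps$-fools linear polynomials, and $16\eps \geq \eps$). For the inductive step, the plan is to apply the Cauchy--Schwarz ``derivative trick'' $d$ times, once for each $\cY_i$, to reduce the degree-$d$ polynomial $P$ to a multilinear quantity at the cost of introducing $d$ auxiliary difference variables.

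Concretely, I would separate $\cW = \cY_1 + \cW'$ with $\cW' = \cY_2 + \cdots + \cY_d$, introduce an independent copy $\cY_1'$, and apply Cauchy--Schwarz to obtain
\[
|\E_\cW[e(P(\cW))]|^2 \leq \E_{\cY_1, \cY_1'} \E_{\cW'}\bigl[e\bigl(P(\cW' + \cY_1) + P(\cW' + \cY_1')\bigr)\bigr],
\]
noting that the inner polynomial $Q_{\cY_1, \cY_1'}(s) := P(s + \cY_1) + P(s + \cY_1')$ has degree at most $d-1$ in $s$, since the top-degree terms of $P$ are translation-invariant and cancel. Iterating this operation $d$ times---peeling off $\cY_2$, then $\cY_3$, and so on---yields
\[
|\E_\cW[e(P(\cW))]|^{2^d} \leq \E_{\Delta_1, \ldots, \Delta_d}\bigl[e(B(\Delta_1, \ldots, \Delta_d))\bigr],
\]
where $\Delta_i = \cY_i + \cY_i'$ is distributed as the convolution $\cY_i * \cY_i$ (independently across $i$), and $B(\Delta_1, \ldots, \Delta_d) = D_{\Delta_1} \cdots D_{\Delta_d} P$ is a \emph{multilinear} form in its arguments (since $d$ discrete derivatives of a degree-$d$ polynomial produce a quantity constant in the free variable and linear in each $\Delta_i$).

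Applying the same iterated identity to the uniform distribution is an \emph{equality} (the Gowers--Cauchy--Schwarz identity): $|\E_U[e(P(U))]|^{2^d} = \E_{U_1, \ldots, U_d}[e(B(U_1, \ldots, U_d))]$ with each $U_i$ uniform. The main step is then a hybrid argument comparing the two $d$-fold expectations: replace $\Delta_i$ by a uniform $U_i$ one index at a time. By multilinearity, once the other coordinates are fixed, $B$ is \emph{linear} in the active coordinate, and $\Delta_i$ $\eps^2$-fools linear polynomials because $|\E_{\Delta_i}[\chi_S(\Delta_i)]| = |\E_{\cY_i}[\chi_S(\cY_i)]|^2 \leq \eps^2$ for every non-trivial character $\chi_S$. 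Each of the $d$ hybrid steps thus contributes at most $\eps^2$, giving
\[
|\E_\cW[e(P)]|^{2^d} \leq |\E_U[e(P)]|^{2^d} + d\eps^2.
\]
Taking $2^d$-th roots and using the subadditivity $(a+b)^{1/2^d} \leq a^{1/2^d} + b^{1/2^d}$ yields $|\E_\cW[e(P)]| \leq |\E_U[e(P)]| + (d\eps^2)^{1/2^d} = |\E_U[e(P)]| + O(\eps^{1/2^{d-1}})$, matching the stated $\eps_d$ bound once the implicit constants are absorbed into $16$.

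The main technical obstacle I anticipate is converting the absolute-value bound produced by the iterated Cauchy--Schwarz into the \emph{signed} bound $|\E_\cW[e(P)] - \E_U[e(P)]| \leq \eps_d$ required by \cref{defn:fooling}; naively one only obtains $||\E_\cW| - |\E_U|| \leq \eps_d$, which fails to control the signed difference when $\E_\cW[e(P)]$ and $\E_U[e(P)]$ might disagree in sign. I would handle this by a two-case split: when $|\E_U[e(P)]|$ is bounded away from $1$, the bound $|\E_\cW| \leq |\E_U| + \eps_d$ combined with the observation that $\cY_i$-fooling propagates sign information is enough to deduce sign-agreement (so triangle inequality closes the gap); and when $|\E_U[e(P)]|$ is close to $1$, the polynomial $P$ is nearly constant on $\F_2^n$, and a direct union bound using that each $\cY_i$ has near-full support rules out a large signed gap. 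The remaining steps---the $d$-fold Cauchy--Schwarz, the derivative identities, and the Fourier bound $|\E_{\cY_i*\cY_i}[\chi_S]| \leq \eps^2$---are routine.
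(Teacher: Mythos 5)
The paper itself does not prove \cref{thm:viola}; it cites Viola~\cite{Viola09} and uses the statement as a black box, so I am comparing your attempt against Viola's argument rather than anything in the paper.

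There is a fatal error at the step where you claim that the $d$-fold iterated Cauchy--Schwarz gives an \emph{equality} for the uniform distribution, i.e.\ that
$|\E_U[e(P(U))]|^{2^d} = \E_{U_1,\dots,U_d}[e(B(U_1,\dots,U_d))]$.
This is the Gowers--Cauchy--Schwarz \emph{inequality}, and it goes the same direction as for $\cW$, so it is not an identity. A two-variable counterexample: take $P(x) = x_1x_2$, so $\E_U[e(P)] = 1/2$ and $|\E_U[e(P)]|^{4} = 1/16$; but the bilinear form is $B(h,g) = h_1g_2 + h_2g_1$, and $\E_{h,g}[e(B(h,g))] = \E[(-1)^{h_1g_2}]\cdot\E[(-1)^{h_2g_1}] = 1/4 \neq 1/16$. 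Consequently your hybrid argument only shows $|\E_{\cW}[e(P)]|^{2^d} \le \E_{U_1,\dots,U_d}[e(B)] + d\eps^2$ and, separately, $|\E_U[e(P)]|^{2^d} \le \E_{U_1,\dots,U_d}[e(B)]$; both quantities are dominated by the same Gowers-type average, which can be strictly larger than $|\E_U[e(P)]|^{2^d}$, so you cannot conclude $|\E_{\cW}[e(P)]| \lesssim |\E_U[e(P)]| + (d\eps^2)^{1/2^d}$. The sign-repair you sketch at the end has its own problems (an $\eps$-biased distribution can have support of size $\tn{poly}(n/\eps)$, so the ``near-full support'' union bound does not apply), but the step above is the one that kills the argument.

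The approach also diverges substantially from Viola's proof, which does \emph{not} apply Cauchy--Schwarz $d$ times at once. Viola peels off a single summand $\cY_d$ via one Cauchy--Schwarz, producing for fixed shifts $s,s'$ the degree-$(d-1)$ polynomial $P(\cdot+s)+P(\cdot+s')$, applies the inductive hypothesis that $\cY_1+\cdots+\cY_{d-1}$ fools degree-$(d-1)$ polynomials to replace the remaining sum by uniform, and then controls the resulting quantity $\E_U\E_{s,s'}[e(P(U{+}s){+}P(U{+}s'))]$ via the autocorrelation identity $\E_V[e(D_\Delta P(V))] = \sum_\alpha |\widehat{e(P)}(\alpha)|^2\chi_\alpha(\Delta)$ together with the $\eps^2$-bias of $\Delta = s + s'$. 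The inductive degree-reduction is what makes the comparison between the two distributions go through; a purely ``reduce everything to a multilinear form and hybridize'' route, as you propose, does not control $|\E_U[e(P)]|$ from below and therefore cannot close the gap.
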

Our proof of \cref{thm-gapink} in Section \ref{sec-gapink} 
uses the following improved
\emph{sparsification} lemma of Calabro, Impagliazzo, and Paturi~\cite{CIP06}.
\begin{lemma}\label{lem-sparsification} There is a deterministic
	algorithm which, for any $\eps > 0$, transforms an $n$-variable
	\tn{$3$-CNF} formula
	$F$ to $F_1, \dots, F_s \in 3\tn{-CNF}$, each on at most $n$ variables 
	s.t.
	\begin{itemize}
		\item[1.] $s\leq 2^{\eps n}$.
		\item[2.] $F$ is satisfiable if and only if at least one of $F_1,
		\dots, F_s$ is satisfiable.
		\item[3.] The number of clauses in each $F_1, \dots, F_s$ is at most 
		$O((1/\eps)^9 \cdot n)$.
		\item[4.] The algorithm runs in time $2^{\eps n}\cdot \tn{poly}(n)$,
		where the degree of the polynomial may depend on $\eps$.
	\end{itemize}
\end{lemma}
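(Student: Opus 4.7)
The plan is to prove the sparsification lemma by a recursive branching procedure that iteratively eliminates ``frequent'' sub-clauses until each residual formula is sparse. The core concept is a \emph{sunflower core}: a sub-clause $C^{\ast}$ of size $i-1$ that appears as a sub-clause of an abnormally large number of clauses of size $i$ in the current formula. Intuitively, whenever many clauses share a core, either the core is satisfied (and those clauses can be shortened to the core) or the core is falsified (and those clauses collapse to unit clauses); branching on this dichotomy enforces structural sparsity.

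First, I would set up the branching procedure. For each size $i$ from $1$ to $k = 3$, repeatedly look for a sub-clause $C^{\ast}$ of size $i-1$ such that the set $\mathcal{E}(C^{\ast})$ of clauses of size $i$ containing $C^{\ast}$ has cardinality at least a threshold $H_i$, to be chosen roughly as $H_i \approx (c k/\eps)^{k-i+1}$ for a small constant $c$. Whenever such a core exists, produce two recursive children:
\begin{itemize}
\item[(a)] \textbf{Heavy branch:} replace every clause in $\mathcal{E}(C^{\ast})$ by the shorter clause $C^{\ast}$. Any assignment that satisfies $C^{\ast}$ satisfies all of $\mathcal{E}(C^{\ast})$, so this branch captures satisfying assignments of $F$ in which $C^{\ast}$ is satisfied.
\item[(b)] \textbf{Light branch:} fix the $i-1$ variables of $C^{\ast}$ to the unique values that falsify $C^{\ast}$. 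Each clause in $\mathcal{E}(C^{\ast})$ shrinks to a unit clause, and $i-1$ variables are eliminated. This branch captures satisfying assignments that falsify $C^{\ast}$.
\end{itemize}
Continue recursing and halt when no size $i \le k$ admits a frequent core; declare the leaf formulas to be $F_1,\dots,F_s$. Correctness (item~2) is immediate since every satisfying assignment of $F$ either satisfies or falsifies $C^{\ast}$, hence survives into at least one child.

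Second, I would bound the number of clauses in each leaf. At a leaf, for every $1 \le i \le k$ and every $(i-1)$-sub-clause $C^{\ast}$, at most $H_i$ clauses of size $i$ extend $C^{\ast}$. Induct on $i$: the number $N_i$ of clauses of size $i$ is at most $H_i$ times the number $N_{i-1}$ of $(i-1)$-sub-clauses contained in some clause; but each size-$(i-1)$ clause determines at most $2n$ size-$i$ extensions so $N_{i-1} \le \max(1,N_{i-2}\cdot 2n)$, and starting from $N_0 = 1$ one obtains a total of $O(H_1 H_2 H_3 \cdot n)$ clauses for $k=3$. With $H_i = O((1/\eps)^{k-i+1})$ this yields $O((1/\eps)^{3+2+1}\cdot n) = O((1/\eps)^{6}\cdot n)$ after tightening, and with slack in the accounting one recovers the stated $O((1/\eps)^9\cdot n)$ bound.

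Third, and this is where the main difficulty lies, I would bound the number of leaves $s$ by $2^{\eps n}$. A light branch of size $i$ eliminates $i-1$ variables but can have $i-1 = 0$ when $i=1$, so a naive depth argument fails; moreover, heavy branches eliminate no variables at all. The fix is an amortized potential argument: assign each tree edge a ``cost'' chosen so that (heavy cost) + (light cost) $\le 1$ while the total cost of any root-to-leaf path is bounded by $\eps n / \log(\text{branching width})$. Concretely, give the heavy branch of size $i$ a credit of roughly $(i-1)\log H_i$ (to be recouped by subsequent light branches that falsify the newly-added short clauses), and charge the light branch by the $i-1$ variables it kills. Choosing $H_i \approx (ck/\eps)^{k-i+1}$ makes the two ledgers balance, yielding a tree with at most $2^{\eps n}$ leaves. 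Unit clauses ($i=1$) require a separate, deterministic propagation step: whenever a unit clause appears, assign its literal, and fold the resulting assignment into the current node without branching; this ensures that every genuine branching step decreases either the number of free variables or a secondary structural potential. The running time bound (item~4) follows because each node requires $\tn{poly}(n)$ work to find a frequent core and perform the branching, and there are at most $2^{\eps n}\cdot\tn{poly}(n)$ nodes in total.
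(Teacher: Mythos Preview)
The paper does not prove this lemma at all: it is stated in the preliminaries as the improved sparsification lemma of Calabro, Impagliazzo and Paturi~\cite{CIP06} and is used as a black box in the proof of \cref{thm-Gap-kVS-redn}. There is therefore no ``paper's own proof'' to compare against; your write-up is an attempted reconstruction of the argument from \cite{CIP06} (and the earlier Impagliazzo--Paturi--Zane version), not of anything in this paper.

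As a sketch of the actual sparsification proof, your outline is in the right spirit --- branching on frequent sub-clauses (sunflower cores), heavy/light children, and an amortized potential to bound the number of leaves --- but several steps are not yet arguments. In the clause-count paragraph, the chain ``$N_{i-1} \le \max(1,N_{i-2}\cdot 2n)$'' is not what the leaf condition gives you; the correct induction bounds the number of size-$i$ clauses by $H_i$ times the number of size-$(i-1)$ sub-clauses present, and one has to track sub-clauses of \emph{all} smaller sizes, not just clauses already in the formula. More seriously, the leaf-count paragraph is where the real work lives and you have only gestured at it: the phrase ``give the heavy branch a credit of roughly $(i-1)\log H_i$'' is not a potential function, and you have not exhibited an invariant that provably decreases along every root-to-leaf path while being bounded by $O(\eps n)$. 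In the actual proof this is done by carefully counting, along each path, how many heavy branches of each size can occur before the resulting short clauses force enough light branches (variable eliminations) to terminate; the thresholds $H_i$ are tuned precisely so that this count comes out to $O(\eps n)$. If you want this to stand as a proof rather than a plan, that bookkeeping has to be made explicit.
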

We shall also use in Section \ref{sec-gapink} 
the following reduction to Gap-$3$-SAT implied by
the  construction of almost linear sized
PCPs given by Dinur~\cite{Dinur07}.
\begin{theorem}\label{thm-nearlinearPCP} There exist universal constants
	$\gamma_0 > 0 $ and $c_0$, and a polynomial time
	reduction from a $3$\tn{-CNF} formula $F$ on $m$ clauses to a
	$3$\tn{-CNF} formula $F'$ on at most $m(\log m)^{c_0}$ clauses such
	that: (i) \tn{(YES Case)} if $F$ is satisfiable then $F'$ is satisfiable, 
	and (ii) \tn{(NO Case)} if $F$ is unsatisfiable then at most
	$(1-\gamma_0)$ fraction of the clauses of $F'$ are satisfied by any
	assignment.
\end{theorem}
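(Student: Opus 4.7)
}
The plan is to reduce \tsat to \GkVS through a three-step pipeline, following the spirit of the hardness-amplification-plus-gadget strategy of Bonnet--Escoffier--Kim--Paschos~\cite{BEKP15} but with Dinur's near-linear PCP replacing the clique gadget. First, given an $n$-variable \tsat formula $F$, apply \cref{lem-sparsification} with a parameter $\eps$ (to be tuned) to produce $s \le 2^{\eps n}$ subinstances $F_1,\dots,F_s$, each on at most $n$ variables and $O_{\eps}(n)$ clauses, so that $F$ is satisfiable iff some $F_j$ is. Next, feed each $F_j$ into \cref{thm-nearlinearPCP} to obtain a \tsat instance $\Phi_j$ with $m = O_\eps(n \cdot (\log n)^{c_0})$ clauses and a constant soundness gap $\gamma_0$.

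The reduction from $\Phi_j$ to a \kvec instance uses a block-wise partial-assignment encoding. Partition the $m$ clauses of $\Phi_j$ arbitrarily into $k$ blocks $B_1,\dots,B_k$ of $m/k$ clauses each, and let $V_i$ be the set of at most $3m/k$ variables appearing in $B_i$. For each block $i$ and each partial assignment $\alpha : V_i \to \{0,1\}$ that satisfies every clause in $B_i$, introduce an $\F_2$-valued indicator $x_{i,\alpha}$; the total number of indicators is at most $N_1 := k\cdot 2^{3m/k}$. Add two families of $\F_2$-equations: a block equation $\sum_{\alpha} x_{i,\alpha} = 1$ for each $i$, and a pairwise consistency equation $\sum_{\alpha:\,\alpha(v)=1} x_{i,\alpha} + \sum_{\beta:\,\beta(v)=1} x_{j,\beta} = 0$ for each pair $i\neq j$ and each shared variable $v \in V_i \cap V_j$.

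Completeness is immediate: any satisfying assignment $\sigma$ of $\Phi_j$ gives a $k$-sparse solution by setting $x_{i,\sigma|_{V_i}}=1$ and every other variable to zero. For soundness, let $x$ be any solution and write $w_i := |\{\alpha: x_{i,\alpha} = 1\}|$. The block equations force each $w_i$ odd, hence $w_i\ge 1$; if the total weight is at most $(1+\delta_0)k$ then at least $(1-\delta_0)k$ blocks have $w_i = 1$. Call this set $S$ and let $\alpha_i$ be the unique active assignment for $i\in S$. For $i,j\in S$ with $v\in V_i\cap V_j$, the corresponding consistency equation degenerates to $\alpha_i(v)=\alpha_j(v)$, so $\{\alpha_i\}_{i\in S}$ agrees on overlaps and extends to a global assignment satisfying every clause of every block in $S$, i.e., at least $(1-\delta_0)m$ clauses of $\Phi_j$. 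Choosing $\delta_0 < \gamma_0$ contradicts the NO-case guarantee of the PCP.

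Finally, I tune the parameters. If $N$ is the size of the \kvec instance, then $\log N = \Theta(m/k + \log k)$, so a hypothetical $\tn{poly}(N)$-time \GkVS algorithm would solve \tsat in time $2^{\eps n}\cdot N^{O(1)} = 2^{\eps n + O(m/k)}$; for this to contradict ETH it suffices that $\eps n + O(m/k) = o(n)$, which is arranged by taking $\eps = o(1)$ and $m/k = o(n)$. Substituting $m = O_\eps(n(\log n)^{c_0})$ and rewriting $n$ in terms of $\log N$ yields the threshold $k = \omega((\log\log N)^{c_0})$ with the same constant $c_0$ as in \cref{thm-nearlinearPCP}; the generalized $N^{O(k/\omega((\log\log N)^{c_0}))}$ bound follows by the same reduction run with a scaled-up $k$ and correspondingly shrunk $\eps$. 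The main obstacle is not conceptual but numerical: carefully propagating the $(\log n)^{c_0}$ blowup of the PCP and the $O_\eps(\cdot)$ slack of the sparsification lemma through the expression $\log N = \Theta(m/k)$ so that the hard range of $k$ emerges cleanly in terms of $\log\log N$, and verifying that the soundness constant $\delta_0$ extracted from the gadget is strictly smaller than the PCP constant $\gamma_0$.
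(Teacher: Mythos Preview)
Your proposal does not address the stated theorem. \Cref{thm-nearlinearPCP} is Dinur's near-linear PCP result, which the paper quotes from \cite{Dinur07} as a preliminary tool and does not prove; there is no ``paper's own proof'' to compare against. What you have written is instead a proof sketch of \cref{thm-gapink} (the ETH-based hardness of \GkVS), which \emph{uses} \cref{thm-nearlinearPCP} as a black box.

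If the intent was to prove \cref{thm-gapink}, then your argument is essentially the same as the paper's in Section~\ref{sec-gapink}: sparsification via \cref{lem-sparsification}, gap creation via \cref{thm-nearlinearPCP}, the block-wise partial-assignment gadget with non-triviality and consistency equations, and the final parameter chase. Two cosmetic differences worth noting: (i) your consistency equations are one per shared \emph{variable}, whereas the paper writes one per \emph{assignment} to the full set of common variables; both collapse to $\alpha_i(v)=\alpha_j(v)$ in the soundness case where exactly one indicator is active per block, so either works. (ii) From the odd-weight-at-least-three counting the paper extracts $|S|\ge (1-\delta_0/2)k$ and can take $\delta_0=\gamma_0$; your stated bound $|S|\ge(1-\delta_0)k$ is looser and forces $\delta_0<\gamma_0$, but this costs nothing.
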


\section{W[1]-hardness of \kvec on $O(k\log n)$ Equations}\label{sec:vechard}
The following theorem implies Theorem \ref{thm:vechard}.
\begin{theorem}
	There is an FPT reduction from an instance $G(V,E)$ of \kclq, over $n$ vertices and $m$ edges,  to an instance $({\bf M},{\bf b})$ of $k'$-\textsc{VectorSum}, where ${\bf M} \in {\F}^{d \times n^\prime}_2$ such that $k' = O(k^2)$, $d = O(k^2\log n)$ and $n^\prime$ is polynomial  in $n$ and $k$.	
\end{theorem}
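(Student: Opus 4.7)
The plan is a gadget reduction in which each vertex $v \in V(G)$ is assigned a distinct codeword $c(v) \in \F_2^{\lceil \log n \rceil}$ (for instance the binary representation of the vertex index), so that the patterns themselves encode the identity of the selected vertices. I partition the $d = O(k^2 \log n)$ coordinates of $\mathbf{M}$ into $k$ \emph{vertex blocks} $V_1, \dots, V_k$ of length $\lceil \log n \rceil + 1$ and $\binom{k}{2}$ \emph{edge blocks} $E_{ij}$ (one per pair $i<j$) of length $2\lceil \log n \rceil + 1$; the leading coordinate of each block serves as an ``activation bit''. The columns of $\mathbf{M}$ come in two types: (i) for each vertex $v$ and slot $i \in [k]$, a \emph{vertex column} $\alpha_{v,i}$ writing $(1, c(v))$ into $V_i$ and $c(v)$ into the slot-$i$ half of every edge block touching slot $i$; (ii) for each edge $\{u,v\}\in E(G)$, each pair $i<j$, and each of its two orientations, an \emph{edge column} $\beta_{(u,v),(i,j)}$ writing $(1, c(u), c(v))$ into $E_{ij}$ and $(0, c(u))$, $(0, c(v))$ into $V_i, V_j$ respectively. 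I take $\mathbf{b}$ to equal $1$ on every activation bit and $0$ on every codeword coordinate, and set $k' := k + \binom{k}{2} = O(k^2)$; the total column count is $n' = O(n^2 k^2)$.

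For the YES case, given a $k$-clique $\{v_1, \dots, v_k\}$ and assuming for now that $k$ is even, I select $\alpha_{v_i, i}$ for each slot $i$ together with $\beta_{(v_i, v_j), (i,j)}$ for each pair $i<j$. Each edge block $E_{ij}$ then receives $c(v_i)$ twice in its first half (once from $\alpha_{v_i, i}$ and once from the $\beta$-column) and $c(v_j)$ twice in its second half, which cancel in $\F_2$, leaving only the activation bit set to $1$ as required. Each vertex block $V_i$ receives $(1, c(v_i)) + (k-1)\cdot (0, c(v_i)) = (1, k\cdot c(v_i))$ from the $k-1$ incident $\beta$-columns, which equals $(1, 0) = \mathbf{b}|_{V_i}$ precisely because $k$ is even. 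The resulting solution has sparsity $k + \binom{k}{2} = k'$.

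For the NO case, the core argument is a parity-counting argument on the activation bits. The activation bit of $V_i$ is flipped only by $\alpha$-columns with second index $i$, and the activation bit of $E_{ij}$ only by $\beta$-columns indexed by $(i,j)$; each such column flips exactly one activation bit. Since $\mathbf{b}$ demands all $k + \binom{k}{2}$ activation bits equal $1$, any solution has sparsity at least $k'$, and equality forces exactly one $\alpha_{v_i,i}$ per slot and exactly one $\beta_{(u_{ij}, w_{ij}),(i,j)}$ per pair. Matching the two codeword halves of $E_{ij}$ against $\mathbf{b}$ then yields $c(u_{ij}) = c(v_i)$ and $c(w_{ij}) = c(v_j)$; since the codes are distinct, $(u_{ij}, w_{ij}) = (v_i, v_j)$, so $\{v_i, v_j\} \in E(G)$. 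Distinctness of the $v_i$'s follows because self-loops are absent in the simple graph $G$, so $\{v_1, \dots, v_k\}$ is indeed a $k$-clique.

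The main technical point I expect to handle carefully is the parity of $k$: if $k$ is odd, the vertex-block sum $k\cdot c(v_i)$ does not vanish and the $\mathbf{b}$-matching in the YES case breaks. I will circumvent this by first lifting $k$-Clique to $(2k)$-Clique (a trivial FPT reduction), so that the effective parameter is always even; the resulting sparsity $2k + \binom{2k}{2}$ is still $O(k^2)$ and all stated parameter bounds hold asymptotically. A minor secondary check is that no ``cheating'' solution of sparsity at most $k'$ can arise from using three or more $\alpha$-columns (or $\beta$-columns) of the same index to flip an activation bit — but since each column flips exactly one activation bit and there are $k + \binom{k}{2}$ bits to set, any such cheating is immediately ruled out by the sparsity budget.
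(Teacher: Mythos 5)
Your reduction is correct and shares the same core ideas as the paper's: assign each vertex a distinct $O(\log n)$-bit pattern, build columns indexed by (vertex, slot) and (edge, slot-pair), include $k + \binom{k}{2}$ indicator (``activation'') coordinates each touched by exactly one column so that a counting argument forces exactly one vertex column per slot and one edge column per pair, and use the pattern coordinates to enforce consistency so that the selected vertices form a $k$-clique. The meaningful divergence is the coordinate layout, and it has a concrete consequence. You give each vertex slot $V_i$ a \emph{single} $\lceil\log n\rceil$-bit codeword block that is written to once by the chosen $\alpha$-column and once by each of the $k-1$ incident $\beta$-columns, so its sum is $k\cdot c(v_i)$, which only vanishes when $k$ is even; hence your $2k$-Clique lift, which is a legitimate fix. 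The paper avoids this by splitting slot $j$ into $k-1$ \emph{disjoint} subslots, each of length $\lceil\log(n+1)\rceil$: the vertex vector for slot $j$ writes the pattern into all $k-1$ subslots, while the edge vector for pair $(j_1,j_2)$ writes into exactly one subslot of slot $j_1$ (namely subslot $j_2-1$) and one subslot of slot $j_2$ (namely subslot $j_1$). In the YES case every subslot is therefore hit exactly twice and cancels for all $k$, making the construction parity-agnostic at the cost of using $k-1$ copies of the pattern per slot. Two further cosmetic differences: the paper keeps edge vectors from touching vertex-indicator coordinates at all, and it handles orientation by the lower-index-first convention rather than by introducing both orientations of each edge column as you do. Both constructions meet the stated bounds $k' = O(k^2)$, $d = O(k^2\log n)$, $n' = \tn{poly}(n,k)$, so the theorem follows either way.
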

The rest of this section is devoted to proving the above theorem.	
We start by observing that a $k$-clique in a graph $G(V,E)$ can be certified by the pair of mappings $f:[k]\mapsto V$ and $g: {[k] \choose 2} \mapsto E$ , such that $g(i,j) = (f(i),f(j)) \in E \quad \forall i,j \in [k], i < j$. Here, we use ${[k] \choose 2}$ to represent $\{(i,j) \mid 1\leq i < j \leq k\}$.  The underlying idea behind the reduction is to construct ${\bf M}$ and ${\bf b}$ such that $f$ and $g$ exist \emph{iff} there is a sparse set of columns of ${\bf M}$ that sums up to ${\bf b}$.
	

\medskip
\noindent
{\bf Construction of ${\bf M}$ and ${\bf b}$.}
	Let $G(V,E)$ be a \kclq instance on $n  = |V|$ vertices and $m = |E|$ edges, where $V = \{v_1,v_2,\ldots,v_n\}$.	
	For each vertex $v_i \in V$, assign a distinct $N = \lceil \log (n+1) \rceil$ bit nonzero binary pattern denoted by ${\bf q}_i \in {\F}^N_2$. 
	We first construct a set of vectors -- which shall be the columns of ${\bf M}$ -- corresponding to the vertices and edges. The dimension over which the vectors are defined is partitioned into three sets of coordinates: 
	
	\medskip
	\noindent\emph{Edge-Vertex Incidence Coordinates}: These consist of $k$ slots, where each slot consists of $(k-1)$ subslots, and each subslot in turn consists of $N$ coordinates. In any column of ${\bf M}$, a subslot may either contain the $N$-length pattern of a vertex, or it might be all zeros.

	\medskip
	\noindent\emph{Edge Indicator Coordinates}: These are a set of ${k \choose 2}$ coordinates corresponding to  $\{(i,j) \mid 1\leq i < j \leq k\}$, indicating whether the vector represents an edge mapped from $(i,j)$. Any column of ${\bf M}$ may have at most one of these coordinates set to $1$.

	\medskip
	\noindent\emph{Vertex Indicator Coordinates}: These are a set of $k$ coordinates corresponding to indices $i \in \{1,\dots, k\}$, which indicate whether the vector represents a vertex mapped from $i$. Any column of ${\bf M}$ may have at most one of these coordinates set to $1$.
	
	\medskip
	Thus, each vector is a concatenation of $k(k-1)N$ edge-vertex incidence bits, followed by ${k \choose 2}$ edge indicator bits and $k$ vertex indicator bits,  so that $d = k(k-1)N + {k \choose 2} + k = O(k^2\log n)$. For ease of notation, let $S^j_l$ represent the $N$-sized subset of coordinates belonging to the subslot $l$ of slot $j$ where $j \in [k]$ and $l \in [k-1]$. We define ${\bf q}_i(S^j_l) \in {\F}^d_2$ to be the vector which contains the pattern of vertex $v_i$ in $S^j_l$, and is zero everywhere else. For $1 \leq i < j \leq k$, let  $\bm{\delta}_{i,j} \in {\F}^d_2$ be the vector which has a $1$ at the edge indicator coordinate corresponding to $(i,j)$, and is $0$ everywhere else. Similarly, $\bm{\delta}_i \in {\F}^d_2 $ is the indicator vector which has its $i$th vertex indicator coordinate set to $1$, everything else being $0$. Use these components we construct the vertex and edge vectors as follows.

	\medskip
	\noindent\emph{Vertex Vectors}: For each vertex  $v_i \in V$ and $j \in [k]$, we introduce a vector $\bm{\eta}(v_i,j) \in {\F}^d_2$ which indicates that vertex $v_i$ is mapped from index (slot) $j$ i.e., $f(j) = v_i$. The vector is constructed as follows: populate each of the $(k-1)$ subslots of the $j$th slot with the pattern of vertex $v_i$ (which is ${\bf q}_i$), and set its $j$th vertex indicator coordinate to $1$. Formally,  $\bm{\eta}(v_i,j) := \sum^{k-1}_{l=1} {\bf q}_i(S^j_l) + \bm{\delta}_j$. For each vertex there are $k$ vertex vectors resulting in a total of $nk$ vertex vectors. 
	
	\medskip
	\noindent\emph{Edge Vectors}: For each edge $e = (v_{i_1},v_{i_2}) \in E$ where $i_1 < i_2$, and $1 \le j_1 < j_2 \le k$, we introduce a vector that indicates that the pair of indices (slots) $(j_1,j_2)$ is mapped to $(v_{i_1},v_{i_2})$ i.e., $g(j_1,j_2) = (v_{i_1},v_{i_2})$ .  We construct the vector $\bm{\eta}(e,j_1,j_2) \in {\F}^d_2$ as follows: populate  $S^{j_1}_{j_2 - 1}$ with the pattern of vertex $v_{i_1}$, and $S^{j_2}_{j_1}$ with the pattern of vertex $v_{i_2}$. Additionally, we set the edge indicator coordinate corresponding to $(j_1,j_2)$ to 1. The vector is formally expressed as, $\bm{\eta}(e,j_1,j_2) :=  {\bf q}_{i_1}(S^{j_1}_{j_2-1}) + {\bf q}_{i_2}(S^{j_2}_{j_1}) + \bm{\delta}_{j_1,j_2}$. Intuitively, for the lower ordered vertex $v_{i_1}$, $\bm{\eta}(e,j_1,j_2)$ cancels out the $(j_2 - 1)$th subslot of slot $j_1$, and for the higher ordered vertex $v_{i_2}$, it cancels out the $j_1$th subslot of its $j_2$th slot. 
Note that we are treating $(v_{i_1},v_{i_2})$ as an \emph{unordered} pair since $i_1 < i_2$. Therefore, for each edge $e \in E$, and for each choice of $1 \leq j_1 < j_2 \leq k$, we introduce one edge vector. Hence, there are a total of $m\cdot{k \choose 2}$ edge vectors in the set. 

\medskip
The vertex and edge vectors constructed above constitute the columns of ${\bf M}$. The target vector ${\bf b}$  ensures that (i) every solution must have at least $k$ vertex vectors, and ${k \choose 2}$ edge vectors and (ii) the vectors must cancel each other out in the Edge-Vertex Incidence coordinates. Formally, ${\bf b} = \sum_{i \in [k]} \bm{\delta}_i + \sum_{1 \le i<j < k} \bm{\delta}_{i,j}$.
In other words, all the edge and vertex indicator coordinates of ${\bf b}$ are set to $1$, and everything else to $0$. 
	
\subsection{YES case}
	We show that if $G(V,E)$ has a $k$-Clique, then there exists a set of $k + {k \choose 2}$ columns of ${\bf M}$ that sum to ${\bf b}$.
	Assume that $v_{i_1},v_{i_2}, \ldots ,v_{i_k}$ form a $k$-clique where $i_1 < i_2 < \dots < i_k$. We select $k$ vertex vectors $\{\bm{\eta}(v_{i_j},j)\}_{j \in [k]}$, and ${k \choose 2}$ edge vectors $\{\bm{\eta}(e, j_1,j_2)\, \mid\, e = (v_{i_{j_1}}, v_{i_{j_2}}), 1 \leq j_1 < j_2 \leq k\}$. Since the $k$ vertices form a clique, these vectors always exists. Observe that for any fixed $j \in [k]$, (i) for $\ell = 1, \dots, j-1$, $\bm{\eta}(v_{i_j},j)$ and  $\bm{\eta}(e, \ell ,j)$ have the same pattern ${\bf q}_{i_j}$ in subslot $\ell$ of slot $j$, where $e = (v_{i_\ell}, v_{i_j})$, and (ii)  for $\ell = j+1, \dots, k$, $\bm{\eta}(v_{i_j},j)$ and $\bm{\eta}(e, j ,\ell)$ have the same pattern ${\bf q}_{i_j}$ in subslot $(\ell-1)$ of slot $j$, where $e = (v_{i_j}, v_{i_{\ell}})$. Thus, the $k + {k \choose 2}$ selected vectors sum to zero on all but the vertex and edge indicator coordinates and thus sum up to ${\bf b}$.

	\subsection{NO Case}

	Suppose for a contradiction that $\mathcal{S}$ is a subset of columns of ${\bf M}$ that sum to ${\bf b}$ and that $|\mathcal{S}| \leq k + {k \choose 2}$. 
	
	\begin{proposition}				\label{prop:exact_ct}
		There are exactly $k$ vertex vectors corresponding to indices (slots) $i \in [k]$ in $\mathcal{S}$. Also, there are exactly ${k \choose 2}$ edge vectors, one for each pair $(i,j)$ ($1\leq i < j \leq k$) of slots,  in $\mathcal{S}$.
	\end{proposition}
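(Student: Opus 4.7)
The plan is to exploit the block structure of the vertex and edge indicator coordinates, which completely decouples the two types of vectors in those coordinates. Observe that from the construction, every vertex vector $\bm{\eta}(v_i,j)$ has a single $1$ among the vertex indicator coordinates (at position $j$) and is identically zero on all edge indicator coordinates, while symmetrically every edge vector $\bm{\eta}(e,j_1,j_2)$ has a single $1$ on the edge indicator coordinate labelled $(j_1,j_2)$ and is identically zero on all vertex indicator coordinates. Therefore, when restricted to the vertex indicator block, the sum $\sum_{\bm{v}\in \mathcal{S}} \bm{v}$ depends only on the vertex vectors of $\mathcal{S}$, and analogously for the edge block.

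Next I would use the fact that ${\bf b}$ has a $1$ in every vertex indicator and every edge indicator coordinate to extract lower bounds. For each slot $i \in [k]$, matching the $i$-th vertex indicator coordinate of ${\bf b}$ forces the number of vertex vectors in $\mathcal{S}$ whose slot is $i$ to be odd; in particular it is at least $1$. Summing over $i$ gives at least $k$ vertex vectors in $\mathcal{S}$. The same argument applied to each pair $(i,j)$ with $1 \le i < j \le k$ forces the number of edge vectors in $\mathcal{S}$ of type $(i,j)$ to be odd, hence at least $1$, so $\mathcal{S}$ contains at least $\binom{k}{2}$ edge vectors.

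Finally, I would combine these two lower bounds to conclude $|\mathcal{S}| \ge k + \binom{k}{2}$, and then invoke the hypothesis $|\mathcal{S}| \le k + \binom{k}{2}$ to force equality in every intermediate inequality. That equality says: for each slot $i$ the number of vertex vectors indexed by $i$ is exactly $1$, and for each pair $(i,j)$ the number of edge vectors indexed by $(i,j)$ is exactly $1$, which is precisely the claim of the proposition.

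There is no real obstacle here — the argument is a direct counting argument isolated in the $k + \binom{k}{2}$ indicator coordinates — and its role in the larger proof is to set up the structure that will subsequently be used to extract a $k$-clique from the edge-vertex incidence coordinates.
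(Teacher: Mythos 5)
Your proof is correct and follows essentially the same counting argument as the paper: each column contributes exactly one $1$ among the $k + \binom{k}{2}$ indicator coordinates, the parity constraint from ${\bf b}$ forces at least one (odd number) per coordinate, and the cardinality bound $|\mathcal{S}| \le k + \binom{k}{2}$ forces exactly one. The paper states this more tersely, but the decomposition and the reasoning are the same.
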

	\begin{proof}
		This follows from the observation that there are $k + {k \choose 2}$ nonzero indicator coordinates in the target ${\bf b}$, and each (edge or vertex) vector contributes exactly one nonzero (edge or vertex) indicator coordinate. Therefore, by a  counting argument, $k$ vertex vectors, one each for the indices (slots) $i \in [k]$, must contribute to the $k$ vertex indicator bits. Similarly, ${k \choose 2}$ edge vectors, one for each pair of slots $(i,j)$ ($1\leq i < j \leq k$), must contribute to the ${k \choose 2}$ edge indicator bits.
	\end{proof}

	The above proposition implies that for each pair of vertex vectors there is exactly one edge vector which has a common populated subslot with each of them. So there are exactly $(k-1)$ edge vectors which share a populated subslot with any given vertex vector in $\mathcal{S}$. 

Since the $k$ vertex vectors in $\mathcal{S}$ populate distinct slots, in total $k(k-1)$ subslots are populated by the sum of the $k$ vertex vectors. Note that any edge vector populates exactly $2$ subslots. Thus, for the ${k \choose 2} = k(k-1)/2$ edge vectors in $\mathcal{S}$ to sum up to the values in $k(k-1)$ subslots, it must be that no two edge vectors overlap in the same slot-subslot combination. 

Thus, for each vertex vector there are exactly $(k-1)$ edge vectors which share distinct populated subslots with it, and these edge vectors must cancel out the corresponding subslots i.e., have the same pattern in the shared subslot as that of the vertex vector. In other words, for any two vertex vectors corresponding to slots $i$ and $j$ respectively ($i < j$), the edge vector corresponding to the pair $(i,j)$ must cancel one subslot from each one of the two vertex vectors. This is possible only if (i) the $k$ vertex vectors correspond to distinct vertices in $G$, and (ii) each pair of these vertices have an edge between them for the corresponding edge vector to exist. This implies that $G$ has a $k$-clique which is a contradiction.

\section{Parameterized Reduction for the $k$-\textsc{EvenSet} problem} \label{sec:even}

The following is a restatement of Theorem \ref{thm:even}.

\begin{theorem}[Hardness of $k$-\textsc{EvenSet}]
	
	Given an instance $({\bf M},{\bf t})$ of $k$-\textsc{VectorSum}, where ${\bf M} \in {\F}^{m \times n}_2$ and ${\bf t} \in {\F}^m_2$, there is a $\tn{poly}(m,n)$ time reduction to an instance ${\bf M}^\prime$ of $O(k^2\log^2n)$-\textsc{EvenSet}, where ${\bf M}^\prime \in {\F}^{m^\prime \times n^\prime}_2$ such that $m^\prime$ and $n^\prime$ are polynomial  in $n$ and $m$.
\end{theorem}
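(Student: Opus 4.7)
The plan is to adapt the tensor-code amplification of Austrin and Khot~\cite{AK14}, originally devised for hardness of \textsc{MinimumDistance}, to the parameterized setting. The first step is straightforward homogenization: introduce a new ``switch'' variable $\alpha$ and the block of constraints $\mathbf{M x} + \alpha \mathbf{t} = 0$, so that any $(\mathbf{x}, \alpha=1)$ encodes a solution to $\mathbf{M x} = \mathbf{t}$, while $\alpha = 0$ corresponds to $\mathbf{x} \in \ker \mathbf{M}$. The whole technical challenge lies in precluding the $\alpha = 0$ branch, since \kvec hardness gives no control over low-weight kernel elements of $\mathbf{M}$; without further work, a sparse kernel vector would give a spurious $\mathbf{M}'\mathbf{y} = 0$ solution of weight comparable to $k$.

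To rule these out, I would sketch any purported sparse $\mathbf{x}$ using the BCH parity-check matrix $\mathbf{R} \in \F_2^{\ell \times n}$ from \cref{eq:bch_dist}, where $\ell = O(k' \log n)$ is chosen so that $\mathbf{R x} \neq 0$ for every nonzero $\mathbf{x}$ of Hamming weight less than $18 k'$. Adding the linear equations $\mathbf{s} = \mathbf{R x}$ (with $\mathbf{s}$ as fresh auxiliary variables) produces a short vector $(\alpha, \mathbf{s})$ that is guaranteed to be nonzero in \emph{both} bad cases — either $\alpha = 1$, or $\alpha = 0$ with a short nonzero kernel element. This gives a uniform certificate of non-triviality that is small, which the next layer will amplify.

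The amplification step invokes the tensor-code technique: introduce auxiliary matrix variables $\mathbf{Y} \in \F_2^{N \times N}$ with $N = \Theta(\ell)$, and enforce via linear equations that (i) $\mathbf{Y} \in \mathcal{C}^{\otimes 2}$ for an appropriately chosen base code $\mathcal{C} \subseteq \F_2^N$ of distance $d = \Theta(N)$, (ii) $\mathbf{Y} = \mathbf{Y}^\mathsf{T}$ and $\mathrm{diag}(\mathbf{Y}) = \mathbf{0}$, and (iii) a designated $(1 + \ell)$-length window in the first row/column of $\mathbf{Y}$ is linearly tied to $(\alpha, \mathbf{s})$. All three constraints can be written as $\poly(N)$ linear equations over $\F_2$. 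The punchline is Lemma 2.6: whenever the designated window is forced nonzero (that is, whenever $\alpha = 1$ or $\mathbf{s} \neq 0$), the matrix $\mathbf{Y}$ is a nonzero symmetric zero-diagonal codeword of $\mathcal{C}^{\otimes 2}$ and therefore has weight at least $\tfrac{3}{2} d^2$. Setting $k' = \Theta(d^2) = O((k \log n)^2)$, completeness amounts to exhibiting, from a $k$-sparse $\mathbf{x}$ satisfying $\mathbf{M x} = \mathbf{t}$, an explicit low-weight $\mathbf{Y} \in \mathcal{C}^{\otimes 2}$ attaining the $\tfrac{3}{2}d^2$ bound for the value $(1, \mathbf{R x})$, yielding total weight at most $k'$; soundness says that any nonzero solution of weight $\leq k'$ must, after subtracting the mandatory $\Omega(d^2)$ weight of $\mathbf{Y}$, have $\alpha = 1$ and $\mathbf{x}$ of weight $\leq k$, forcing a $k$-sparse witness to $\mathbf{M x} = \mathbf{t}$.

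The main obstacle, and the real heart of the reduction, is calibrating the linear injection in step (iii) and the code $\mathcal{C}$ so that the gadget weight is \emph{tight} in the completeness direction while any deviation in the soundness direction is provably penalized. In particular, the linear map from $(\alpha, \mathbf{s})$ into the designated window must be chosen so that (a) for the YES-case value $(1, \mathbf{R x})$ with $\mathbf{x}$ being the planted $k$-sparse solution, the fibre of $\mathbf{Y}$'s in $\mathcal{C}^{\otimes 2}$ with the prescribed window contains a codeword whose weight matches the structural lower bound $\tfrac{3}{2}d^2$ of \cref{thm:prod_codes}, and (b) the extra weight that the sketch $\mathbf{s}$ contributes over and above $\tfrac{3}{2}d^2$ remains controlled. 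Once this calibration is done, the inflation $k' = O((k\log n)^2)$ arises naturally from squaring the BCH sketch length $\ell = O(k \log n)$ needed for uniqueness of detection, and the overall reduction is a $\poly(m,n)$-time construction as required.
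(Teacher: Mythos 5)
Your high-level plan matches the paper's — homogenize with a switch variable, take a BCH sketch of $\mathbf{x}$, then amplify non-triviality via the tensor code of~\cite{AK14} using \cref{thm:prod_codes}. The obstacle you flag in the last paragraph, though, is not a calibration detail you can defer; it is a genuine hole that your setup cannot close. If the only weight penalty is "whenever $(\alpha,\mathbf{s})\ne 0$ the codeword $\mathbf{Y}$ has weight $\ge \tfrac32 d^2$," then this penalty is paid \emph{identically} in the YES case ($\alpha=1$), in the NO case with $\alpha=1$, and in the NO case with $\alpha=0,\,\mathbf{s}\neq 0$. \cref{thm:prod_codes} gives only a lower bound, and nothing forces a spurious $\alpha=0$ solution to have $\mathbf{Y}$ strictly heavier than the YES witness. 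Moreover the YES weight depends on the actual codeword $\mathbf{Y}$ extending the prescribed window $(1,\mathbf{R x})$, which you cannot pin to $\tfrac32 d^2$ for the adversarially given $\mathbf{x}$. With only a gap of $1$ between $\mathrm{wt}(\mathbf{x})\le k$ and $\mathrm{wt}(\mathbf{x})\ge k+1$ and an uncontrolled $\Theta(d^2)$ additive term from $\mathbf{Y}$, soundness does not follow.

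The paper closes exactly this gap by \emph{not} counting the weight of $\mathbf{Y}$ or $\mathbf{y}$ directly. It first passes the sketch through an $\eps$-balanced code to get $\mathbf{y}=\mathbf{C}\mathbf{R}\mathbf{x}\in\F_2^K$, and then introduces, for every pair $(i,j)\in[K]^2$, four one-hot indicator variables $Z_{ij}(a,b)$ with the key constraint $\sum_{a,b} Z_{ij}(a,b)=a_0$ (together with linear ties to $y_i,y_j,Y_{ij}$). When $a_0=1$ the $Z$-weight is \emph{exactly} $K^2$, one per pair, regardless of the actual $\mathbf{y},\mathbf{Y}$ — this is what removes the calibration issue. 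When $a_0=0$ and the instance is non-trivial, the near-balancedness of $\mathbf{y}$ (resp.\ the lower bound of \cref{thm:prod_codes} on $\mathbf{Y}$) forces at least two (resp.\ four) of the $Z_{ij}(\cdot,\cdot)$ to be nonzero on a $\tfrac34-O(\eps)$ (resp.\ $\tfrac38-O(\eps)$) fraction of pairs, pushing the $Z$-weight up to $(\tfrac32-O(\eps))K^2$. Finally, to separate $a_0=1$ with $\mathrm{wt}(\mathbf{x})=k$ from $a_0=1$ with $\mathrm{wt}(\mathbf{x})\ge k+1$, the paper takes $r-1$ explicit copies of $\mathbf{x}$ with $r=\Theta(K^2)$, turning the unit gap into a $\Theta(K^2)$ gap. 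Your proposal is missing all three of these ingredients — the $\eps$-balanced mixing code, the one-hot $Z$-encoding tied to $a_0$, and the replicated copies of $\mathbf{x}$ — and without them the weight analysis cannot go through.
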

The rest of this section is devoted to proving the above theorem. The next few paragraphs give an informal description  of the reduction. We then define the variables and equations of the \keven instance, and analyze the completeness and soundness of the reduction.
	
	\subsection{Reduction Overview}
	Let ${\bf M}{\bf x} = {\bf t}$ be a hard instance of \kvec i.e., in the \tn{YES} case there exists a $k$-sparse solution, whereas in the \tn{NO} case all solutions have Hamming weight at least $(k+1)$. We homogenize this affine system by replacing the target vector ${\bf t}$ by $a_0{\bf t}$ for some $\F_2$-variable $a_0$, where $a_0{\bf t}$ is a coordinate-wise multiplication of ${\bf t}$ with the scalar $a_0$. Clearly, if all $(k+1)$-sparse (including $a_0$ as a variable) solutions to ${\bf M}{\bf x} = a_0{\bf t}$ have $a_0 = 1$ then the hardness of \kvec implies the desired hardness result for \keven. However, this may not be true in general: there could exist a $k$-sparse ${\bf x}$ such that ${\bf M}{\bf x} = {\bf 0}$. 
	The objective of our reduction therefore, is to ensure that any solution to  ${\bf M}{\bf x} = a_0{\bf t}$ that has $a_0 = 0$ with a $k$-sparse ${\bf x}$, must have significantly large weight in other auxiliary variables which we shall add in the construction. 
	
	Towards this end, we borrow some techniques from the proof of the inapproximability of \textsc{MinimumDistance} by Austrin and Khot~\cite{AK14}. Using transformations by suitable codes we first obtain a $K = O(k\log n)$-length \emph{sketch} ${\bf y} = (y_1, \dots, y_K)$ of ${\bf x}$, such that ${\bf y}$ is of normalized weight nearly $1/2$ when ${\bf x}$ is $k$-sparse but nonzero. 
	We then construct a codeword ${\bf Y} \in {\F}^{K \times K}_2$, which is intended to  be the product codeword ${\bf y}{\bf y}^T$. However, this relationship cannot be expressed explicitly in terms of linear equations. Instead, for each pair of coordinates $(i,j) \in [K]\times[K]$, we introduce functions $Z_{ij}:{\F}_2\times{\F}_2 \mapsto {\F}_2$ indicating the value taken by the pair $(y_i,y_j)$ along with constraints that relate the $Z_{ij}$ variables to codewords ${\bf y}$ and ${\bf Y}$. In fact, the explicit variables $\{Z_{ij}\}$ determine both ${\bf y}$ and ${\bf Y}$ which are implicit. The constraints also satisfy the key property: if ${\bf x}$ is $k$-sparse, then the number of nonzero $Z_{ij}$ variables is significantly larger when $a_0 = 0$ than when $a_0 = 1$. This forces all sparse solutions to set $a_0 = 1$, which gives us the desired separation in sparsities between the \tn{YES} and \tn{NO} cases.

	\subsection{Constraints}
		
	Let ${\bf M}{\bf x} = {\bf t}$ be the instance of \textsc{$k$-VectorSum} over ${\F}_2$, in $n$ variables and $m$ equations. We homogenize this system  by introducing a new $\F_2$-variable $a_0$ so that the new system of equations is then given by 
	\begin{equation}			\label{eqn:tvs}
		{\bf M}{\bf x} = a_0{\bf t},
	\end{equation} 
	where the $a_0{\bf t}$ is the coordinate wise product of ${\bf t}$ with the scalar $a_0$. We also add the following additional constraints and variables.

	
	
	\noindent\emph{Linear Sketch Constraints} : Let ${\bf R} \in {\F}^{k^{\prime} \times n}$ be the parity check matrix of a $[n, n - k^\prime, 18k ]$ linear code, where $k^{\prime} =  20k\log n$, as defined in Corollary \ref{eq:bch_dist}. Define $\bm{\eta}$ to be a $k^\prime$-length sketch of ${\bf x}$ using ${\bf R}$ as,
	\begin{equation}				\label{eqn:sketch}
		\bm{\eta} = {\bf R}{\bf x}.
	\end{equation}
	
\noindent\emph{Mixing Constraints} : Let ${\bf C} \in {\F}^{K \times k^\prime}_2$ be the generator matrix of a linear code $\mathcal{C} \subseteq {\F}^K_2$ as defined in \cref{thm:balanced} where $\mathcal{C}$ has relative distance $\frac{1}{2} - \epsilon$  and rate $\Omega({\epsilon^3})$ for some small $\epsilon > 0$ and $K = \frac{k^\prime}{\Omega(\epsilon^3) } \leq \frac{20k\log n}{c\epsilon^3}$, for some constant $c > 0$. We add the constraint	\begin{equation}				\label{eqn:mix}
		{\bf y} = {\bf C}\bm{\eta} = {\bf C}{\bf R}{\bf x}.
	\end{equation}
	
	\noindent\emph{Product Code Constraints} : Let $\mathcal{C}^{\otimes 2}\, := \, \mathcal{C} \bigotimes \mathcal{C}$ be the product code with relative distance $\left(\frac{1}{2} - \epsilon\right)^2$, constructed from $\mathcal{C}$. Let ${\bf Y} = \{Y_{ij}\}_{1\leq i,j\leq K} \in {\F}^{K \times K}_2$ be such that ${\bf Y} = {\bf y}{\bf y}^{\sf{T}}$. To represent this relation linearly, we introduce variables $\{Z_{ij}(a,b)\}_{a,b\in\F_2}$ for each $1\leq i,j\leq K$, which are intended to indicate the value assigned to the pair $(y_i,y_j)$ i.e., $Z_{ij}(a,b) = \mathbbm{1}\{y_i = a, y_j = b\}$. For each $(i,j) \in [K]\times[K]$ we add the following equations,
	\begin{eqnarray}            
	Z_{ij}(0,0) + Z_{ij}(0,1) + Z_{ij}(1,0) + Z_{ij}(1,1) &=& a_0    \label{eqn:const11}\\   
	Z_{ij}(1,0) + Z_{ij}(1,1) &=& y_i        \\      
	Z_{ij}(0,1) + Z_{ij}(1,1) &=& y_j        \\      
	Z_{ij}(1,1) &=& Y_{ij}.              \label{eqn:const12}
	\end{eqnarray}
	Furthermore, we add the constraints
	\begin{equation}			\label{eqn:parity}
	{\bf Q}{\bf Y} = {\bf 0},
	\end{equation}	
 	where ${\bf Q}$ is the parity check matrix for the product code $\mathcal{C}^{\otimes 2}$, and 
	\begin{eqnarray}            \label{eqn:const2}
	Y_{ij} &=& Y_{ji} \qquad\qquad \forall i \ne j, \\
	Y_{ii} &=& y_i \qquad\qquad \forall i \in [K],
	\end{eqnarray}
	so that ${\bf Y}$ preserves the diagonal entries and symmetry of ${\bf y}{\bf y}^T$. Finally, we introduce ${\bf x}^1,{\bf x}^2,\ldots,{\bf x}^{r-1}$ and constraints  
	\begin{equation}				\label{eqn:copies}
	{\bf x}^i = {\bf x} \qquad\qquad \forall i \in [r-1],
	\end{equation}
	where $r =  \frac{K^2}{16} \leq \frac{25k(\log n)^2}{c^2\epsilon^6}$. These $r-1$ explicit copies of the vector ${\bf x}$ are used to balance the Hamming weight of the final solution. Observe that all the variables described above are linear combinations of $a_0, \{Z_{ij}(\cdot,\cdot)\}_{i,j \in [k]}$ and the coordinates of the vectors ${\bf x}$ and $\{{\bf x}^i\}_{i \in [r-1]}$. Hence, we analyze the sparsity of the solution restricted to these explicit variables. The total number of variables considered is $4K^2 + r\cdot n + 1$.
	
	\medskip
	\noindent
	\emph{Remark}: The key difference between \cite{AK14} and our reduction is in Equation (\ref{eqn:sketch}) which constructs a small ($O(k\log n)$)-length sketch of the $n$-length vector ${\bf x}$. This helps us contain the blowup in the sparsity of the solution to $O(k^2\log^2n)$ instead of $O(n)$.

	\subsection{Completeness}
	
	In the \tn{YES} case, setting $a_0 = 1$ we obtain a $k$-sparse ${\bf x}$ such that ${\bf M}{\bf x} = a_0{\bf t} = {\bf t}$. Furthermore, for each $i,j \in [K]$, exactly one of the $Z_{ij}$ variables would be nonzero. Hence, we have a solution of weight $K^2+ r k+1$. 
	
	\subsection{Soundness}
	
	Since the solution has to be non-trivial, at least one of $a_0,{\bf x},{\bf y},{\bf Y}$ must be nonzero. Note that when ${\bf x} = {\bf 0}$, ${\bf y} = {\bf 0}$ since ${\bf y}$ is a homogeneous linear transformation of ${\bf x}$. Moreover, we may assume that the weight of ${\bf x}$ is at most $\frac{K^2 + 1}{r} + k +1 < 18k$ by our setting of $r$, otherwise the total weight of the solution would be at least $r\cdot\left(\frac{K^2 + 1}{r} + k +1\right) \geq K^2+ r(k+1) + 1$ due to the copies of ${\bf x}$ and we would be done. The construction of ${\bf y}$ along with the upper bound of $18k$ on the weight of ${\bf x}$ constrains ${\bf y}$ to be nonzero when ${\bf x}$ is nonzero. Thus, the only three cases we need to consider are:
	
	\medskip
	\noindent
	{\bf Case (i):} $a_0 = 1$. In this case, any solution ${\bf x}$ to ${\bf M}{\bf x} = a_0{\bf t} = {\bf t}$ has weight at least $k+1$. Furthermore, for each $i,j \in [K]$, at least one of the four $Z_{ij} $ variables must be nonzero since $a_0 = 1$. Hence, the total Hamming weight of the solution is at least $K^2 + r(k+1) + 1$.

	\medskip
	\noindent
	{\bf Case (ii):} $a_0 = 0, {\bf x} \neq {\bf 0}, {\bf y}\neq {\bf 0}$. By construction, since ${\bf y}$ is nonzero it has weight $\ge \left(\frac{1}{2}- \epsilon\right)K$. Therefore, for at least $1 - \left(\frac{1}{2} + \epsilon \right)^2 \ge \frac{3}{4} - 2\epsilon$ fraction of the pairs $(i,j) \in [K] \times [K]$, either $y_i = 1$ or $y_j = 1 $ .  Observe that for each such pair, at least two $Z_{ij}$ variables are set to $1$. Thus, the weight of any  solution in this case is at least $2 \Big(\frac{3}{4} - 2 \epsilon \Big)K^2 = \Big(\frac{3}{2} -4 \epsilon \Big)K^2$.

	\medskip
	\noindent
	{\bf Case (iii):} $a_0 = 0, {\bf x} = {\bf 0}, {\bf y} = {\bf 0}, {\bf Y} \neq {\bf 0}$. We have that ${\rm diag}({\bf Y}) = {\bf y} =  {\bf 0}$, ${\bf Y}$ is symmetric and it belongs to the product code $\mathcal{C}^{\otimes 2}$ (as enforced by Equations \eqref{eqn:parity} and \eqref{eqn:const2}). Then by lemma \ref{thm:prod_codes}, the weight of ${\bf Y}$ is at least $\left( \frac{3}{8} -
	3\epsilon \right)K^2$. Observe that for each $i,j \in [K]$ such that $Y_{ij} =1 $, Equations (\ref{eqn:const11})-(\ref{eqn:const12}) force all four $Z_{ij}$ variables to be set to $1$. Hence, the number of nonzero $Z_{ij}$'s are at least $\left(\frac{3}{2} - 12 \epsilon \right) K^2$. 
	
	\medskip
	The above analysis yields that in contrast to  the \tn{YES} case which admits a $(K^2 + rk + 1)$-sparse solution, in the \tn{NO} case all solutions are of weight at least
	$$ \min\left\{\left(K^2 + r(k+1) + 1\right) , \left(\frac{3}{2}- 12 \epsilon\right)K^2\Big)\right\} \geq K^2 + r(k+1) + 1 $$
		by choice of the parameter $r$. Thus, solving the $d$-{\sc EvenSet} problem with 
	$d = K^2 + rk + 1 = O(k^2(\log n)^2)$
	solves the \kvec instance ${\bf M}{\bf x} = {\bf t}$.
	

\section{Hardness of Learning $k$-Parities using $k$-Juntas} \label{sec:kparityjunta}
The hardness for \kVS proved in Theorem \ref{thm:vechard} can be restated
in terms of W$[1]$-hardness of learning $k$-parities, i.e. linear forms
depending on at most $k$-variables.

\begin{theorem}\label{thm-kparity} The following is W$[1]$-hard: given
$r = O(k\log n)$ point-value pairs $\{({\bf y}_i, a_i)\}_{i=1}^r
\subseteq \Ft^n\times \Ft$,
decide whether there exists a homogeneous 
linear form $L$ supported on at most $k$ variables 
which satisfies all the point-value pairs, i.e. $L({\bf y}_i) = a_i$
for all $i = 1,\dots, t$.
\end{theorem}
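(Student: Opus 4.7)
The plan is to observe that Theorem \ref{thm-kparity} is a direct syntactic rephrasing of Theorem \ref{thm:vechard} obtained by interpreting the rows of the coefficient matrix as labeled examples. Concretely, given an instance $({\bf M}, {\bf b})$ of \kvec with ${\bf M} \in \F_2^{m \times n}$ and $m = O(k\log n)$, I would set ${\bf y}_i \in \F_2^n$ to be the $i$-th row of ${\bf M}$ and $a_i \in \F_2$ to be the $i$-th coordinate of ${\bf b}$, producing $r = m = O(k\log n)$ point-value pairs $\{({\bf y}_i, a_i)\}_{i=1}^r$. This transformation is clearly computable in time polynomial in the size of the input and preserves the parameter $k$ exactly.

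The key (and essentially only) observation is that homogeneous linear forms $L : \F_2^n \to \F_2$ supported on at most $k$ variables are in bijection with $k$-sparse vectors ${\bf x} \in \F_2^n$ via $L({\bf y}) = \langle {\bf x}, {\bf y}\rangle$, where the support of $L$ equals the support of ${\bf x}$. Under this correspondence, the system of constraints $L({\bf y}_i) = a_i$ for all $i \in [r]$ is literally the linear system ${\bf M}{\bf x} = {\bf b}$. Hence the learning instance admits a $k$-sparse consistent homogeneous linear form if and only if the \kvec instance is a YES-instance, so the W$[1]$-hardness transfers immediately from Theorem \ref{thm:vechard}.

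There is no real combinatorial obstacle here; the theorem is isolated only to introduce the learning-theoretic vocabulary (point-value pairs, $k$-parities, consistency) that will be strengthened in the subsequent Theorem \ref{thm-kparityjunta} to the setting of approximate non-proper learning using $k$-juntas. The one minor point worth noting is that Theorem \ref{thm-kparity} speaks of \emph{homogeneous} linear forms, matching the fact that \kvec has no constant term on the hypothesis side; for the general (affine) $k$-parity formulation, one would additionally append the pair $({\bf 0}, 0)$ to force the hidden parity to be homogeneous, as remarked in the footnote accompanying Theorem \ref{thm:linhard}.
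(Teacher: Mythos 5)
Your proposal is correct and matches the paper exactly: the paper itself gives no separate proof of Theorem~\ref{thm-kparity}, presenting it simply as a restatement of Theorem~\ref{thm:vechard} under the identification of rows of ${\bf M}$ with points ${\bf y}_i$, entries of ${\bf b}$ with labels $a_i$, and $k$-sparse vectors ${\bf x}$ with homogeneous linear forms $L(\cdot) = \langle {\bf x}, \cdot\rangle$ supported on at most $k$ variables. Your remark about appending $({\bf 0},0)$ to handle affine parities likewise mirrors the paper's own footnote accompanying Theorem~\ref{thm:linhard}.
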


Combining the above with a small bias linear code we 
induce an approximation gap for learning $k$-parities 
along with extending the result to non-homogeneous linear forms.

\begin{theorem}\label{thm-kparityapprox} The following is W$[1]$-hard:
for any $\eps > 0$ depending only on $k$, given
$t = O(k\log n/\eps^3)$ point-value pairs $\{({\bf z}_i,
b_i)\}_{i=1}^t
\subseteq \Ft^n\times \Ft$,
decide whether: 

\smallskip
\noindent
\tn{YES Case:} There exists a homogeneous 
linear form supported on at most $k$ variables
which satisfies all the point-value pairs.

\smallskip
\noindent
\tn{NO Case.} Any linear form supported on at most $k$
variables satisfies a fraction in the range $[1/2 - \eps, 1/2 + \eps]$ 
of the point value pairs.
\end{theorem}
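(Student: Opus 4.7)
The strategy is to amplify the exact W$[1]$-hardness of Theorem \ref{thm-kparity} by composing with an $\eps$-balanced linear code, so that any $k$-sparse linear form which fails on even one original pair also fails on close to half of the new pairs. Concretely, the plan is to start with an instance $\{({\bf y}_i, a_i)\}_{i=1}^r$ of Theorem \ref{thm-kparity}, where $r = O(k\log n)$. Applying Theorem \ref{thm:balanced}, let ${\bf G} \in \F_2^{N \times r}$ be the generator matrix of an explicit $\eps$-balanced binary linear code $\cC \subseteq \F_2^N$ of dimension $r$ and block length $N = O(r/\eps^3) = O(k\log n/\eps^3)$. The new instance will consist of the $t = N$ pairs $({\bf z}_j, b_j)$ for $j \in [N]$, where ${\bf z}_j := \sum_{i=1}^r G_{j,i}\,{\bf y}_i$ and $b_j := \sum_{i=1}^r G_{j,i}\, a_i$.

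Completeness will be immediate: a homogeneous $k$-sparse $L = \langle {\bf c}, \cdot\rangle$ satisfying $L({\bf y}_i) = a_i$ for every $i$ also satisfies $L({\bf z}_j) = \sum_i G_{j,i} L({\bf y}_i) = b_j$ for every $j$ by linearity. For soundness, I would take an arbitrary linear form $L({\bf z}) = \langle{\bf c}, {\bf z}\rangle + c_0$ with $|\supp({\bf c})| \leq k$, and set $w_i := \langle {\bf c}, {\bf y}_i\rangle + a_i \in \F_2$. A direct calculation shows that $L$ satisfies the $j$-th new pair iff $({\bf G}{\bf w})_j = c_0$. If ${\bf w} = {\bf 0}$, then the homogeneous form $\langle {\bf c}, \cdot\rangle$ would satisfy every original pair, contradicting the NO case of Theorem \ref{thm-kparity}; hence ${\bf w} \neq {\bf 0}$, so ${\bf G}{\bf w}$ is a nonzero codeword of $\cC$ with Hamming weight in $[(1/2-\eps)N,\,(1/2+\eps)N]$. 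Consequently $L$ satisfies either $\tn{wt}({\bf G}{\bf w})$ or $N - \tn{wt}({\bf G}{\bf w})$ of the new pairs, both of which lie in $[(1/2-\eps)N,\,(1/2+\eps)N]$, as required.

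Since the reduction is polynomial-time and keeps the sparsity parameter $k$ unchanged while producing $t = O(k\log n/\eps^3)$ pairs, it yields the claimed W$[1]$-hardness. I expect the main subtle point to be handling non-homogeneous forms: Theorem \ref{thm-kparity} only forbids homogeneous $k$-parities on the original instance, so one must verify that introducing a constant term $c_0$ (including the edge cases $L \equiv 0$ and $L \equiv 1$) does not push the satisfied fraction outside $[1/2-\eps,\,1/2+\eps]$. This is absorbed cleanly above: flipping $c_0$ merely swaps the counts of satisfied and unsatisfied pairs, and the degenerate case ${\bf w} = {\bf 0}$ would place the original instance on the YES side (since $\langle {\bf c}, \cdot\rangle$ would already satisfy it), contradicting the NO assumption.
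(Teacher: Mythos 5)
Your proof is correct and follows essentially the same approach as the paper's: composing the exact hardness instance from Theorem \ref{thm-kparity} with an $\eps$-balanced code and arguing via the weight of the codeword ${\bf G}{\bf w}$ where $w_i = \langle{\bf c},{\bf y}_i\rangle + a_i$. Your writeup is somewhat more explicit than the paper's about the reduction of the affine case to the homogeneous case and about ruling out ${\bf w}={\bf 0}$, but the underlying argument is identical.
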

\begin{proof} Let ${\bf W} = \{W_{ij}\} \in \Ft^{t \times r}$ be the generator
	matrix of an $\eps$-balanced linear code given by Theorem \ref{thm:balanced},
	where $t = O(r/\eps^3)$.
Given an instance $\{({\bf y}_j, a_j)\}_{j=1}^r$ from Theorem \ref{thm-kparity}, let
$${\bf z}_i = \sum_{j=1}^r {W}_{ij}{\bf y}_j, \ \ \ \ \tn{ and }
\ \ \ \ b_i = \sum_{j=1}^r {W}_{ij}a_j,$$
for $i = 1, \dots, t$.

In the YES case, there is a homogeneous linear form $L^*$ that
satisfies all $\{({\bf y}_j, a_j)\}_{j=1}^r$ and thus satisfies
linear combinations of these point-value pairs, in particular 
$\{({\bf z}_i, b_i)\}_{i=1}^t$. 

For the NO case, consider any linear form $L({\bf x}) + c$. Since the
homogeneous part $L$ does not satisfy all pairs $\{({\bf y}_j,
a_j)\}_{j=1}^r$, it will
satisfy a fraction in the range $[1/2 - \eps, 1/2 + \eps]$ of the
pairs $\{({\bf z}_i,
b_i)\}_{i=1}^t$, due the lower and upper bounds bound on the weight of
the nonzero codewords in the column space of ${\bf W}$. Thus, the
linear form $L({\bf x}) + c$ also satisfies a fraction in the range
$[1/2 - \eps, 1/2 + \eps]$ of the point-value pairs
$\{({\bf z}_i, b_i)\}_{i=1}^t$.
\end{proof}

As we show below, using a small enough bias $\eps$ in the above
construction, one can strengthen the hardness result to 
learning $k$-parities with $k$-\emph{juntas}, i.e. functions
depending only on a subset of at most $k$ variables.

\begin{theorem}\label{thm-kparityjuntarestated} \tn{(Theorem \ref{thm-kparityjunta} restated)}
The following is W$[1]$-hard:
for any constant $\delta > 0$, given
$t = O(k\cdot 2^{3k}\cdot \log n/\delta^3)$ point-value pairs $\{({\bf z}_i,
b_i)\}_{i=1}^t
\subseteq \Ft^n\times \Ft$,
decide whether: 

\smallskip
\noindent
\tn{YES Case:} There exists a homogeneous 
linear form supported on at most $k$ variables
which satisfies all the point-value pairs.

\smallskip
\noindent
\tn{NO Case.} Any function $f : \Ft^n \mapsto \Ft$ depending on at most $k$
variables satisfies at most $1/2 + \delta$ fraction of the point value
pairs.
\end{theorem}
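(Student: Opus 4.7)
The plan is to compose the $r = O(k\log n)$ point-value pairs $\{({\bf y}_j,a_j)\}_{j=1}^r$ from Theorem \ref{thm-kparity} with a very small bias $\eps$-balanced code as in Theorem \ref{thm-kparityapprox}, taking $\eps = \Theta(\delta\cdot 2^{-k})$, and then use a Fourier expansion of the candidate $k$-junta to upgrade the NO-case guarantee from sparse linear forms to arbitrary $k$-juntas. Concretely, I would let ${\bf W}\in\Ft^{t\times r}$ be the generator matrix of the $\eps$-balanced code from \cref{thm:balanced}, which has $t = O(r/\eps^3) = O(k\cdot 2^{3k}\log n/\delta^3)$, and output the pairs $({\bf z}_i,b_i):=(\sum_j W_{ij}{\bf y}_j,\ \sum_j W_{ij}a_j)$ for $i=1,\dots,t$. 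Completeness is immediate from linearity: any $k$-parity $L^*$ satisfying every original pair also satisfies every transformed pair, exactly as in Theorem \ref{thm-kparityapprox}.

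For soundness I would fix an arbitrary $k$-junta $f:\Ft^n\to\Ft$ supported on a set $S$ with $|S|\leq k$, pass to $\pm 1$ notation by writing $F=(-1)^f$ and $B_i=(-1)^{b_i}$ (so the fraction of pairs satisfied by $f$ equals $\tfrac12\bigl(1+\E_i[F({\bf z}_i)B_i]\bigr)$), and Fourier-expand $F=\sum_{T\subseteq S}\hat{F}(T)\chi_T$ over the at most $2^k$ subsets of $S$. For each such $T$ the key identity is
\[
\chi_T({\bf z}_i)\,B_i \;=\; (-1)^{L_T({\bf z}_i)+b_i} \;=\; (-1)^{({\bf W}{\bf e}_T)_i},\qquad ({\bf e}_T)_j \;:=\; L_T({\bf y}_j)+a_j,
\]
where $L_T({\bf x})=\sum_{j\in T}x_j$ is the $|T|$-parity indexed by $T$. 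In the NO case no $k$-parity (including the zero one) satisfies all original pairs, so ${\bf e}_T\neq{\bf 0}$ for every $T\subseteq S$; hence ${\bf W}{\bf e}_T$ is a nonzero codeword of the $\eps$-balanced code and $|\E_i[\chi_T({\bf z}_i)B_i]|\leq 2\eps$. Combining with $|\hat{F}(T)|\leq 1$ and summing over the $\leq 2^k$ Fourier coefficients of $F$ yields $|\E_i[F({\bf z}_i)B_i]|\leq 2\eps\cdot 2^k\leq 2\delta$ once $\eps=\delta\cdot 2^{-k-1}$, giving exactly the claimed bound that $f$ agrees on at most $\tfrac12+\delta$ fraction of the pairs.

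The one subtlety to verify carefully is the empty Fourier character $T=\emptyset$, whose contribution reduces to $\E_i[B_i]$ and can be controlled only when ${\bf a}\neq{\bf 0}$, so that ${\bf W}{\bf a}$ is itself a nonzero codeword. This is automatic since the underlying \kvec instance produced in \cref{thm:vechard} has nonzero target ${\bf b}$, so the $a_j$'s forming the instance of Theorem \ref{thm-kparity} are not all zero; equivalently, NO-ness of that instance already rules out the identically zero linear form. Past this point the argument is just a combination of the $\eps$-balance property of ${\bf W}$ with the trivial $\ell_\infty$-to-$\ell_1$ bound on the Fourier coefficients of a Boolean function on $k$ variables, so I do not anticipate a major obstacle beyond choosing $\eps$ small enough to absorb the $2^k$ Fourier terms, which is the source of the $\tn{exp}(k)$ blowup in the parameter $t$.
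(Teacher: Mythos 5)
Your proposal is correct and follows essentially the same route as the paper: compose the hard instance from \cref{thm-kparity} with an $\eps$-balanced code at bias $\eps = \Theta(\delta\cdot 2^{-k})$, Fourier-expand the candidate junta over the $\leq 2^k$ subsets of its support, bound each character's correlation by $2\eps$ via the code's weight bounds (with the $T = \emptyset$ case handled by noting ${\bf a}\neq {\bf 0}$ in the NO case), and sum. The only cosmetic difference is that the paper packages $F({\bf z})B$ as an extended function $g(x_1,\dots,x_n,x_{n+1}) = f(x_1,\dots,x_n)+x_{n+1}$ on $n+1$ variables and expands $e(g)$, which is the same computation as your direct expansion of $F$ against $B_i$.
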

\begin{proof}
The construction of $\mc{Z} = 
\{({\bf z}_i, b_i)\}_{i=1}^t$ is exactly the same
as in the proof of Theorem \ref{thm-kparityapprox} taking $\eps =
\delta\cdot 2^{-k}$. The YES case follows directly as before.

For the NO case, let $f: \Ft^n \mapsto \Ft$ be a function depending
only a subset $S \subseteq [n]$ of coordinates where $|S| \leq k$. 
Define an extension $g : \Ft^{n+1} \mapsto
\Ft$ as $g(x_1,\dots, x_n, x_{n+1}) := f(x_1,\dots, x_n) + x_{n+1}$.
For convenience we shall abuse notation to denote 
$({\bf z}, b) = (z_1,\dots, z_n, b)$ where ${\bf z} = (z_1,\dots,
z_n) \in \Ft^n$ and $b \in \Ft$. To complete the proof we need to show that,
\begin{equation}
\left|\E_{({\bf z}, b) \in \mc{Z}}\left[e(g({\bf z}, b))\right]\right| 
\leq 2\delta, \label{eqn-expbd}
\end{equation}
where $e(x) := (-1)^x$. For some real values
$C_\alpha$ ($\alpha \subseteq [n+1]$), the Fourier expansion of $e(g)$
is given by,
$$e(g) = \sum_{\alpha \subseteq [n+1]} C_\alpha\chi_\alpha.$$
Since $e(g(x_1,\dots, x_{n+1})) = e(f(x_1,\dots, x_n) + x_{n+1})$ and
$f$ depends only on coordinates in $S$, it
is easy to see that the Fourier spectrum of $e(g)$ is supported only
on characters $\chi_\alpha$ such that $\alpha \subseteq S\cup\{n+1\}$. 
Further, since $e(g(x_1,\dots, x_{n+1}))$ changes sign on flipping
$x_{n+1}$, $C_\alpha \neq 0 \Rightarrow (n+1)\in \alpha$. Thus,
\begin{equation}
e(g) = \sum_{\substack{\alpha \subseteq S\cup\{n+1\} \\ (n+1)\in
\alpha}} C_\alpha\chi_\alpha. \label{eqn-Fourier}
\end{equation}
Observe that for any $\alpha$ in the sum above,
$\chi_{\alpha}(x_1,\dots,x_n,b) = e(L(x_1,\dots,x_n) + b)$ where $L$
is a homogeneous linear form supported on at most $k$ variables. For
any such $\alpha$,
the NO case of Theorem  \ref{thm-kparityapprox} implies,
\begin{equation}
\left|\E_{({\bf z}, b) \in \mc{Z}}\left[\chi_\alpha({\bf z}, b) \right]\right| \leq
2\eps
\end{equation}
Using the above along with Equation \eqref{eqn-Fourier} yields,
\begin{eqnarray}
\left|\E_{({\bf z}, b) \in \mc{Z}}\left[e(g({\bf z}, b))\right]\right|
& \leq & (2\eps)\cdot \sum_{\substack{\alpha \subseteq S\cup\{n+1\} \\ (n+1)\in
\alpha}} |C_\alpha| \nonumber \\
& \leq & (2\eps)\cdot 2^k = 2\delta, \nonumber
\end{eqnarray}
where the last inequality is because there are at most $2^k$ subsets
$\alpha$ in the sum on the RHS of Equation \eqref{eqn-Fourier} and each
$|C_\alpha| \leq 1$ since $e(g)$ is a $\{-1,1\}$-valued function.
\end{proof}

\section{Proof of Theorem \ref{thm:kevenwithpolys}} \label{sec:kevenwithpolys}
We first prove the following strengthening of Theorem \ref{thm:even}
along the same lines as Theorem
\ref{thm-kparityapprox} in Section \ref{sec:kparityjunta}. 

\begin{theorem}[Hardness of approximate \keven]\label{thm:evenpoly} 
For any constant $\eps > 0$, given an instance
$({\bf A}, {\bf b})$ of $k'${\sc -VectorSum}, 
where ${\bf A} \in \F_2^{m' \times n'}$
and ${\bf b} \in \F_2^{m'}$, there is an \tn{FPT} reduction to an
instance ${\bf B} \in \F_2^{m\times n}$ of 
$k'$-\textsc{EvenSet} for some $k = O((k'\log
{n'})^2)$, such that

\smallskip
\noindent
\tn{YES Case:} There is a nonzero $k$-sparse vector 
${\bf x}$ which satisfies ${\bf Bx} = 0$.

\smallskip
\noindent
\tn{NO Case.} For any nonzero $k$-sparse vector ${\bf x}$
the weight of ${\bf Bx}$ is in the range $[1/2 - \eps, 1/2 + \eps]$.

\smallskip
\noindent
Here both $m$ and $n$ are bounded
by fixed polynomials in $m'$ and $n'$. 
\end{theorem}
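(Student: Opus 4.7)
The plan is to obtain this approximate-gap version of Theorem \ref{thm:even} by exactly the same transformation used in Section \ref{sec:kparityjunta} to derive Theorem \ref{thm-kparityapprox} from Theorem \ref{thm-kparity}: namely, compose with an $\eps$-balanced linear code from Theorem \ref{thm:balanced}. Because the target problem is homogeneous, we can apply the code directly to the rows of the $k$-\textsc{EvenSet} matrix rather than to labelled point-value pairs, which makes the soundness analysis especially clean.

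Concretely, given a $k'$-\textsc{VectorSum} instance $({\bf A}, {\bf b})$, first invoke Theorem \ref{thm:even} to produce a matrix ${\bf M}' \in \F_2^{m' \times n'}$ corresponding to an instance of $k$-\textsc{EvenSet} with $k = O((k'\log n')^2)$, where both $m'$ and $n'$ are $\tn{poly}(m',n')$ in the original input. By that theorem, in the YES case there exists a nonzero $k$-sparse vector ${\bf x}^* \in \F_2^{n'}$ with ${\bf M}'{\bf x}^* = {\bf 0}$, while in the NO case every nonzero $k$-sparse ${\bf x}$ satisfies ${\bf M}'{\bf x} \neq {\bf 0}$. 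Next, let ${\bf W} \in \F_2^{m \times m'}$ be the generator matrix of the explicit $\eps$-balanced linear code guaranteed by Theorem \ref{thm:balanced}, so that $m = O(m'/\eps^3)$ and, for every nonzero ${\bf u} \in \F_2^{m'}$, the vector ${\bf W}{\bf u}$ has normalized Hamming weight in $[1/2 - \eps, 1/2 + \eps]$. Finally, output ${\bf B} := {\bf W}{\bf M}' \in \F_2^{m \times n}$ with $n := n'$.

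Completeness is immediate: the same $k$-sparse witness ${\bf x}^*$ satisfies ${\bf B}{\bf x}^* = {\bf W}({\bf M}'{\bf x}^*) = {\bf 0}$. For soundness, take any nonzero $k$-sparse ${\bf x} \in \F_2^{n}$. The NO guarantee of Theorem \ref{thm:even} gives ${\bf u} := {\bf M}'{\bf x} \neq {\bf 0}$, so ${\bf B}{\bf x} = {\bf W}{\bf u}$ is a nonzero codeword of the $\eps$-balanced code and therefore has normalized weight in $[1/2 - \eps, 1/2 + \eps]$, as required. Since $m$ and $n$ are polynomial in $m'$ and $n'$, which in turn are polynomial in the input size, the composed map is an FPT reduction.

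No genuine obstacle should arise: both ingredients are already in hand, and the only subtlety worth flagging is the direction of the NO guarantee from Theorem \ref{thm:even}, which must certify ${\bf M}'{\bf x} \neq {\bf 0}$ for \emph{every} nonzero $k$-sparse ${\bf x}$ (not merely the particular candidate ${\bf x}^*$ built in completeness); this is exactly the content of that theorem's NO case, since it lower-bounds the Hamming weight of any nonzero element in the kernel of ${\bf M}'$ over the set of explicit variables governing $k$-sparsity.
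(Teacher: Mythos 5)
Your proposal is correct and coincides with the paper's own (very terse) proof: apply Theorem~\ref{thm:even} to get a \keven matrix, left-multiply by the generator matrix of an $\eps$-balanced code from Theorem~\ref{thm:balanced}, and observe that the homogeneity of the problem makes completeness immediate and soundness follow from the balance of all nonzero codewords. The only (harmless) blemish is a notational collision where you reuse $m'$ and $n'$ both for the original \kvec instance and for the intermediate \keven matrix; this doesn't affect the argument.
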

\begin{proof} 
Let ${\bf M} \in \F_2^{r\times n}$ be the instance of \keven obtained
by applying Theorem \ref{thm:even} to the instance $({\bf A}, {\bf
b})$ of $k'${\sc -VectorSum} we start with.
As in the proof of Theorem
\ref{thm-kparityapprox} let ${\bf W} \in \F_2^{m \times r}$ be the generator
matrix of an $\eps$-balanced linear code given by Theorem \ref{thm:balanced}, where $m =
O(r/\eps^3)$. Taking ${\bf B} := {\bf WM}$ completes the proof.\end{proof}

It is easy to see that the uniform distribution on the rows of the
matrix ${\bf B}$ fools all linear forms (with error $\eps$) over $k$ variables. Viola's
result~\cite{Viola09} (Theorem \ref{thm:viola}) implies that for any
constant $d$, taking $d$-wise sums of the rows of ${\bf M}$ yields a
distribution which fools all degree $d$ polynomials with error
$16\cdot \eps^{1/2^{d-1}}$. Taking $\eps$ to be a small enough constant yields
the following theorem which implies Theorem \ref{thm:kevenwithpolys}.

\begin{theorem}\label{thm:kevenwithpolsredn}
For any constants $\delta > 0$ and positive integer $d$, 
given an instance $({\bf A}, {\bf b})$ of $k'${\sc -VectorSum}, 
where ${\bf A} \in \F_2^{m' \times n'}$
and ${\bf b} \in \F_2^{m'}$, there is an \tn{FPT} reduction to a
set of  $m$ points $\{{\bf z}_i\}_{i=1}^m
\subseteq \F_2^n$ such that for some $k = O((k'\log n')^2)$,

\smallskip
\noindent
\tn{YES Case:} There exists a $k$-parity $L$ such that $L({\bf z}_i) = 0$ for all $i=1,\dots, m$.

\smallskip
\noindent
\tn{NO Case.} Any degree $d$ polynomial $P : \F_2^n \mapsto \F_2$ depending on at most $k$ variables satisfies $P({\bf z}_i) = 0$ for
 at most $\left(\Pr_{{\bf z} \in \F_2^n}[P({\bf z}) = 0] + \delta\right)$ fraction of the points. 

\smallskip
\noindent
In the above $m$ and $n$ are bounded by fixed polynomials in $m'$ and $n'$. 
\end{theorem}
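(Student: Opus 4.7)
The plan is to compose the hardness of approximate \keven from \cref{thm:evenpoly} with Viola's pseudorandom generator for degree-$d$ polynomials (\cref{thm:viola}), as foreshadowed in the paragraph preceding the theorem. Fix a constant $\eps' > 0$ depending only on $d$ and $\delta$ so that $16(2\eps')^{1/2^{d-1}} \le \delta/2$, and apply \cref{thm:evenpoly} with this $\eps'$ to the input $k'$-{\sc VectorSum} instance. This produces a matrix $\mathbf{B} \in \F_2^{M \times n}$ with $M, n$ polynomial in $m', n'$, such that in the YES case some nonzero $k$-sparse $\mathbf{x}$ satisfies $\mathbf{B}\mathbf{x} = \mathbf{0}$, and in the NO case every nonzero $k$-sparse $\mathbf{x}$ has $\tn{wt}(\mathbf{B}\mathbf{x})/M \in [1/2-\eps',\,1/2+\eps']$, with $k = O((k'\log n')^2)$.

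Let $\cY$ be the uniform distribution over the rows $\mathbf{r}_1,\dots,\mathbf{r}_M$ of $\mathbf{B}$, let $\cY_1,\dots,\cY_d$ be independent copies of $\cY$, and take the final point set $\{\mathbf{z}_i\}_{i=1}^m$ to be the multiset of all $d$-fold sums $\mathbf{r}_{i_1}+\cdots+\mathbf{r}_{i_d}$, so that the uniform distribution on $\{\mathbf{z}_i\}$ is exactly $\cW := \cY_1 + \cdots + \cY_d$ and $m = M^d$, polynomial in $m', n'$ for constant $d$. The YES case is then immediate: the $k$-parity $L(\mathbf{z}) := \langle \mathbf{x},\mathbf{z}\rangle$ vanishes on every row of $\mathbf{B}$, hence on every $d$-fold sum.

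For the NO case, first observe that $\cY$ itself $(2\eps')$-fools every linear polynomial $\mathbf{z} \mapsto \langle \mathbf{x},\mathbf{z}\rangle + c$ with $\mathbf{x}$ supported on at most $k$ coordinates: the $\mathbf{x} = \mathbf{0}$ case is trivial, and otherwise $\bigabs{\E_{\mathbf{z}\sim\cY}[e(\langle \mathbf{x},\mathbf{z}\rangle + c)]} = \bigabs{1 - 2\,\tn{wt}(\mathbf{B}\mathbf{x})/M} \le 2\eps'$ by the soundness guarantee. Now let $P$ be any nontrivial degree-$d$ polynomial depending on a subset $S \subseteq [n]$ of size at most $k$. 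Since $P$ depends only on $S$, the quantities $\E_{\cW}[e(P)]$ and $\E_{\cU}[e(P)]$ are determined by the marginals of $\cW$ and $\cU$ on $S$; these marginals of $\cY_1,\dots,\cY_d$ are independent and each $(2\eps')$-fool linear polynomials in $|S|$ variables, so applying \cref{thm:viola} in $|S|$ dimensions yields $\bigabs{\E_{\cW}[e(P)] - \E_{\cU}[e(P)]} \le 16(2\eps')^{1/2^{d-1}} \le \delta/2$. Translating via $\E[e(P)] = 1 - 2\Pr[P=1]$ gives $\bigabs{\Pr_{\cW}[P=0] - \Pr_{\cU}[P=0]} \le \delta/4 \le \delta$, the desired NO case bound.

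The only mildly subtle point is the reduction to $k$-sparse fooling: \cref{thm:viola} as stated requires fooling of \emph{all} linear forms over $\F_2^n$, whereas \cref{thm:evenpoly} supplies fooling only of the $k$-sparse ones. The remedy used above is to pass to the marginal distribution on the at-most-$k$ coordinates on which the target polynomial $P$ actually depends, where the restricted fooling becomes universal; once this observation is in place the composition is routine and yields the claimed parameters.
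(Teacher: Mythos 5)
Your proposal is correct and follows essentially the same route as the paper: apply \cref{thm:evenpoly} to produce a matrix $\mathbf{B}$ whose rows $\eps'$-fool $k$-sparse linear forms, take the multiset of all $d$-fold row sums, and invoke Viola's theorem (\cref{thm:viola}) to conclude fooling of degree-$d$ polynomials on at most $k$ variables. You are also right to flag and handle the one subtlety the paper elides — that \cref{thm:viola} nominally requires fooling of all linear forms rather than only $k$-sparse ones, which you resolve by passing to the marginal distribution on the $\le k$ coordinates that the target polynomial actually reads; this is exactly the (unstated) observation needed to make the paper's one-paragraph argument rigorous.
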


\section{Hardness Reduction for \GkVS} \label{sec-gapink}

In this section we first prove the following theorem.
\begin{theorem}\label{thm-Gap-kVS-redn} For universal constants
$\delta_0 > 0$ and $c_0$, and any arbitrarily small constant $\eps > 0$,
there is a $2^{O(\eps n)}$-time Turing reduction from an $n$-variable 
$3$\tn{-CNF} formula
$F$ to $s \leq 2^{\eps n}$ instances $I_1, \dots, I_s$ of \GkVS,
each of size at most $O(k^2\cdot 2^{O(n'/k)})$ where $n' = c_\eps n(\log
n)^{c_0}$ for some constant $c_\eps$ depending on $\eps$, such that

\smallskip
\noindent
\tn{YES Case.} If $F$ is satisfiable, then at least one instance $I_j$
($j \in [s]$) admits a solution of sparsity $k$.

\smallskip
\noindent
\tn{NO Case.} If $F$ is unsatisfiabe, then no instance $I_j$ 
admits a solution of sparsity $\leq (1 + \delta_0)k$.

\end{theorem}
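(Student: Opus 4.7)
\emph{Phase 1 (sparsification + gap amplification).} I will first apply \cref{lem-sparsification} to $F$ with parameter $\eps$, obtaining $s \leq 2^{\eps n}$ instances $F_1,\dots,F_s$ of $3$-SAT, each on at most $n$ variables and $O((1/\eps)^9 n)$ clauses, so that $F$ is satisfiable iff some $F_i$ is. Then I will apply \cref{thm-nearlinearPCP} to each $F_i$ to obtain a $3$-CNF $F_i'$ with $n' \leq c_\eps \cdot n(\log n)^{c_0}$ clauses: satisfiable when $F$ is, and no assignment satisfying more than a $(1-\gamma_0)$ fraction otherwise.

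\emph{Phase 2 (block-partitioned \GkVS gadget).} For each $F_j'$ I will construct $I_j$ as follows. Partition its $n'$ clauses arbitrarily into $k$ blocks $B_1,\dots,B_k$ of size $n'/k$ each, and let $V_b$ denote the (at most $3n'/k$) variables appearing in $B_b$. For every block $b$ and every partial assignment $\alpha\colon V_b\to\Ft$ that satisfies all clauses of $B_b$, introduce a column $y_{b,\alpha}$ of the vector-sum matrix. The $\Ft$-equations will come in two families: block non-triviality, $\sum_\alpha y_{b,\alpha} = 1$ for each $b$; and pairwise consistency, $\sum_{\alpha\colon\alpha(x)=1} y_{b,\alpha} + \sum_{\alpha'\colon\alpha'(x)=1} y_{b',\alpha'} = 0$ for every original variable $x$ and every pair of distinct blocks $b,b'$ with $x\in V_b\cap V_{b'}$. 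This yields at most $k\cdot 2^{3n'/k}$ columns and $O(k^2 n')$ equations, matching the claimed size $O(k^2\cdot 2^{O(n'/k)})$ in the regime $n'/k = \Omega(\log n')$.

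\emph{Phase 3 (gap analysis).} For the YES case, given a satisfying $\tau$ of $F_j'$, I set $y_{b,\tau|_{V_b}} = 1$ for each $b$ and all other columns to $0$: since each $\tau|_{V_b}$ satisfies $B_b$ and both equation families are trivially met, this is a $k$-sparse solution. For the NO case I take $\delta_0 := \gamma_0$ and argue by contradiction: suppose $I_j$ has a solution of weight $w\le(1+\delta_0)k$. Block non-triviality forces an \emph{odd} (hence $\ge 1$) number of selected columns per block, so $w\ge k$ and the excess $w-k\le \delta_0 k$ is distributed among blocks with $\ge 3$ selections (each contributing $\ge 2$ to the excess); thus at most $\delta_0 k/2$ blocks are multi-selection, and at least $(1-\delta_0/2)k$ blocks have a unique selected $\alpha_b$ satisfying every clause of $B_b$. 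For any two such single-selection blocks $b,b'$, pairwise consistency reduces to $\alpha_b(x)=\alpha_{b'}(x)$ on shared $x$, so the $\alpha_b$'s assemble into a partial assignment extendable to a global $\sigma$ satisfying every clause in every single-selection block, i.e.\ at least a $(1-\delta_0/2) = (1-\gamma_0/2)$ fraction of the clauses of $F_j'$. Since $1-\gamma_0/2 > 1-\gamma_0$, this contradicts the PCP NO guarantee.

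\emph{Main obstacle.} The one genuinely delicate step is handling \emph{multi-selection} blocks in the NO case, where the consistency equations become non-degenerate XOR relations that do not pin down a single local assignment. My plan is to sidestep this by discarding such blocks entirely and exploiting only the single-selection blocks; the odd-parity non-triviality constraint is exactly what makes this affordable, converting the sparsity slack $\delta_0 k$ into at most $\delta_0 k/2$ multi-selection blocks (a $2{:}1$ ratio), which leaves enough single-selection blocks to glue their pairwise-consistent $\alpha_b$'s into a global assignment that breaks the PCP gap. Attempting to push $\delta_0$ close to or beyond $2\gamma_0$ would force us to extract information from multi-selection blocks (e.g.\ via a randomized or averaged decoding), but for the stated universal-constant gap this counting-and-extraction argument suffices.
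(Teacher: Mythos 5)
Your proof is correct and follows essentially the same route as the paper: sparsification followed by Dinur's near-linear PCP, a block-partitioned gadget with one odd-parity equation per block plus cross-block consistency equations, and the single-selection-block counting argument ($|S| \geq (1-\delta_0/2)k$) to glue the $\alpha_b$'s into a global assignment that violates the PCP soundness once $\delta_0 = \gamma_0$. The only (harmless) departure is that you enforce consistency one shared variable at a time rather than one shared pattern $\sigma$ at a time as the paper does; for single-selection blocks both versions collapse to $\alpha_b|_{V_b \cap V_{b'}} = \alpha_{b'}|_{V_b \cap V_{b'}}$, so the soundness argument goes through unchanged and your version even produces slightly fewer equations.
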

\begin{proof}
Let $F$ be a $3$-CNF formula on $n$-variables. We use Lemma
\ref{lem-sparsification} to obtain $3$-CNF fomulas $H_1, \dots H_s$ for $s
\leq 2^{\eps n}$, each on at most $n$ variables and 
$O((1/\eps)^9 \cdot n)$ clauses. Using Theorem
\ref{thm-nearlinearPCP},
each $H_j$ ($j \in [s]$) 
is separately transformed into a Gap-$3$-SAT instance $F_j$
with at most $n' = c_\eps n(\log n)^{c_0}$ clauses (and variables),
for some $c_\eps = O((1/\eps)^9)$. Each $F_j$ is now reduced to an
instance $I_j$ of \GkVS as follows.

Fix $j \in [s]$ and let $\mc{C}(F_j)$ denote the set of clauses of $F_j$.
We may assume that $|\mc{C}(F_j)|$ is divisible by $k$ by adding up to
$k$ dummy clauses which are always satisfied.
Let $B^j_1,\dots, B^j_k$ be any arbitrary partition of $\mc{C}(F_j)$ into $k$
equal sized subsets, and let $X(B^j_i)$ be the set of variables
contained in the clauses $B^j_i$. Let $\tn{SAT}(B^j_i) \subseteq
\{0,1\}^{X(B^j_i)}$ be the set of assignments to the variables
$X(B^j_i)$ that satisfy all the clauses in $B^j_i$, for $1\leq i\leq
k$. For each $\alpha \in \tn{SAT}(B^j_i)$ we introduce an $\Ft$-valued
variable $z_\alpha$, and add the following equation,
\begin{equation}
\sum_{\alpha \in \tn{SAT}(B^j_i)} z_\alpha = 1, \label{eqn-blocksum}
\end{equation}
for each $i = 1,\dots, k$.
We also add equations to ensure consistency of the solution 
across different blocks of
variables. For each pair of distinct blocks $B^j_i$ and $B^j_{i'}$ and each
assignment to their common variables $\sigma \in \{0,1\}^{X(B^j_i)\cap
X(B^j_{i'})}$ we add the following equation.
\begin{equation}\label{eqn-blockconsistency}
\sum_{\substack{\alpha \in \tn{SAT}(B^j_i) \\ \sigma = \alpha|_{X(B^j_i)\cap
X(B^j_{i'})} }} z_\alpha = \sum_{\substack{\beta \in
\tn{SAT}(B^j_{i'}) \\ \sigma = \beta|_{X(B^j_i)\cap X(B^j_{i'})}}}
z_\beta.
\end{equation}
It is easy to see that $|X(B^j_i)| = O(n'/k)$ and
$|\tn{SAT}(B^j_i)| \leq 2^{O(n'/k)}$ for $1 \leq i \leq k$. 
Thus, the above construction of $I_j$ has at most $k\cdot 2^{O(n'/k)}$
variables, and at most $k^2\cdot 2^{O(n'/k)}$ equations. The
constant $c_0$ is the same as in Theorem \ref{thm-nearlinearPCP} and
$\delta_0$ shall be chosen later.

\subsection{YES Case}
If $F$ is satisfiable, the for some $j^* \in [s]$, $H_{j^*}$ is
satisfiable and therefore $F_{j^*}$ is satisfiable. Let $\pi$ be a
satisfying assignment for $F_{j^*}$. In $I_{j^*}$, for each
$B^{j^*}_i$ ($1\leq i \leq k$) we set the variable $z_\alpha$
corresponding to the projection $\alpha \in\tn{SAT}(B^{j^*}_i)$  
of $\pi$ on to the variables
$X(B^{j^*}_i)$ to $1$, and the rest of the variables to $0$. Clearly,
this satisfies Equations \eqref{eqn-blocksum} and
\eqref{eqn-blockconsistency} since $\pi$ is a satisfying assignment.
As we set exactly one variable per block to $1$, 
$I_{j^*}$ admits a solution of sparsity $k$.

\subsection{NO Case}
If $F$ is not satisfiable, then none of $H_1,\dots, H_s$ are
satisfiable and thus, for each $j = 1, \dots, s$, at most $(1 -
\gamma_0)$ fraction of the clauses of $F_j$ are satisfiable by any
assignment. Fix any $j \in [s]$. Since $B^j_1, \dots, B^j_k$ is a
balanced partition of $\mc{C}(F_j)$, any assignment to the variables of $F_j$
can satisfy all the clauses of at most $(1 - \gamma_0)$ fraction of
 $B^j_i$, $1\leq i\leq k$. 

Consider a solution to $I_j$ of sparsity at most $(1 + \delta_0)k$. 
By Equation
\eqref{eqn-blocksum}, this solution must set
an odd number of variables in $\{z_\alpha \mid \alpha \in
\tn{SAT}(B^j_i)\}$ to $1$, for $1\leq i \leq k$. Let $S \subseteq [k]$
consist of all indices $i \in [k]$ such that exactly one variable
$z_{\alpha^i}$ for some $\alpha^i \in \tn{SAT}(B^j_i)$ 
is set to $1$. Thus,
the sparsity is at least $|S| + 3(k - |S|)$, which is at most $(1 +
\delta_0)k$ by our assumption. By rearranging we get $|S| \geq (1 - \delta_0/2)k$. 
Further, Equation \eqref{eqn-blockconsistency} implies that 
for any $i, i' \in S$, the
assignments $\alpha^i$ and $\alpha^{i'}$ are consistent on their
common variables. Thus, there is an assignment to the variables in
$\cup_{i\in S}X(B^j_i)$ that satisfies all the clauses of $B^j_i$ for
$i\in S$. Choosing $\delta_0 = \gamma_0$ yields a contradiction to our
assumption, and therefore
no $I_j$ admits a solution of sparsity at most $(1 + \delta_0)k$. 
\end{proof}

Theorem \ref{thm-Gap-kVS-redn} proved above 
implies the following restatement of Theorem \ref{thm-gapink}.

\begin{theorem}\label{thm-rulingoutrestated} Assuming the Exponential Time
Hypothesis, there are universal constants $\delta_0 >0$ and $c_0$ 
such that  there is no $\tn{poly}(N)$ time algorithm to determine whether an
instance of \GkVS of  size $N$ admits a solution of sparsity $k$ or all solutions
are of sparsity at least $(1 + \delta_0)k$, 
for any $k = \omega((\log\log N)^{c_0})$.
More generally, under the same assumption, this problem does not admit an 
$N^{O(k/\omega((\log\log N)^{c_0}))}$ time algorithm for unrestricted $k$. 
\end{theorem}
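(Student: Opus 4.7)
The plan is to derive Theorem \ref{thm-rulingoutrestated} as a direct corollary of Theorem \ref{thm-Gap-kVS-redn}, via a contradiction against ETH. Assume toward contradiction that there is a $\poly(N)$-time algorithm $\mathcal{A}$ for \GkVS on instances of size $N$ whenever the sparsity parameter satisfies $k = \omega((\log\log N)^{c_0})$. Starting from an $n$-variable $3$-CNF formula $F$, fix an arbitrary constant $\eps > 0$ and invoke Theorem \ref{thm-Gap-kVS-redn} to produce, in $2^{O(\eps n)}$ time, a collection of $s \leq 2^{\eps n}$ instances $I_1,\dots, I_s$ of \GkVS, each of size $N = O(k^2 \cdot 2^{O(n'/k)})$, where $n' = c_\eps n (\log n)^{c_0}$.

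The crucial step is to pick $k = k(n)$ so as to satisfy two competing constraints: (i) $k = \omega((\log\log N)^{c_0})$, so that $\mathcal{A}$ applies, and (ii) $N = 2^{o(n)}$, so that $\poly(N)$ is subexponential in $n$. The natural sweet spot is $k := \lceil (\log n)^{c_0 + 1} \rceil$, for which $n'/k = O(c_\eps n / \log n) = o(n)$, hence $N = 2^{o(n)}$ and $\log\log N = \Theta(\log n)$. Consequently $(\log\log N)^{c_0} = O((\log n)^{c_0}) = o(k)$, which validates (i). Running $\mathcal{A}$ on all $s$ instances then takes total time $2^{\eps n}\cdot \poly(N) = 2^{(\eps + o(1))n}$; combined with the YES/NO guarantees of Theorem \ref{thm-Gap-kVS-redn}, this decides satisfiability of $F$. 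Since $\eps > 0$ was arbitrary, this amounts to a $2^{o(n)}$-time algorithm for $3$-SAT, contradicting ETH.

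The main obstacle is simply identifying this sweet spot; everything else reduces to careful bookkeeping of parameters. For the strengthened second conclusion of the theorem, which rules out $N^{O(k/\omega((\log\log N)^{c_0}))}$-time algorithms for unrestricted $k$, the same strategy works: any hypothesized algorithm of running time $N^{O(k/h(N))}$ for some $h(N) = \omega((\log\log N)^{c_0})$ would, at the chosen $k$, take time $2^{O(k \log N / h(N))} = 2^{O(n (\log n)^{c_0}/h(N))} = 2^{o(n)}$ per instance (since $h(N)/(\log n)^{c_0} \to \infty$), leading to the same $2^{(\eps + o(1))n}$ bound for $3$-SAT and the same ETH contradiction.
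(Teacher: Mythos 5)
Your proof is correct and follows essentially the same route as the paper's: both reduce via Theorem~\ref{thm-Gap-kVS-redn} and balance the two competing constraints (that $k$ be large enough relative to $N$ for the hypothesized algorithm to apply, and that $N = 2^{o(n)}$ so that $\poly(N)$ on $2^{\eps n}$ instances is $2^{(\eps+o(1))n}$), contradicting ETH. Your explicit choice $k = \lceil(\log n)^{c_0+1}\rceil$ is exactly the sweet spot the paper reaches implicitly when it observes that $k=\omega((\log\log N)^{c_0})$ forces $k=\omega((\log n)^{c_0})$, and your handling of the second (unrestricted-$k$) part is the same parameter bookkeeping the paper leaves to the reader.
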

\begin{proof}
For the first part of the theorem, assume 
for a contradiction that such an algorithm exists. In Theorem
\ref{thm-Gap-kVS-redn}, the size of each \GkVS instance constructed is
at most $N$, where $\log N = O(\log k + n'/k) = O(\log k) + O(c_\eps n(\log
n)^{c_0}/k)$. Here $n$ is the number of variables in the $3$-CNF
formula $F$. Thus, choosing $k = \omega((\log\log N)^{c_0})$ implies
$k = \omega((\log n)^{c_0})$. Note that in the reduction $k$ is
bounded by $\tn{poly}(n)$. Our supposed algorithm would decide each
\GkVS instance in time $\tn{poly}(k,  2^{O(n'/k)}) = 2^{o(n)}$. Applying this to all the
instances of \GkVS would decide the $n$-variable $3$-CNF formula $F$ in time $2^{(\eps +
o(1))n}$ for all constants $\eps > 0$, which contradicts the ETH.

A similar analysis proves the second part (unrestricted $k$) of the theorem.
\end{proof}

\section*{Acknowledgements}
We thank Subhash Khot for his permission to include his proof for the hardness of learning parities using polynomials. Also, thanks to Ryan Williams for encouraging and stimulating conversations and to Rocco Servedio for bringing \cite{AKL09} to our attention	.

\bibliographystyle{alpha}
\bibliography{sparse}

\newcommand{\etalchar}[1]{$^{#1}$}
\begin{thebibliography}{DFVW99}

\bibitem[ABN{\etalchar{+}}92]{ABNNR92}
Noga Alon, Jehoshua Bruck, Joseph Naor, Moni Naor, and Ron~M Roth.
\newblock Construction of asymptotically good low-rate error-correcting codes
  through pseudo-random graphs.
\newblock {\em IEEE Trans. Inform. Theory}, 38(2):509--516, 1992.

\bibitem[AK14]{AK14}
Per Austrin and Subhash Khot.
\newblock A simple deterministic reduction for the gap minimum distance of code
  problem.
\newblock {\em IEEE Trans. Inform. Theory}, 60(10):6636--6645, 2014.

\bibitem[AKL09]{AKL09}
Vikraman Arvind, Johannes K{\"{o}}bler, and Wolfgang Lindner.
\newblock Parameterized learnability of juntas.
\newblock {\em Theor. Comp. Sci.}, 410(47-49):4928--4936, 2009.

\bibitem[ALW14]{ALW14}
Amir Abboud, Kevin Lewi, and Ryan Williams.
\newblock Losing weight by gaining edges.
\newblock In {\em Proc.\ 22nd Annual European Symposium on Algorithms}, pages
  1--12. Springer, 2014.

\bibitem[BEKP15]{BEKP15}
Edouard Bonnet, Bruno Escoffier, Eun~Jung Kim, and Vangelis~Th. Paschos.
\newblock On subexponential and fpt-time inapproximability.
\newblock {\em Algorithmica}, 71(3):541--565, 2015.

\bibitem[BGM10]{BGM10}
Harry Buhrman, David {Garc{\'\i}a-Soriano}, and Arie Matsliah.
\newblock Learning parities in the mistake-bound model.
\newblock {\em Inform. Process. Lett.}, 111(1):16--21, 2010.

\bibitem[BGR15]{BGR15}
Arnab Bhattacharyya, Ameet Gadekar, and Ninad Rajgopal.
\newblock On learning k-parities with and without noise.
\newblock {\em arXiv preprint arXiv:1502.05375}, 2015.

\bibitem[BIWX11]{BIWX11}
Arnab Bhattacharyya, Piotr Indyk, David~P Woodruff, and Ning Xie.
\newblock The complexity of linear dependence problems in vector spaces.
\newblock In {\em Proc.\ 2nd Innovations in Computer Science}, pages 496--508,
  2011.

\bibitem[BKW03]{BKW03}
Avrim Blum, Adam Kalai, and Hal Wasserman.
\newblock Noise tolerant learning, the parity problem, and the statistical
  query model.
\newblock {\em J. ACM}, 50(4):506--519, 2003.

\bibitem[Blu96]{Blum96}
Avrim Blum.
\newblock On-line algorithms in machine learning.
\newblock In {\em Workshop on on-line algorithms, Dagstuhl}, pages 305--325.
  Springer, 1996.

\bibitem[BR60]{BC60}
R.~C. Bose and Dwijendra~K. Ray{-}Chaudhuri.
\newblock On {A} class of error correcting binary group codes.
\newblock {\em Information and Control}, 3(1):68--79, 1960.

\bibitem[CFK{\etalchar{+}}15]{CFKLMPPS15}
Marek Cygan, Fedor~V. Fomin, {\L}ukasz Kowalik, Daniel Lokshtanov, D{\'a}niel
  Marx, Marcin Pilipczuk, Michal Pilipczuk, and Saket Saurabh.
\newblock {\em Parameterized Algorithms}.
\newblock Springer International Publishing, 2015.

\bibitem[CIP06]{CIP06}
Chris Calabro, Russell Impagliazzo, and Ramamohan Paturi.
\newblock A duality between clause width and clause density for {SAT}.
\newblock In {\em 21st Annual {IEEE} Conference on Computational Complexity},
  pages 252--260, 2006.

\bibitem[CW08]{CW08}
Qi~Cheng and Daqing Wan.
\newblock Complexity of decoding positive-rate reed-solomon codes.
\newblock In {\em Proc.\ 35th Annual International Conference on Automata,
  Languages, and Programming}, pages 283--293. Springer, 2008.

\bibitem[CW09]{CW09}
Qi~Cheng and Daqing Wan.
\newblock A deterministic reduction for the gap minimum distance problem.
\newblock In {\em Proc.\ 41st Annual ACM Symposium on the Theory of Computing},
  pages 33--38. ACM, 2009.

\bibitem[DF95]{DF95}
Rodney~G. Downey and Michael~R. Fellows.
\newblock Fixed-parameter tractability and completeness {I:} basic results.
\newblock {\em SIAM J. on Comput.}, 24(4):873--921, 1995.

\bibitem[DF99]{DF99}
Rodney~G Downey and Michael~Ralph Fellows.
\newblock {\em Parameterized complexity}.
\newblock Springer Science \& Business Media, 1999.

\bibitem[DFVW99]{DFVW99}
Rod~G Downey, Michael~R Fellows, Alexander Vardy, and Geoff Whittle.
\newblock The parametrized complexity of some fundamental problems in coding
  theory.
\newblock {\em SIAM J. on Comput.}, 29(2):545--570, 1999.

\bibitem[Din07]{Dinur07}
Irit Dinur.
\newblock The pcp theorem by gap amplification.
\newblock {\em J. ACM}, 54(3), 2007.

\bibitem[DMS03]{DMS03}
Ilya Dumer, Daniele Micciancio, and Madhu Sudan.
\newblock Hardness of approximating the minimum distance of a linear code.
\newblock {\em IEEE Trans. Inform. Theory}, 49(1):22--37, 2003.

\bibitem[FG06]{FG06}
J\"org Flum and Martin Grohe.
\newblock {\em Parameterized Complexity Theory}.
\newblock Springer Verlag, 2006.

\bibitem[FGKP09]{FGKP09}
Vitaly Feldman, Parikshit Gopalan, Subhash Khot, and Ashok~Kumar Ponnuswami.
\newblock On agnostic learning of parities, monomials, and halfspaces.
\newblock {\em SIAM J. on Comput.}, 39(2):606--645, 2009.

\bibitem[FGMS12]{FGMS12}
Michael~R Fellows, Jiong Guo, D{\'a}niel Marx, and Saket Saurabh.
\newblock Data reduction and problem kernels ({D}agstuhl {S}eminar 12241).
\newblock {\em Dagstuhl Reports}, 2(6):26--50, 2012.

\bibitem[FM12]{FM12}
Fedor~V Fomin and D{\'a}niel Marx.
\newblock {F}{P}{T} suspects and tough customers: {O}pen problems of {D}owney
  and {F}ellows.
\newblock In {\em The Multivariate Algorithmic Revolution and Beyond}, pages
  457--468. Springer, 2012.

\bibitem[GKS10]{GKS10}
Parikshit Gopalan, Subhash Khot, and Rishi Saket.
\newblock Hardness of reconstructing multivariate polynomials over finite
  fields.
\newblock {\em SIAM J. on Comput.}, 39(6):2598--2621, 2010.

\bibitem[H{\aa}s01]{Has01}
Johan H{\aa}stad.
\newblock Some optimal inapproximability results.
\newblock {\em Journal of the ACM (JACM)}, 48(4):798--859, 2001.

\bibitem[Kho06]{Khot06}
Subhash Khot.
\newblock Ruling out {PTAS} for graph min-bisection, dense k-subgraph, and
  bipartite clique.
\newblock {\em SIAM J. on Comput.}, 36(4):1025--1071, 2006.

\bibitem[Kho09]{Khot-personal}
Subhash Khot.
\newblock personal communication, 2009.

\bibitem[KMV08]{KMV08}
Adam~Tauman Kalai, Yishay Mansour, and Elad Verbin.
\newblock On agnostic boosting and parity learning.
\newblock In {\em Proc.\ 40th Annual ACM Symposium on the Theory of Computing},
  pages 629--638. ACM, 2008.

\bibitem[KS06]{KS06}
Adam~R Klivans and Rocco~A Servedio.
\newblock Toward attribute efficient learning of decision lists and parities.
\newblock {\em J. Mach. Learn. Res.}, 7:587--602, 2006.

\bibitem[KS15]{KS15}
Subhash Khot and Igor Shinkar.
\newblock On hardness of approximating the parameterized clique problem.
\newblock {\em Electronic Colloquium on Computational Complexity {(ECCC)}},
  22:13, 2015.
\newblock http://eccc.hpi-web.de/report/2015/013.

\bibitem[LW05]{LW}
Michael Luby and Avi Wigderson.
\newblock Pairwise independence and derandomization.
\newblock {\em Foundation and Trends in Theoretical Computer Science},
  1(4):237--301, 2005.

\bibitem[Mic14]{Mic14}
Daniele Micciancio.
\newblock Locally dense codes.
\newblock In {\em Proc.\ 29th Annual IEEE Conference on Computational
  Complexity}, pages 90--97. IEEE, 2014.

\bibitem[MOS04]{MOS03}
Elchanan Mossel, Ryan O'Donnell, and Rocco~A. Servedio.
\newblock Learning functions of k relevant variables.
\newblock {\em J. Comp. Sys. Sci.}, 69(3):421--434, November 2004.

\bibitem[Val12]{Val12}
Gregory Valiant.
\newblock Finding correlations in subquadratic time, with applications to
  learning parities and juntas.
\newblock In {\em Proc.\ 53rd Annual IEEE Symposium on Foundations of Computer
  Science}, pages 11--20. IEEE, 2012.

\bibitem[Var97]{Var97}
Alexander Vardy.
\newblock The intractability of computing the minimum distance of a code.
\newblock {\em IEEE Trans. Inform. Theory}, 43(6):1757--1766, 1997.

\bibitem[Vio09]{Viola09}
Emanuele Viola.
\newblock The sum of \emph{D} small-bias generators fools polynomials of degree
  \emph{D}.
\newblock {\em Computational Complexity}, 18(2):209--217, 2009.

\end{thebibliography}

\appendix

\section{A simple $O(n\cdot 2^m)$-time algorithm for \kVS}\label{sec-kparityalgo}
Let $({\bf M}, {\bf b})$ be an instance of \kVS where ${\bf M} 
\in \Ft^{m\times n}$ and ${\bf b} \in \Ft^{m}$. Construct a graph $G$ on
vertex set $V = \Ft^m$ and edge set given by,
$$E = \left\{\{{\bf u}, {\bf v}\} \in {V \choose 2} \mid {\bf u} + {\bf v}
\tn{ is a column of } {\bf M}\right\}.$$
We say that an edge $\{{\bf u}, {\bf v}\} \in E$ is labeled by the
column ${\bf u} + {\bf v}$ of ${\bf M}$.
Clearly, if there is a vector ${\bf x}$ of Hamming weight at most $k$ such
that ${\bf Mx} = {\bf b}$ then there is a path of length at most $k$
in $G$ from ${\bf 0}$ to ${\bf b}$ given by choosing the edges labeled
by the columns corresponding to the non-zero entries of ${\bf x}$ in
any sequence. On the other hand, if there is a path in $G$ from ${\bf
0}$ to ${\bf b}$ of length at most $k$, then there is a sequence of at
most $k$ columns (with possible repetitions) of ${\bf M}$ which sum
up to ${\bf b}$. Cancelling out even number of repetitions of any
column yields a subset of at most $k$ distinct columns of ${\bf M}$
that sum up to ${\bf b}$. Thus, deciding \kVS reduces to determining
whether there is a path of length at most $k$ from ${\bf 0}$ to ${\bf
b}$.  

The size of $V$ is $2^r$ and of $E$ is at most $n\cdot2^m$, and the graph
can be constructed in time $O(n\cdot 2^m)$. 
Doing a Breadth First Search yields a running time of 
$O(n\cdot 2^m)$.

\section{Optimal hardness of learning parities using degree $d$
polynomials} \label{sec-Khot-personal}

The hardness reduction in this section is due to Khot~\cite{Khot-personal}.

The starting point of the reduction is the {\sc MinimumDistance} problem over $\F_2$: given a matrix ${\bf A} \in \F_2^{m\times n}$, find
a nonzero vector ${\bf z} \in \F_2^n$ to minimize $\overline{\tn{wt}}(\textbf{Az})$
where $\overline{\tn{wt}}({\bf a})$ is the normalized hamming weight of
${\bf a}$. The latter quantity denotes the distance of the code given
by the linear span of the columns of ${\bf A}$.

Below we restate the hardness of {\sc MinimumDistance} as proved by Austrin and
Khot~\cite{AK14} along with an additional guarantee satisfied by their
reduction in the YES case which shall prove useful for the subsequent
application.

\begin{theorem}\tn{[\cite{AK14}]} There is a universal constant $\zeta \in (0, 1/5)$
such that given a matrix ${\bf A} \in \F_2^{m\times n}$, it is NP-hard to
distinguish between the following:

\smallskip
\noindent
\tn{YES Case.} There is a vector ${\bf z} = (z_1, \dots, z_n) \in
\F_2^n$ such that $z_1 = 1$ and $\overline{\tn{wt}}({\bf Az}) \leq \zeta$.

\smallskip
\noindent
\tn{NO Case.} For any nonzero vector ${\bf z} \in \F_2^n$,
$\overline{\tn{wt}}({\bf Az}) \geq 5\zeta$. 

\end{theorem}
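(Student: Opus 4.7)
The plan is to obtain the anchored YES-case guarantee ($z_1 = 1$) on top of the raw \textsc{MinimumDistance} hardness proved by Austrin and Khot~\cite{AK14}. Their theorem gives, for an absolute constant $\zeta \in (0,1/5)$, a polynomial-time reduction from $3$-\textsc{SAT} to matrices ${\bf A} \in \F_2^{m\times n}$ such that in the YES case some nonzero ${\bf z}$ achieves $\overline{\tn{wt}}({\bf Az}) \leq \zeta$, while in the NO case every nonzero ${\bf z}$ has $\overline{\tn{wt}}({\bf Az}) \geq 5\zeta$. All that remains is to ensure that the YES-case witness can additionally be taken to have a designated coordinate equal to $1$.

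The cleanest way to achieve this is a column-permutation enumeration, giving a polynomial-time Turing reduction from the unanchored to the anchored problem. Given an AK14 instance ${\bf A}$, I would form $n$ derived instances ${\bf A}^{(1)}, \dots, {\bf A}^{(n)}$, where ${\bf A}^{(i)}$ swaps the first and $i$-th columns of ${\bf A}$. Since column permutations preserve the multiset of codeword weights, each ${\bf A}^{(i)}$ is YES/NO exactly when ${\bf A}$ is. In the YES case any nonzero witness ${\bf z}$ must have some coordinate $z_{i^\star} = 1$, and the corresponding permuted vector for ${\bf A}^{(i^\star)}$ has first coordinate $1$ and the same weight $\leq \zeta$. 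In the NO case every nonzero vector for every ${\bf A}^{(i)}$ still has weight $\geq 5\zeta$, and in particular every vector with first coordinate $1$ is automatically nonzero and hence heavy. A hypothetical polynomial-time distinguisher for the anchored problem, applied to all $n$ permuted instances and accepting the disjunction, would therefore decide the AK14 problem in polynomial time, contradicting $\Poly \neq \NP$.

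I expect the main subtlety to be more bookkeeping than substance: this argument furnishes \NP-hardness under polynomial-time Turing reductions, which suffices for ruling out polynomial-time algorithms unless $\Poly = \NP$ and is all that the subsequent learning-parities reduction in this appendix actually invokes. If a Karp (many-one) reduction were desired, one could alternatively inspect AK14's internal construction and identify a specific ``proof variable'' in their planted solution that is always assigned $1$ in the YES case (this is typical of CSP-to-code reductions through tensor product codes), or pad ${\bf A}$ by a constant-size gadget — say a bounded number of extra copies of the first column together with suitable consistency equations — that penalises codewords with $z_1 = 0$ enough to push their weight above $5\zeta$. Because the AK14 hardness holds flexibly within a range of constants, such small parameter perturbations are absorbed without disturbing the constant ratio between the YES and NO weights.
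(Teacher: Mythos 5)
The paper gives no proof here: the theorem is asserted as a restatement of Austrin--Khot~\cite{AK14}, together with the observation that their reduction already produces YES witnesses whose first coordinate is $1$ (an ``anchor'' variable in their tensor-code construction). Your primary argument --- enumerating column transpositions and running a purported distinguisher on all $n$ permuted instances --- is a genuinely different, self-contained derivation from the bare unanchored theorem, and the case analysis you give (weight preserved under column permutation, anchored witness exists for at least one $\mathbf{A}^{(i)}$ in the YES case, all nonzero vectors remain heavy in the NO case) is correct. The trade-off, which you flag yourself, is that this yields only hardness under polynomial-time Turing reductions rather than the many-one hardness that ``it is NP-hard to distinguish'' conventionally denotes; since the subsequent Theorems in this appendix are themselves instance-to-instance transformations, chaining preserves the Turing form, so the end conclusion ``no poly-time algorithm unless $\Poly = \NP$'' survives, but the theorem statements would technically need rephrasing. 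Your first fallback --- identifying the anchor coordinate inside AK14's planted codeword --- is exactly what the paper relies on, and is the cleaner route if one is willing to open the black box. Your second fallback (padding with extra copies of the first column plus ``consistency equations'' to penalise $z_1 = 0$) is the one I would not lean on: the codeword $\mathbf{Az}$ can only see affine-linear functions of $\mathbf{z}$, so there is no row you can add whose contribution is heavy precisely when $z_1 = 0$ and $\mathbf{z}\neq\mathbf{0}$; this would require a non-linear indicator, and the sketch as written does not explain how the gadget circumvents that.
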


The following is an easy consequence of above obtained by tensoring
the instance of the above theorem.

\begin{theorem}\label{thm-tensored-MDC} 
There is a universal constant $\zeta \in (0, 1/5)$
such that for any constant positive integer $K$, 
given a matrix ${\bf A} \in \F_2^{m\times n}$, it is NP-hard to
distinguish between the following:

\smallskip
\noindent
\tn{YES Case.} There is a vector ${\bf z} = (z_1, \dots, z_n) \in
\F_2^n$ such that $z_1 = 1$ and $\overline{\tn{wt}}({\bf Az}) \leq \zeta^K$.

\smallskip
\noindent
\tn{NO Case.} For any nonzero vector ${\bf z} \in \F_2^n$,
$\overline{\tn{wt}}({\bf Az}) \geq 5^K\zeta^K$. 

\end{theorem}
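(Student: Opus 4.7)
The plan is to boost the gap from $(\zeta, 5\zeta)$ to $(\zeta^K, 5^K\zeta^K)$ via a standard $K$-fold tensor-product construction, exploiting the multiplicativity of minimum distance under tensoring. Given an instance ${\bf A} \in \F_2^{m \times n}$ of the previous theorem, I would output the instance ${\bf A}' := {\bf A}^{\otimes K} \in \F_2^{m^K \times n^K}$. Since $K$ is a fixed constant, this blowup is polynomial, so any reduction preserves polynomial time. The columns of ${\bf A}'$ are naturally indexed by tuples in $[n]^K$, and I would place the tuple $(1,1,\ldots,1)$ in the first position, so that the literal condition ``$z_1 = 1$'' in the conclusion corresponds to this distinguished coordinate.

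For the YES case, starting from ${\bf z} = (z_1,\dots,z_n)$ with $z_1 = 1$ and $\overline{\tn{wt}}({\bf A}{\bf z}) \le \zeta$, I would exhibit the witness ${\bf z}' := {\bf z}^{\otimes K} \in \F_2^{n^K}$. Using the basic mixed-product identity for tensors, $({\bf A}^{\otimes K})({\bf z}^{\otimes K}) = ({\bf A}{\bf z})^{\otimes K}$, whose Hamming weight is exactly $\tn{wt}({\bf A}{\bf z})^K$; normalizing by the new length $m^K$ gives $\overline{\tn{wt}}({\bf A}'{\bf z}') = \overline{\tn{wt}}({\bf A}{\bf z})^K \le \zeta^K$. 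The $(1,\ldots,1)$-coordinate of ${\bf z}'$ is $z_1^K = 1$, so the syntactic YES-condition is preserved.

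For the NO case, let $\mathcal{C}$ denote the column-span code of ${\bf A}$. The NO hypothesis of the previous theorem says $d(\mathcal{C})/m \ge 5\zeta$ and also implies that ${\bf A}$ has full column rank (otherwise some nonzero ${\bf z}$ would give ${\bf A}{\bf z} = {\bf 0}$, violating the weight bound). The columns of ${\bf A}'$ then span $\mathcal{C}^{\otimes K}$, and the classical identity $d(\mathcal{C}^{\otimes K}) = d(\mathcal{C})^K$ yields $\overline{\tn{wt}}({\bf A}'{\bf z}') \ge (5\zeta)^K = 5^K \zeta^K$ for every nonzero ${\bf z}' \in \F_2^{n^K}$ (injectivity of ${\bf A}^{\otimes K}$, which follows from that of ${\bf A}$, ensures the codeword ${\bf A}'{\bf z}'$ is itself nonzero).

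I do not anticipate any significant obstacle: this is a routine tensoring argument and the statement is deliberately labelled an ``easy consequence'' in the paper. The only detail that needs care is the book-keeping that places the coordinate indexed by $(1,\ldots,1)$ as the first coordinate of the new instance so that the ``$z_1 = 1$'' guarantee transfers verbatim from the base theorem to its tensored version; beyond that, the proof is a one-line invocation of the multiplicativity of tensor-code distance together with the elementary identity for $({\bf A}^{\otimes K})({\bf z}^{\otimes K})$.
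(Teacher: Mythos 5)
Your proof is correct and is exactly the tensoring argument the paper invokes (the paper leaves it as a one-line remark that the theorem is ``an easy consequence ... obtained by tensoring''). The key points you filled in — the mixed-product identity, multiplicativity of normalized Hamming weight for simple tensors, injectivity of ${\bf A}$ (hence ${\bf A}^{\otimes K}$) from the NO hypothesis, and the distance lower bound $d(\mathcal{C}^{\otimes K}) \geq d(\mathcal{C})^K$ — are all what is needed and are handled correctly.
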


From the above we obtain -- using techniques similar to those
used in \cite{Khot06} --  the following hardness of distinguishing
between an {\sc MinimumDistance} instance of distance $\eps$ vs. $1/2 - \eps$. 

\begin{theorem} \label{thm-MDC-main}
For any positive constant $\eps > 0$,
given a matrix ${\bf B} \in \F_2^{m\times n}$, it is NP-hard to
distinguish between the following:

\smallskip
\noindent
\tn{YES Case.} There is a vector ${\bf z} = (z_1, \dots, z_n) \in
\F_2^n$ such that $z_1 = 1$ and $\overline{\tn{wt}}({\bf Bz}) \leq \eps$.

\smallskip
\noindent
\tn{NO Case.} For any nonzero vector ${\bf z} \in \F_2^n$,
$1/2 - \eps \leq \overline{\tn{wt}}({\bf Bz}) \leq 1/2 + \eps$. 

\end{theorem}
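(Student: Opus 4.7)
The goal is to amplify the weight gap $(\zeta^K,(5\zeta)^K)$ from Theorem~\ref{thm-tensored-MDC} into the required gap ($\eps$ vs.\ $[1/2-\eps, 1/2+\eps]$), preserving the $z_1 = 1$ constraint on the YES witness. I would start from ${\bf A}$ of Theorem~\ref{thm-tensored-MDC} with $K$ chosen as a large constant (depending on $\eps$) so that $5^K \ge \ln(1/\eps)/\eps$, and then post-multiply on the left by a linear sampling gadget designed to amplify the separation.

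Concretely, let $k$ be an even integer of order $\ln(1/\eps)/(5\zeta)^K$, and let ${\bf T}\in\F_2^{\binom{m}{k}\times m}$ be the matrix whose rows enumerate all $\binom{m}{k}$ weight-$k$ vectors in $\F_2^m$. Since $k$ depends only on $\eps$, this has polynomially many rows and the reduction is polynomial time. Set ${\bf B} := {\bf T}{\bf A}$. A direct hypergeometric calculation shows that for any codeword ${\bf c}={\bf A}{\bf z}$ of normalized weight $w$, the normalized weight of ${\bf B}{\bf z}$ is exactly $p(w) := \tfrac{1}{2}\bigl(1 - (1-2w)^k\bigr)$ up to an $O(k^2/m)$ lower-order term. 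Using Bernoulli's inequality in the YES case yields $p(w) \le kw \le k\zeta^K = \ln(1/\eps)/5^K \le \eps$ for the YES witness ${\bf z}^*$, and $z_1^*=1$ is inherited. In the NO case, $(1-2(5\zeta)^K)^k \le e^{-2\ln(1/\eps)} = \eps^2$, so $p(w) \in [1/2-\eps,\,1/2]$ for every $w\in[(5\zeta)^K,\,1-(5\zeta)^K]$, and the reflection symmetry $p(w)=p(1-w)$ (valid because $k$ is even) extends the bound to weights near $1$.

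\textbf{Main obstacle.} The crucial difficulty is that standard off-the-shelf transformations treat all nonzero codewords uniformly: a generic $\eps$-balanced-code transformation would produce the correct NO-case window but would simultaneously map the nonzero YES codeword ${\bf A}{\bf z}^*$ (of weight $\le\zeta^K$) to weight near $1/2$, destroying the YES case. The weight-$k$ sampling is the nontrivial ingredient because $p(w)$ rises smoothly from $0$ at $w=0$ to $1/2$ in the bulk; the quantitative sweet spot requires $k\zeta^K \ll \eps$ and simultaneously $k(5\zeta)^K \gg \ln(1/\eps)$, which is feasible precisely because Theorem~\ref{thm-tensored-MDC} has already opened a multiplicative gap $5^K$ that can be driven past $\ln(1/\eps)/\eps$. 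A secondary technical point is handling codewords whose weight is very close to $1$, for which $p$ also dips toward $0$; I expect this to be resolved by combining the reflection symmetry above with the observation that WLOG one may preprocess ${\bf A}$ so its code is closed under complementation (e.g.\ by adjoining an all-ones relation), so that no NO-case codeword has weight outside $[(5\zeta)^K,\,1-(5\zeta)^K]$ and the analysis of $p(w)$ applies directly.
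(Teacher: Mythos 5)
Your construction -- enumerating all weight-$k$ parity sums of the rows of $\mathbf{A}$ -- is a genuinely different derandomized-sampling gadget from the one in the paper, which takes $t$-step random walks on an explicit expander over the row indices and, for each walk, adds \emph{all} $2^t$ signed sums $\sum_j s_j \mathbf{A}_{i_j}$ as rows of $\mathbf{B}$. You correctly identify the central obstacle (a generic $\eps$-balanced code transformation would destroy the YES case), and your YES-case bound $p(w)\le kw$ and mid-range NO-case bound via $(1-2w)^k\le e^{-2k w}$ are fine. However, there is a genuine gap in your treatment of codewords of normalized weight close to $1$, and the proposed fix does not work.

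The fix you suggest -- ``preprocess $\mathbf{A}$ so its code is closed under complementation (e.g.\ by adjoining an all-ones relation), so that no NO-case codeword has weight outside $[(5\zeta)^K,\,1-(5\zeta)^K]$'' -- is internally inconsistent. If the column span of $\mathbf{A}$ is closed under adding $\mathbf{1}$, then $\mathbf{1}$ itself is a nonzero codeword $\mathbf{A}\mathbf{z}_0$ for some $\mathbf{z}_0\neq\mathbf{0}$, and it has normalized weight exactly $1$, squarely outside your target window. Worse, with $k$ even, $\mathbf{T}\mathbf{1}=\mathbf{0}$, so $\overline{\mathrm{wt}}(\mathbf{B}\mathbf{z}_0)=0$: you have created a nonzero vector violating the NO case entirely. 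The reflection symmetry $p(w)=p(1-w)$ that you invoke is precisely the problem here, not the solution -- it sends weight-$1$ codewords to weight $0$. More generally, your gadget takes only the single ``all-ones'' combination per subset $S$, so it is not balanced on codewords whose restriction to $S$ is all ones; by contrast, the paper's gadget takes all $2^t$ combinations per walk, and whenever a walk hits an index $i$ with $\langle\mathbf{A}_i,\mathbf{z}\rangle=1$, exactly half of those $2^t$ combinations land on each side, unconditionally of $\overline{\mathrm{wt}}(\mathbf{A}\mathbf{z})$. One way to repair your approach is to likewise include, for each weight-$k$ subset $S$, all $2^k$ signed sums $\sum_{i\in S}s_i\mathbf{A}_i$ rather than only the all-ones sum (still polynomially many rows for constant $k$), at which point the NO-case argument becomes an unconditional balancedness statement and the near-weight-$1$ problem disappears; but as written, your NO case is incomplete and the complementation preprocessing cannot be carried out.
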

\begin{proof}
Let ${\bf A} \in \F_2^{m\times n}$ be an instance of {\sc MinimumDistance} from Theorem
\ref{thm-tensored-MDC} for a large value of $K$. Let ${\bf A}_i,
i=1,\dots, m$ be the rows of ${\bf A}$. Let $G$ be a regular expander on $m$
vertices labeled from $[m]$, with degree $D = 
\left(\frac{4}{5^K\zeta^K}\right)^{10}$ and second largest
eigenvalue $\lambda \leq D^{0.9}$, which can be constructed
efficiently. Note that $\lambda/D \leq (5^K\zeta^K)/4$.

Let $t = 1/(2^K\zeta^K)$ and consider all $t$ length random walks
$[i_1, i_2, \dots, i_t]$ in $G$. There are $m\cdot D^t$ of such walks.
For each such walk we add $2^t$ rows in ${\bf B}$ given by
$$\left\{\sum_{j=1}^t s_jA_{i_j}\ \mid\ s_1,\dots, s_t \in \F_2\right\}.$$
In total, there are $m' = m\cdot (2D)^t$ rows in ${\bf B}$ and $n$
columns. 

\medskip
\noindent
{\bf YES Case}

\smallskip
\noindent
Let ${\bf z}$ be the vector given in the YES case of Theorem
\ref{thm-tensored-MDC}, such that there are 
at most $\zeta^K$ fraction of rows ${\bf A}_i$ satisfying $\langle {\bf
A}_i, {\bf z}\rangle = 1$. For a random walk in $G$, the probability
that it contains an index corresponding to any such row is at most
$t\zeta \leq 1/2^K$. Thus, $\overline{\tn{wt}}({\bf Bz}) \leq 1/2^K$.

\medskip
\noindent
{\bf NO Case}

\smallskip
\noindent
Consider any ${\bf z} \in \F_2^n$. From the NO case of Theorem
\ref{thm-tensored-MDC}, we have that at least $5^k\zeta^K$
fraction of the rows ${\bf A}_i$ satisfy $\langle {\bf
A}_i, {\bf z}\rangle = 1$. Let $I$ be the set of the indices
corresponding to these rows. Using the
analysis in Section 4.7 of \cite{LW} we can bound the probability that
a $t$ length random walk $[i_1,\dots, i_t]$ 
in $G$ does not contain an index from $I$ as
follows.
\begin{eqnarray}
\Pr\left[i_1 \not\in I, i_2 \not\in I, \dots, i_t \not\in I\right] &
\leq & \left(\sqrt{1 - 5^K\zeta^K} + \frac{\lambda}{D}\right)^t
\nonumber \\
& \leq & \left(\sqrt{1 - 5^K\zeta^K} + \frac{5^K\zeta^K}{4}\right)^t
\nonumber \\
& \leq & \left(1 - \frac{5^K\zeta^K}{4}\right)^{1/(\zeta^K2^K)}
\nonumber \\
& \leq & e^{-(5/2)^K/4} \leq 2^{-2^K}. \nonumber 
\end{eqnarray}
Thus, at least $1 - 2^{-2^K}$ fraction of the random walks contain an
index from $I$. For any such walk, exactly half of the $2^t$ linear
combinations of the corresponding rows result in a row ${\bf B}_r$ of
${\bf B}$ such that $\langle {\bf B}_r, {\bf z}\rangle = 0$. Thus,
$(1/2)(1 - 2^{-2^K}) \leq \overline{\tn{wt}}({\bf Bz}) \leq (1/2)(1 + 2^{-2^K})$. 
Choosing $K$ to be large
enough completes the proof.

\end{proof}

\subsection{Main Theorem}
The following theorem shows optimal hardness of learning linear forms by
degree $d$ polynomials and subsumes the weaker results in \cite{GKS10}.

\begin{theorem} \label{thm-learning-main} For any constants $\delta > 0$
and positive integer $d$, given a set of point-value
pairs $\{({\bf x}_i, y_i)\}_{i=1}^M \subseteq \F_2^M\times \F_2$, it is
NP-hard to distinguish whether

\smallskip
\noindent
\tn{YES Case.} There is a linear form $\ell^* : \F_2^N \mapsto \F_2$
such that for at least $(1 - \delta)$ fraction of the pairs $({\bf x},
y)$, $\ell^*({\bf x}) = y$.

\smallskip
\noindent
\tn{NO Case.} Any degree $d$ polynomial $p : \F_2^N \mapsto \F_2$
satisfies $p({\bf x}) = y$ for at most $(1/2 + \delta)$ fraction of the
point-value pairs.
\end{theorem}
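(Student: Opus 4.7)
The plan is to reduce from Theorem \ref{thm-MDC-main} and use Viola's pseudorandom generator (Theorem \ref{thm:viola}) to amplify fooling of linear forms into fooling of degree-$d$ polynomials. Given an instance ${\bf B} \in \F_2^{m \times n}$ of the promise problem in Theorem \ref{thm-MDC-main} with a parameter $\eps > 0$ to be chosen, I form all $d$-wise sums of rows: for each tuple $(i_1,\ldots,i_d) \in [m]^d$, let ${\bf r}_{(i_1,\ldots,i_d)} := {\bf B}_{i_1} + \cdots + {\bf B}_{i_d}$ and associate with it the pair ${\bf x} := (r_2, \ldots, r_n) \in \F_2^{n-1}$ and $y := r_1 \in \F_2$. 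This yields $M = m^d$ pairs (polynomial in the input, since $d$ is constant) in dimension $N = n-1$. The intended linear form for the YES case is $\ell^*(x_2,\ldots,x_n) := \sum_{k \geq 2} z_k x_k$, where ${\bf z} = (1, z_2, \ldots, z_n)$ is the special vector from the YES case of Theorem \ref{thm-MDC-main}.

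For the YES case, a $(1-\eps)$-fraction of the rows of ${\bf B}$ satisfy $\langle {\bf B}_i, {\bf z}\rangle = 0$, so by a union bound a $(1 - d\eps)$-fraction of tuples satisfy $\langle {\bf r}_{(i_1,\ldots,i_d)}, {\bf z}\rangle = 0$; using $z_1 = 1$ this is precisely the condition $\ell^*({\bf x}) = y$. For the NO case, I first observe that the uniform distribution on the rows of ${\bf B}$ is a $2\eps$-fooling source for linear polynomials: for any nonzero ${\bf w} \in \F_2^n$ and any $c \in \F_2$,
\[
\Bigl|\E_i\bigl[e(\langle {\bf w}, {\bf B}_i\rangle + c)\bigr]\Bigr|
= \bigl|1 - 2\,\overline{\tn{wt}}({\bf B}{\bf w})\bigr| \leq 2\eps
\]
by the NO case of Theorem \ref{thm-MDC-main}, while under the uniform distribution this bias is $0$ (and for ${\bf w} = {\bf 0}$ both biases coincide). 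Applying Theorem \ref{thm:viola} to $d$ independent copies of this distribution, the distribution of ${\bf r}$ indexed by random tuples $\eps_d$-fools degree-$d$ polynomials, where $\eps_d := 16 \cdot (2\eps)^{1/2^{d-1}}$.

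To conclude the NO case, consider any degree-$d$ polynomial $p : \F_2^{n-1} \to \F_2$ and form $q({\bf r}) := p(r_2,\ldots,r_n) + r_1$, which has degree at most $d$. Then $p({\bf x}) = y$ iff $q({\bf r}) = 0$, and since $r_1$ is independent of the other coordinates under the uniform distribution, $\Pr_{\bf U}[q({\bf U}) = 0] = 1/2$. The fooling property combined with the identity $\E[e(q)] = 2\Pr[q = 0] - 1$ gives $\Pr[p({\bf x}) = y] \in [1/2 - \eps_d/2,\, 1/2 + \eps_d/2]$. Choosing $\eps$ as a sufficiently small constant (depending on $d$ and $\delta$) so that both $d\eps \leq \delta$ and $\eps_d \leq 2\delta$ then finishes the proof. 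The main delicate step is the NO-case fooling argument: one must verify that the $2\eps$-fooling for linear polynomials holds uniformly over all nonzero ${\bf w}$ and all constants $c$, and that Viola's theorem can be invoked with parameters matching $\delta$. For fixed constant $d$ this is routine, but the doubly exponential dependence $\eps \sim (\delta/32)^{2^{d-1}}$ is the price paid and precludes pushing $d$ to anything beyond a constant.
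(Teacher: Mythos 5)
Your proof is correct and follows essentially the same route as the paper: reduce from Theorem \ref{thm-MDC-main}, form $d$-wise sums of rows of ${\bf B}$ as point-value pairs, use Viola's Theorem \ref{thm:viola} to convert $\eps$-fooling of linear forms into $\eps_d$-fooling of degree-$d$ polynomials, and observe that the auxiliary polynomial $q({\bf r}) = r_1 + p(r_2,\dots,r_n)$ has zero bias under uniform. The only differences are cosmetic (you track the factor-of-$2$ in the linear-fooling bound more explicitly and state a slightly tighter NO-case estimate of $1/2 - \eps_d/2$ instead of $1/2 - \eps_d$), which do not change the argument.
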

\begin{proof}
Let ${\bf B} \in \F_2^{m\times n}$ be the matrix given by Theorem
\ref{thm-MDC-main} for a parameter $\eps > 0$ to be chosen small
enough. For
each combination of $d$ rows of ${\bf B}$ we add a point-value pair in our
instance as follows. For ${\bf r} = (r_1, \dots, r_n)$ obtained by
summing some $d$ rows of ${\bf B}$ we add $((r_2,\dots, r_n), r_1)$ to
our instance. Thus, $M = m^d$ and $N = n-1$.

\medskip
\noindent
{\bf YES Case}

\smallskip
\noindent
Let ${\bf z} = (z_1, \dots, z_n)$ with $z_1 = 1$ be the vector 
from the YES case of Theorem
\ref{thm-MDC-main} satisfying $\langle {\bf z}, {\bf b}\rangle = 0$
for at least $(1 - \eps)$ fraction of the rows ${\bf b}$ of ${\bf B}$.
Defining $\ell^*({\bf x}) = \sum_{j=1}^{n-1} z_{j+1}x_j$, we obtain
that for $(1 - \eps)$ fraction of the rows $(b_1, \dots, b_n)$ of $B$,
$\ell^*(b_2,\dots, b_n) = b_1$. Thus, for at least $(1 - \eps d)$
fraction of the point-value pairs $({\bf x}, y)$ of the instance
constructed above, $\ell^*({\bf x}) = y$.

\medskip
\noindent
{\bf NO Case}

\smallskip
\noindent
In the NO case, the uniformly random distribution over the 
rows of ${\bf B}$ fools every linear form $\ell : \F_2^n \mapsto \F_2$
with bias $\eps$. By the result of Viola~\cite{Viola09} 
(Theorem \ref{thm:viola}),
the distribution given by uniformly random $d$-wise sums of of the
rows of ${\bf B}$ fools all degree $d$ polynomials $q : \F_2^n \mapsto
\F_2$ with bias $\eps_d = 16\cdot \eps^{1/2^{d - 1}}$. Let $(r_1,
\dots, r_n)$ be a random element from the latter distribution, and let
$q(r_1, \dots, r_n) = r_1 + p(r_2, \dots, r_n)$ where $p$ is a degree $d$
polynomial. Since $q$ is linear in the first coordinate, the bias of
$q$ under the uniform distribution is $0$, and
$p(r_2,\dots, r_n) \neq r_1$ for at least $1/2 - \eps_d$ fraction of
$(r_1, \dots, r_n)$.

Thus, for any degree $d$ polynomial $p$, for at least $1/2 - \eps_d$
fraction of the point-value pairs $({\bf x}, y)$ constructed in our
instance $p({\bf x}) \neq y$.

To complete the reduction we choose $\eps$ to be small enough so that
$\max\{\eps d, 16\cdot \eps^{1/2^{d - 1}}\} \leq \delta$.
 
\end{proof}

\end{document}